%% file: main.tex
\documentclass[12pt,letterpaper]{article}
\usepackage[margin=1in]{geometry}
\usepackage{setspace}
\PassOptionsToPackage{hyphens}{url}
\usepackage{hyperref}
\hypersetup{colorlinks=true,
linkcolor=red,
urlcolor=blue,
citecolor=blue}
\usepackage{amsmath,amssymb,amsthm,bm}
\newcommand{\1}{\mathbf 1}
\newcommand{\E}{\mathbb E}
\newcommand{\Var}{\mathrm{Var}}
\newcommand{\Cov}{\mathrm{Cov}}
\newcommand{\R}{\mathbb R}
\newcommand{\tr}{\mathrm{tr}}

\newcommand{\vech}{\mathrm{vech}}
\newcommand{\diag}{\mathrm{diag}}
\newcommand{\ones}[1][]{\mathbf{1}_{#1}}
\newcommand{\pto}{\overset{p}\longrightarrow}
\newcommand{\vecop}{\operatorname{vec}}

\newcommand{\plim}{\operatorname*{plim}}

\newtheorem{assumption}{Assumption}
\newtheorem{theorem}{Theorem}
\newtheorem{proposition}{Proposition}
\newtheorem{lemma}{Lemma}
\newtheorem{remark}{Remark}
\usepackage{rotating}
\usepackage{csvsimple}
\usepackage{booktabs}
\usepackage{siunitx}
\usepackage{algorithm}
\usepackage{algpseudocode}

\usepackage[
hyperref=true,
backend=biber,
style=authoryear-comp,
sorting=nyt,
backref=false,
uniquename=false,
url=false,
eprint=false,
]{biblatex}
\title{Fixed-$T$ Dynamic Spatial Panel Model\\with Common Shocks\thanks{%
We thank
Badi Baltagi,
Paul Elhorst,
Xiaoyu Meng, and
Zhenlin Yang
for helpful
comments and suggestions.
We also thank participants at the 2026 RCEA International Conference in Economics, Econometrics, and Finance, Madrid;
the 20th World Conference of the Spatial Econometrics Association and 24th International Workshop on Spatial Econometrics and Statistics, Paris;
and the IAAE 2026 Annual Conference, Lisbon.
All errors are our own.
\par
The accompanying R package \texttt{dspserl} for numerical work in this paper is available at
\url{https://github.com/jessekelighine/dspserl}.
}}
\author{
Jushan Bai\thanks{Department of Economics, Columbia University. Email: \texttt{jushan.bai@columbia.edu}}
\and
Jesse Chieh Chen\thanks{Department of Economics, Columbia University. Email: \texttt{cc5229@columbia.edu}}
}
\date{August 2026}

\begin{document}

\begin{titlepage}

\maketitle

\begin{abstract}
	We study a dynamic spatial panel model with observed regressors, interactive effects,
	and contemporaneous and lagged dependence in a large-$N$, fixed-$T$ framework.
	The spatial model constitutes an $N$-dimensional simultaneous-equations system.
	In this $N$-equation view, the interactive effects introduce $N$ unit-specific loading vectors.
	Estimating them individually when $T$ is fixed
	creates the type of incidental-parameters problem underlying Nickell bias.
	We use the $N$-equation view to account for spatial simultaneity through the spatial Jacobian.
	Crucially, however, we view the same model as a $T$-equation system with $N$ observations.
	Together with a spatially enriched random-loadings (SERL) specification,
	this $T$-equation view yields a quasi-likelihood without unit-specific incidental parameters.
	We propose a computationally tractable block-coordinate algorithm.
	Simulations show small estimation errors and generally near-nominal coverage probabilities.
	Applying the method to U.S. county female labor-force participation,
	we find substantial dynamic and spatial dependence
	and an important role for education in explaining the rise in female labor-force participation.
\end{abstract}

\noindent Keywords: interactive effects, incidental parameters, Nickell bias

\thispagestyle{empty}

\end{titlepage}

\setcounter{page}{1}

\section{Introduction}

Dynamic panel data often exhibit two distinct forms of dependence.
One is dependence over time, arising through lagged responses and state persistence.
The other is dependence across units, arising through economic, geographic, or network interactions.
In many applications, these two forms of dependence coexist with common shocks that affect all units but with heterogeneous intensities.
A useful empirical framework should therefore accommodate simultaneously dynamic persistence,
spatial interaction, and interactive effects.
This paper studies such a framework in a large-$N$, fixed-$T$ setting.

We consider a dynamic spatial panel model with a contemporaneous spatial lag,
a lagged dependent variable, observed regressors, and an interactive factor structure.
The specification brings together two familiar sources of cross-sectional dependence.
Weak cross-sectional dependence is captured by the spatial autoregressive structure,
while strong cross-sectional dependence is captured by latent common shocks with heterogeneous loadings.
The parameters of primary interest are the spatial coefficient $\rho$, the dynamic coefficient $\phi$,
and the slope coefficients $\beta$.
The econometric challenge is to conduct likelihood-based inference when the time dimension is fixed,
the cross section is large,
and the model contains both simultaneity across units and dynamic feedback over time.
The framework also allows the observed regressors to be correlated with unit heterogeneity and,
more generally, with the latent factor structure through the factor loadings and common shocks.

The model lies at the intersection of several established literatures.
It builds on the foundational spatial-econometrics framework of \textcite{anselin-1988}
and the spatial and dynamic spatial panel models studied by \textcite{lee-2004}, \textcite{yu-2008}, and \textcite{lee-yu-2010},
among others; see also \textcite{baltagi-2021} and \textcite{elhorst-2014}.
Its common-shock component is related to the interactive
effects literature following \textcite{bai-2009}. Within the likelihood
literature, the model is especially close in spirit to \textcite{shi-lee-2017}
and \textcite{bai-li-2021}, who also combine spatial dependence, dynamics, and
latent common components. Those papers adopt large-$N$, large-$T$ asymptotics
and estimate a growing collection of incidental parameters; the resulting QML
estimators require bias correction. The present paper instead focuses on large
$N$ and fixed $T$.

Fixed $T$ makes the treatment of individual heterogeneity central.
In the present model,
the factor loadings play a role analogous to that of unit fixed effects in conventional dynamic panel models.
Eliminating unit effects by a within transformation in the latter setting
produces the fixed-$T$ bias highlighted by \textcite{nickell-1981}.
The analogous challenge here is to avoid estimating the factor loadings unit by unit.

Our central insight is to exploit two complementary representations of the same
model. Viewed cross-sectionally, the spatial model constitutes an
$N$-dimensional simultaneous-equations system. The contemporaneous spatial
transformation is governed by $B(\rho)=I_N-\rho W$, so the conditional
likelihood retains the familiar spatial Jacobian $|B(\rho)|^T$. Treating the
interactive effects directly in this $N$-equation representation, however,
would require estimating $N$ unit-specific loading vectors and would therefore
reintroduce a fixed-$T$ incidental-parameters problem.
Crucially, following the perspective of \textcite{bai-2013,bai-2024},
we also view the same model as a $T$-equation system with $N$ observations,
where each unit contributes a $T$-dimensional time series. Conditional on the initial
observation, the temporal transformation is represented by the lower-triangular
operator $R_T(\phi)$ with unit diagonal and therefore has determinant one; it
contributes no additional parameter-dependent Jacobian term.

The change in representation does not by itself eliminate the loading heterogeneity.
To avoid estimating the loading vectors unit by unit,
we introduce a spatially enriched random-loadings (SERL) specification.
SERL projects the factor loadings on observed controls,
the initial condition,
and their spatial transformations,
thereby separating spatially propagated systematic heterogeneity from a conditionally idiosyncratic residual component.
Combining SERL with the $T$-equation representation replaces the individual loading vectors with a finite-dimensional conditional mean and a low-dimensional covariance matrix.
The resulting quasi-likelihood retains the spatial Jacobian while avoiding unit-specific incidental parameters.

Two closest fixed-$T$ contributions provide useful contrasts.
\textcite{li-yang-2021} study a dynamic spatial panel with correlated random
effects.
Their scalar, time-invariant correlated-random-effects equation is a special case of the random-loading projection underlying SERL.
Their Mundlak specification accommodates dependence between the
regressors and individual heterogeneity, but conditioning on an initial outcome
generated by the dynamic process generally leaves the conditional score
uncentered. They therefore adjust the score using information about when the
process began. They also show that spatial dependence invalidates covariance
estimation based only on raw individual-score outer products and develop a
spatial decomposition of the pooled score. \textcite{li-miao-yang-2026}, by
contrast, extend the fixed-$T$ analysis to interactive fixed effects, treating
the individual loadings as parameters and using adjusted M-estimation together
with a degrees-of-freedom correction for inference. Our approach draws on the
score decomposition of \textcite{li-yang-2021} for robust covariance estimation
but differs from both methods in its treatment of individual heterogeneity and
the initial condition. We project the factor loadings on a finite-dimensional
collection of observed controls, including spatial transformations of the
initial condition and regressor summaries.

Our approach therefore does not estimate the unit-specific fixed effects or,
more generally, the individual factor loadings.  We develop a conditional
Gaussian quasi-maximum likelihood estimator under large-$N$, fixed-$T$
asymptotics. The estimator combines the spatial Jacobian with the
low-rank-plus-diagonal covariance structure induced by the interactive effects.
The resulting criterion admits a computationally tractable block-coordinate
implementation. Conditional on the spatial parameter, the dynamic and slope
coefficients are updated by GLS, while the factor and covariance components are
updated using the low-rank structure. The spatial parameter is then updated
through a one-dimensional conditional maximization. This profiling strategy
remains computationally feasible even when the cross section is large and
extends naturally to the additional specifications considered below. Adding
$Wy_{t-1}$ only enlarges the GLS design, whereas adding a spatial-error filter
replaces the scalar outer search with a two-dimensional optimization over the
roots of a second-order spatial polynomial.
We provide an implementation of the estimation and inference procedures
in the \href{https://github.com/jessekelighine/dspserl}{\texttt{dspserl}} R package.

The simulations examine both estimation and inference. In the baseline designs,
estimation error is small and coverage is close to nominal for the regression
slopes, although first-order Wald intervals for the spatial and dynamic
parameters under-cover in some short, persistent panels. A second experiment
generates the initial outcome from a long-running process and shows that the
first spatial lag in the loading projection produces most of the reduction in
bias and improvement in coverage.
In the empirical application to county-level female labor-force participation data,
the preferred second-order specification leaves the education coefficient positive and
economically important. This result is close to the original OLS evidence of
\textcite{fogli-veldkamp-2011} and provides an alternative interpretation to the
decomposition in \textcite{tziolas-elhorst-2023}.

The remainder of the paper is organized as follows.
Section~\ref{sec:model} presents the model, introduces the two Jacobian arguments, and develops the conditional likelihood.
Section~\ref{sec:asymptotics} establishes the large-$N$, fixed-$T$ inferential theory.
Section~\ref{sec:estimation_algorithm} describes the block coordinate estimation algorithm and the associated computational details.
Section~\ref{sec:extensions} develops extensions with a lagged spatial outcome
and spatially correlated errors.
Section~\ref{sec:simulation} reports Monte Carlo evidence,
Section~\ref{sec:application} applies the method to county-level female labor
force participation, and
Section~\ref{sec:conclusion} concludes.

\section{Dynamic Spatial Panel Model with Common Shocks}\label{sec:model}

This section develops the model and likelihood under fixed $T$.
The key features are spatial dependence, dynamic feedback, and interactive effects.

\subsection{Model}

Let $i=1,...,N$ index units and $t=1,...,T$ index time.
For any positive integer $m$,
let $\ones[m]$ denote the $m$-dimensional vector of ones.
Let $W$ be a known $N\times N$ spatial weights matrix.
Typically, the diagonal of $W$ is assumed to be zero, i.e., $w_{ii}=0$
\parencite{anselin-1988,elhorst-2014}.
Consider the dynamic spatial model with interactive effects
\parencite{shi-lee-2017,bai-li-2021}:
\begin{equation}\label{eq:ysystem}
	y_t=\rho W y_t+\phi y_{t-1}+X_t\beta + \ones[N] \delta_t+\Lambda f_t+\varepsilon_t,\qquad t=1,...,T,
\end{equation}
where
\begin{equation*}
	y_t \in \R^N, \quad
	f_t \in \R^r, \quad
	\Lambda=(\lambda_1,...,\lambda_N)'\in\R^{N\times r}, \quad
	\varepsilon_t \in \R^N.
\end{equation*}
The time-varying components of $f_t$ represent latent aggregate shocks or
common conditions that affect many or all cross-sectional units. Examples
include macroeconomic fluctuations, policy and regulatory changes,
technological innovations, financial conditions, commodity-price movements,
public-health shocks, and other economy-wide demand or supply disturbances.
The heterogeneous loadings $\lambda_i$ allow the magnitude and direction of
the response to these common shocks to differ across units.
Time-invariant individual heterogeneity can be incorporated by including a constant factor.

Define $B(\rho):=I_N-\rho W$.
Equation~\eqref{eq:ysystem} can be equivalently written as
\begin{equation}\label{eq:Bsystem}
	B(\rho)y_t=\phi y_{t-1}+X_t\beta+\ones[N] \delta_t+\Lambda f_t+\varepsilon_t.
\end{equation}
We observe the initial cross section $y_0=(y_{10},...,y_{N0})'$ and condition on it throughout.
Let $x_{it}\in\R^K$ and stack $X_t$ as the $N\times K$ matrix with $i$th row $x_{it}'$.
Define the observed design or common conditioning $\sigma$-field as
\begin{equation*}
	\mathcal C_N
	:=
	\sigma(y_0,X_1,...,X_T,W).
\end{equation*}

\subsection{Transformation and Jacobian}

Recall that $y_t\in\R^N$ denotes the cross section observed at time $t$.
In this subsection we also introduce unit-specific time paths such as $y_i\in\R^T$,
so the time index $t$ and the unit index $i$ refer to objects of different dimensions.

Define the transformed innovations
\begin{equation}\label{eq:utdef}
	u_t(\rho,\phi,\beta):=B(\rho)y_t-\phi y_{t-1}-X_t\beta\in\R^N.
\end{equation}
Then \eqref{eq:Bsystem} implies
\begin{equation}\label{eq:uteq}
	u_t=\ones[N] \delta_t + \Lambda f_t+\varepsilon_t,\qquad t=1,...,T.
\end{equation}

To make the dynamic structure explicit at the unit level, define for each $i$
\begin{equation*}
	y_i := (y_{i1},...,y_{iT})' \in \R^T, \quad
	X_i\beta := (x_{i1}'\beta,...,x_{iT}'\beta)' \in \R^T, \quad
	\mathbf e_1 := (1,0,...,0)' \in \R^T,
\end{equation*}
and let $Y=(y_1,...,y_T)$ be the $N\times T$ matrix of observed cross sections.
For any matrix $C$, write $C_{i\cdot}$ for its $i$th row.
Introduce the $T\times T$ lower-triangular time-direction operator
\begin{equation}
	R_T(\phi) :=
	\begin{pmatrix}
		1      & 0      & 0      & \cdots & 0      \\
		-\phi  & 1      & 0      & \cdots & 0      \\
		0      & -\phi  & 1      & \cdots & 0      \\
		\vdots & \ddots & \ddots & \ddots & \vdots \\
		0      & \cdots & 0      & -\phi  & 1
	\end{pmatrix}, \qquad
	\det R_T(\phi) = 1.
\end{equation}
Conditional on the initial condition $y_{i0}$, the dynamic recursion can be written as
\begin{equation}\label{eq:utdef_unit}
	u_i(\rho,\phi,\beta) = R_T(\phi) y_i - \rho (WY)_{i\cdot}' - X_i\beta - \phi y_{i0}\mathbf e_1.
\end{equation}
Equation~\eqref{eq:utdef_unit} isolates the time-direction transformation.
Since $R_T(\phi)$ is lower triangular with ones on the diagonal,
the dynamic filter contributes no parameter-dependent Jacobian term once we condition on $y_{i0}$.
The only nontrivial Jacobian term in the likelihood comes from the contemporaneous spatial filter $B(\rho)=I_N-\rho W$.

\paragraph{Jacobian.}

Consider the change of variables
\begin{equation*}
(y_1,...,y_T)\longmapsto(u_1,...,u_T)
\end{equation*}
given by \eqref{eq:utdef}, conditioning on $y_0$. Stack
\begin{equation*}
y:=(y_1',...,y_T')'\in\R^{NT},
\qquad
u:=(u_1',...,u_T')'\in\R^{NT}.
\end{equation*}
The Jacobian matrix $\mathcal J:=\partial u/\partial y'$ is block lower triangular:
\begin{equation}\label{eq:jacobian_matrix}
	\mathcal J \;=\;
	\begin{pmatrix}
		B(\rho)   & 0          & 0       & \cdots    & 0      \\
		-\phi I_N & B(\rho)    & 0       & \cdots    & 0      \\
		0         & -\phi  I_N & B(\rho) & \cdots    & 0      \\
		\vdots    & \ddots     & \ddots  & \ddots    & \vdots \\
		0         & \cdots     & 0       & -\phi I_N & B(\rho)
	\end{pmatrix}.
\end{equation}
Hence
\begin{equation}\label{eq:jacobian}
	|\mathcal J|=\prod_{t=1}^T |B(\rho)| = |B(\rho)|^T.
\end{equation}
This calculation makes clear that there are two distinct structural transformations.
The first is the time-direction operator $R_T(\phi)$ in \eqref{eq:utdef_unit},
whose determinant is one.
The second is the contemporaneous cross-sectional operator $B(\rho)$,
which generates the nontrivial Jacobian contribution.
Accordingly, the conditional likelihood contains the term $T\log|B(\rho)|$ and no additional Jacobian term involving $\phi$.

\subsection{Factors}

Let the factor path matrix be
\begin{equation*}
F:=\begin{pmatrix} f_1'\\ \vdots\\ f_T'\end{pmatrix}\in\R^{T\times r}.
\end{equation*}
For each unit $i$, stack innovations and idiosyncratic errors over time:
\begin{equation*}
u_i:=(u_{i1},...,u_{iT})'\in\R^T,\qquad
\varepsilon_i:=(\varepsilon_{i1},...,\varepsilon_{iT})'\in\R^T.
\end{equation*}
Passing from the time-stacked vector $(u_1',...,u_T')'$ to the unit-stacked collection $\{u_i\}_{i=1}^{N}$
is only a reordering of coordinates.
Equivalently, it is a fixed permutation of the $NT$ entries and therefore has absolute determinant of one.
Thus, no further Jacobian term arises when the likelihood is re-written as a product over units rather than over time.
Then \eqref{eq:uteq} is equivalent to
\begin{equation}\label{eq:ui_basic}
	u_i = \delta + F\lambda_i + \varepsilon_i,\qquad i=1,...,N,
\end{equation}
where $\lambda_i\in\R^r$ is the $i$th row of $\Lambda$ as a column vector and $\delta=(\delta_1,...,\delta_T)'\in\R^T$.

\begin{remark}[Notation]
	Objects indexed by $t$ are cross-sectional $N\times1$ vectors,
	for example $y_t,u_t,\varepsilon_t\in\R^N$.
	Objects indexed by $i$ are unit-specific $T\times1$ vectors, for example $y_i,u_i,\varepsilon_i\in\R^T$.
	Thus, the same letter may denote different objects depending on whether it is indexed by $t$ or by $i$.
	We keep this notation because the intended dimension is usually clear from the index and from the context.
\end{remark}

\subsection{Time-Heteroskedastic Idiosyncratic Errors}

We assume the idiosyncratic errors  $\varepsilon_i$ are iid, allowing heteroskedasticity over time but no serial correlation.
The errors $\varepsilon_i$ are also independent of factor loadings $\Lambda$.
\begin{assumption}[Idiosyncratic Errors]\label{ass:varepsilon}
 Conditional on $\mathcal C_N,$
$\{\varepsilon_i\}_{i=1}^N$
 are identically distributed and independent across $i$, and, for some $\zeta>0$ and $C_\varepsilon <\infty$,
\[ \begin{aligned}
&\E(\varepsilon_i\mid\mathcal C_N)=0,
\qquad
\E(\varepsilon_i\varepsilon_i'\mid\mathcal C_N)
=
D_\varepsilon,
\qquad
D_\varepsilon=\diag(\sigma_1^2,\ldots,\sigma_T^2),\\
&(\varepsilon_1,\ldots,\varepsilon_N)
\perp\!\!\!\perp
\Lambda
\mid\mathcal C_N, \qquad
\sup_{N\ge1}
\E\!\left[
\|\varepsilon_i\|^{4+\zeta}
\,\middle|\,
\mathcal C_N
\right]
\le C_\varepsilon
\qquad\text{a.s.}
\end{aligned}
\]
\end{assumption}

\subsection{Spatially Enriched Random Loadings}

We model the factor loadings using a \emph{spatially enriched random-loadings} (SERL) specification.
The construction adapts the correlated-random-effects approach of
\textcite{mundlak-1978},
\textcite{chamberlain-1982},
and \textcite{wooldridge-2005}
to  factor loadings in a dynamic spatial model.
In a static spatial Durbin panel,
\textcite{debarsy-2012} uses a Mundlak projection for a time-invariant individual effect with own and spatially lagged regressor averages.
SERL extends this construction to vector-valued factor loadings in a dynamic model and allows the controls to include the initial outcome and higher-order spatial transformations.
It allows the conditional loading means to depend on observed covariates,
the initial condition,
and information propagated through the spatial structure.

Let $Q_X\in\R^{N\times p}$ collect observed summaries of the regressors.
A Mundlak-type specification may use time averages of the regressors,
whereas a Chamberlain-type specification may use the complete observed regressor history.
For a fixed integer $L\geq0$, define
\begin{equation}\label{eq:correct_projection}
	z_i
	=
	\Big[
	(Q_X)_{i\cdot}', (WQ_X)_{i\cdot}', \cdots, (W^LQ_X)_{i\cdot}',
	(y_0)_i, (Wy_0)_i, \cdots, (W^Ly_0)_i
	\Big]'
	\in\R^{(L+1)(p+1)}.
\end{equation}
Because $Q_X$ is constructed from the observed regressor history and
$z_i$ is constructed from $Q_X$, $y_0$, and $W$,
both $Q_X$ and $z_i$ are $\mathcal C_N$-measurable.
We specify
\begin{equation}\label{eq:proj}
	\lambda_i = A z_i + \eta_i,
\end{equation}
where $Az_i$ captures the component of the loading heterogeneity
systematically related to the observed information and $\eta_i$ represents
the idiosyncratic loading heterogeneity.

We refer to this as the SERL specification. It has a natural interpretation in terms of heterogeneous responses to common shocks. The loading $\lambda_i$ determines how unit $i$ responds to the common shocks $f_t$, and this response heterogeneity may be systematically related to observed characteristics, initial conditions, and their spatial transformations. The decomposition
$
\lambda_i=Az_i+\eta_i
$
separates such systematic heterogeneity from  unit-specific heterogeneity. The first component, $Az_i$,  is predictable from the common conditioning information $\mathcal{C}_N$ and captures systematic or coordinated responses associated with observed characteristics, initial conditions, and their spatial transformations.
The second component, $\eta_i$, captures purely unit-specific heterogeneity in the response to the common shock that is not systematically related to the information in $\mathcal{C}_N$, and is assumed to be  independent across $i$ conditional on $\mathcal{C}_N$.
In this sense,
SERL adapts the correlated-random-effects idea to factor loadings in a spatial environment. Assumption~\ref{ass:SERL}
formalizes this decomposition.

\begin{assumption}[Spatially enriched random loadings]\label{ass:SERL}
	For some fixed integer $L\geq0$ and coefficient matrix
	$A\in\R^{r\times(L+1)(p+1)}$,
	the loading vectors satisfy \eqref{eq:proj}.
	Conditional on $\mathcal C_N$,
	the vectors $\{\eta_i\}_{i=1}^N$ are identically distributed and independent across $i$, with
	\[
		\E(\eta_i\mid\mathcal C_N)=0,
		\qquad
		\E(\eta_i\eta_i'\mid\mathcal C_N)=\Sigma_\eta, \qquad \sup_{N\ge1}
\E\!\left[
\|\eta_i\|^{4+\zeta}
\,\middle|\,
\mathcal C_N
\right]
\le C_\eta< \infty
\qquad\text{a.s.}
	\]
	where $\Sigma_\eta$ is positive definite.
\end{assumption}
We refer to $L$ as the \emph{spatial enrichment order}.
The case $L=0$ uses only own-unit regressor summaries and the own initial outcome.
For $L>0$,
neighboring information enters through $WQ_X,\ldots,W^LQ_X$ and $Wy_0,\ldots,W^Ly_0$.
For the theoretical analysis, $L$ is treated as fixed.  When the conditional loading mean arises from a richer spatial process,
a finite value of $L$ may instead be viewed as an approximation, whose adequacy can be assessed by sensitivity
 to additional spatial transformations. Accordingly, in the empirical application, we examine alternative
  enrichment orders to assess the sensitivity of the structural estimates to the choice of $L$.

If the decomposition  is written as $\lambda_i=a+Az_i+\eta_i$ with $a\in\R^r$ as an intercept,
the common term $Fa$ is absorbed by the unrestricted time effect $\delta$.
Therefore, we omit $a$ without loss of generality.

Appendix~\ref{sec:joint-initial-condition} develops an alternative approach
that models $y_0$ jointly with the sample outcomes and projects the loadings only on regressor-based controls.

\subsection{Likelihood Conditional on \texorpdfstring{$\mathcal C_N$}{C_N}}

Substituting \eqref{eq:proj} into \eqref{eq:ui_basic} yields
\begin{equation*}
	u_i = \delta + F(Az_i+\eta_i)+\varepsilon_i = \delta + FAz_i + F\eta_i + \varepsilon_i.
\end{equation*}
Define
\begin{equation} \label{eq:ei}
 e_i := F \eta_i +\varepsilon_i
\end{equation}
By Assumptions~\ref{ass:varepsilon} and \ref{ass:SERL},
	conditional on $\mathcal C_N$,
	the vectors $\{e_i\}_{i=1}^N$ are identically distributed and independent across $i$, with
\begin{equation} \label{eq:conditional_moments}
\E(e_i\mid\mathcal C_N)=0, \qquad
\E(e_ie_i'\mid\mathcal C_N)
=
F\Sigma_\eta F'+D_\varepsilon,
\qquad
\sup_{N\ge1}
\E\!\left[
\|e_i\|^{4+\zeta}
\,\middle|\,
\mathcal C_N
\right]
\le C_e, \quad \text{a.s.}
\end{equation}
where $C_e<\infty$.
Hence,
\begin{equation}\label{eq:cond_mean}
	\E(u_i\mid\mathcal C_N) = \delta + FAz_i.
\end{equation}
and
\begin{equation}\label{eq:cond_var}
	\Var(u_i\mid\mathcal C_N) = F\Sigma_\eta F' + D_\varepsilon \;=:\; \Sigma_u.
\end{equation}

For the moment, let $\alpha$ denote the collection of unknown parameters entering the working likelihood. A precise parameterization of $\alpha$ in terms of free parameters is introduced in Section~\ref{sec:asymptotics} after imposing a normalization on the factor path $F$ to remove its rotational indeterminacy.
No Gaussian distribution is imposed on $\eta_i$, $\varepsilon_i$, or $e_i$. The Gaussian specification used below is a working likelihood based on the conditional mean and covariance in \eqref{eq:cond_mean}--\eqref{eq:cond_var}.
Combining the determinant-one time transformation in \eqref{eq:utdef_unit},
the spatial Jacobian in \eqref{eq:jacobian}, and
the determinant-one permutation from time stacking to unit stacking,
one obtains a conditional likelihood with a single Jacobian term, namely
$|B(\rho)|^T$. Let $Y=(y_1,\ldots,y_T)$. Motivated by the conditional moment restrictions,
we use the conditional Gaussian working likelihood
\begin{equation}\label{eq:lik}
	\begin{aligned}
		L_N(\alpha;Y\mid\mathcal C_N)
		&=
		|B(\rho)|^T
		\prod_{i=1}^N
		(2\pi)^{-T/2}
		|\Sigma_u|^{-1/2} \times  \\
		&  \exp\!\left(
		-\tfrac12 (u_i-\delta-FAz_i)'\Sigma_u^{-1}(u_i-\delta-FAz_i)
		\right),
	\end{aligned}
\end{equation}
where $u_i=(u_{i1},\ldots,u_{iT})'$ is obtained by rearranging $u_t$ computed from observed data in \eqref{eq:utdef}.
The corresponding working log-likelihood, up to an additive constant, is given by
\begin{equation}\label{eq:loglik}
	\ell(\alpha)
	=
	T\log|B(\rho)|
	-\frac{N}{2}\log|\Sigma_u|
	-\frac12 \sum_{i=1}^N (u_i-\delta-FAz_i)'\Sigma_u^{-1}(u_i-\delta-FAz_i).
\end{equation}

\section{Inferential Theory under Large \texorpdfstring{$N$}{N} and Fixed \texorpdfstring{$T$}{T}}\label{sec:asymptotics}

This section states the large-$N$, fixed-$T$ theory for the Gaussian QML
estimator. Gaussianity is not required. The key complication is that the
spatial transformation makes the observed-data score cross-sectionally
dependent, even when the structural residuals are independent across units.
Consequently, inference must use spatially corrected score contributions rather
than raw individual score outer products.
All expectations, variances, and probability statements in this section are
understood conditionally on $\mathcal C_N$ unless otherwise indicated.

\subsection{Parameterization, Normalization, and Sample Criterion}\label{subsec:qml_setup}

The key parameters are
\begin{equation*}
\theta := (\rho,\ \phi,\ \beta')'
\end{equation*}
and the factor path is identified only up to rotation. We assume that its top
$r\times r$ block has full rank and use the normalized representation
\begin{equation}\label{eq:F_normalization_theory}
	F=
	\begin{pmatrix}
		I_r\\ F_2
	\end{pmatrix},
	\qquad F_2\in\R^{(T-r)\times r}.
\end{equation}
Write $D_\varepsilon=\diag(\sigma_1^2,...,\sigma_T^2)$ and
$\sigma^2=(\sigma_1^2,...,\sigma_T^2)'$. The nuisance vector contains only
free coordinates,
\begin{equation*}
	\varphi
	=
	\left(
		\delta',\vecop(A)',\vecop(F_2)',
		\vech(\Sigma_\eta)',\sigma^{2\prime}
	\right)',
	\qquad
	\alpha=(\theta',\varphi')'.
\end{equation*}
Thus, only $F_2$ is included in $\alpha$, and $\Sigma_\eta$ is represented by its unique elements.  The parameter space restricts $\Sigma_\eta$ to be positive definite
and every element of $\sigma^2$ to be positive. For each unit $i$, define the transformed innovations
\begin{equation*}
u_i(\theta):=(u_{i1}(\theta),...,u_{iT}(\theta))'\in\mathbb R^T
\quad\text{where}\quad
u_t(\theta)=B(\rho)y_t-\phi y_{t-1}-X_t\beta
\end{equation*}
for $t=1,...,T$.
The nuisance mean is
\begin{equation*}
\mu_i(\varphi):=\delta+FAz_i\in\mathbb R^T,
\end{equation*}
and the inverse of the working covariance matrix is
\begin{equation*}
M(\varphi):=\Sigma_u(\varphi)^{-1}
\quad\text{where}\quad
\Sigma_u(\varphi):=D_\varepsilon+F\Sigma_\eta F'.
\end{equation*}
Define residuals $e_i(\alpha):=u_i(\theta)-\mu_i(\varphi)$.
Note
$e_i(\alpha_0)=F\eta_i+\varepsilon_i$, which is $e_i$ defined in \eqref{eq:ei}, and satisfies
\eqref{eq:conditional_moments}.
The per-unit Gaussian pseudo log-likelihood is
\begin{equation*}
\ell_i(\alpha)
:=
-\frac12\log|\Sigma_u(\varphi)|
-\frac12\,e_i(\alpha)'M(\varphi)e_i(\alpha),
\end{equation*}
and the sample criterion, normalized by $N$, is
\begin{equation}\label{eq:ell_N}
\ell_N(\alpha)
:=
\frac{T}{N}\log|B(\rho)|+\frac1N\sum_{i=1}^N \ell_i(\alpha).
\end{equation}
Let $\widehat\alpha=(\widehat\theta',\widehat\varphi')'$ be an interior maximizer of $\ell_N(\alpha)$,
and hence a solution to the corresponding first-order conditions.

\subsection{Consistency and Asymptotic Normality}\label{subsec:asymptotic_normality}

Define the ordinary unit score by
\begin{equation*}
	s_i(\alpha)
	:=
	\nabla_\alpha\ell_i(\alpha)
	+
	\frac{T}{N}\nabla_\alpha\log|B(\rho)|.
\end{equation*}
The normalized pooled score therefore satisfies
\begin{equation*}
	S_N(\alpha)
	:= \frac1N \sum_{i=1}^N s_i(\alpha)
	= \nabla_\alpha\ell_N(\alpha).
\end{equation*}
The spatial filter makes the ordinary unit scores cross-sectionally dependent even when the structural residuals are independent.
In the baseline model,
the dependence requiring correction is due to the $\rho$ and $\phi$ components,
since their reduced-form regressors depend on structural residuals from other units.
Consequently,
\begin{equation}\label{eq:Omega_raw_scores}
	\Var_N\left(\frac1{\sqrt N}\sum_{i=1}^N s_i(\alpha_0)\right)
	=
	\frac1N\sum_{i=1}^N\sum_{j=1}^N
	\Cov_N\!\big(s_i(\alpha_0),s_j(\alpha_0)\big),
\end{equation}
where, for notational simplicity, we write
\begin{equation*}
    \E_N(\,\cdot\,):=\E(\,\cdot\mid\mathcal C_N),
    \qquad
    \Var_N(\,\cdot\,):=\Var(\,\cdot\mid\mathcal C_N),
    \qquad
    \Cov_N(\,\cdot\,,\,\cdot\,):=\Cov(\,\cdot\,,\,\cdot\mid\mathcal C_N).
\end{equation*}
The diagonal-only outer product based on the raw scores generally omits the cross-unit terms in \eqref{eq:Omega_raw_scores}.
The cross-sectional dependence only enters the $\rho$ and $\phi$ components of the score,
so the remaining components already have the required unitwise form.

Following \textcite{li-yang-2021},
we therefore rearrange the cross-unit terms in the $\rho$ and $\phi$ scores while preserving the pooled score.
Let $\{g_i(\alpha)\}_{i=1}^{N}$ denote the resulting rearranged contributions,
satisfying
\begin{equation}\label{eq:g_score_decomposition}
	N S_N(\alpha)
	=
	\sum_{i=1}^N s_i(\alpha)
	=
	\sum_{i=1}^N g_i(\alpha),
\end{equation}
and that at $\alpha_0$,
the rearranged contributions form a martingale-difference array with respect to the filtration
\begin{equation*}
	\mathcal F_i
	:=
	\mathcal C_N\vee\sigma(e_1,...,e_i).
\end{equation*}
Appendix~\ref{sec:spatial_score_construction} gives the exact component-by-component construction of $g_i$.

\begin{assumption}[Regularity]\label{ass:regularity}
	\begin{enumerate}
		\item
			The parameter space is compact and $\alpha_0$ is an interior point.
			The criterion converges uniformly in probability over the parameter space to a limiting criterion uniquely maximized at $\alpha_0$.
		\item
			The criterion is twice continuously differentiable on a convex neighborhood $\mathcal A_0$ of $\alpha_0$.
		\item
			The row and column sums of the spatial weights matrices are uniformly bounded,
			and $B(\rho)^{-1}$ and the dynamic-spatial reduced-form operator are uniformly bounded over the parameter space.
		\item
			On $\mathcal A_0$,
			the eigenvalues of $\Sigma_u(\varphi)$ are uniformly bounded away from zero and infinity,
			and its first two derivatives are uniformly bounded.
			The unit-to-unit reduced-form response blocks entering $g_i(\alpha)$,
			together with their first two parameter derivatives,
			admit a common summable envelope whose block row and column sums are uniformly bounded.
			The conditionally nonrandom quantities entering the score contributions,
			including the observed regressors, projection controls, initial-outcome terms, and reduced-form means,
			together with their first two derivatives,
			have uniformly bounded cross-sectional average moments of order $4+\zeta$.
		\item
			There is a matrix function $H(\alpha)$,
			continuous at $\alpha_0$,
			such that
			\begin{equation*}
				\sup_{\alpha\in\mathcal A_0}
				\left\|
				-\nabla_{\alpha\alpha'}^2\ell_N(\alpha)-H(\alpha)
				\right\|
				\pto0,
			\end{equation*}
			where $H(\alpha_0)$ is finite and positive definite.
	\end{enumerate}
\end{assumption}

 These are standard identification and smoothness requirements for
M-estimation, augmented by stability conditions for the spatial reduced form.
With fixed $T$,
the dimension of $\alpha$ is fixed,
so consistency and the asymptotic-linearization argument follow standard finite-dimensional QMLE reasoning,
as in \textcite{white-1982}.
What is nonstandard here is the cross-sectional dependence of the raw spatial scores;
the rearrangement into $g_i$ and the martingale central limit theorem below provide the required score limit.

\begin{assumption}[Predictable score covariance]\label{ass:score_clt}
	For the rearranged contributions $g_i(\alpha_0)$ and the filtration
	$\mathcal F_i=\mathcal C_N\vee\sigma(e_1,...,e_i)$,
	the predictable quadratic variation satisfies
	\[
		\frac1N\sum_{i=1}^N
		\E\!\left[
			g_i(\alpha_0)g_i(\alpha_0)'
			\mid\mathcal F_{i-1}
		\right]
		\pto\Omega(\alpha_0),
	\]
	where $\Omega(\alpha_0)$ is finite and positive definite.
\end{assumption}

\begin{remark}[Lindeberg Condition]
	The moment bounds established in Lemma~\ref{lem:score-moment-bounds} of Appendix~\ref{sec:spatial_score_construction},
	based on Assumptions~\ref{ass:varepsilon}, \ref{ass:SERL}, and \ref{ass:regularity},
	show that,
	for some $\delta>0$,
	\begin{equation*}
		\frac1N\sum_{i=1}^N
		\E_N\|g_i(\alpha_0)\|^{2+\delta}
		=O_p(1).
	\end{equation*}
	Since $\mathcal C_N\subseteq\mathcal F_{i-1}$,
	Markov's inequality implies that
	\begin{equation*}
		\frac1N\sum_{i=1}^N
		\E\!\left[
		\|g_i(\alpha_0)\|^{2+\delta}
		\,\middle|\,\mathcal F_{i-1}
		\right]
		=O_p(1).
	\end{equation*}
	Consequently,
	for every $\epsilon>0$,
	the conditional Lindeberg condition follows:
	\begin{equation*}
		\begin{aligned}
			\frac1N\sum_{i=1}^N
			\E\!\left[
			\|g_i(\alpha_0)\|^2
			\1\{\|g_i(\alpha_0)\|>\epsilon\sqrt N\}
			\middle|\mathcal F_{i-1}
			\right]
			\leq
			\frac{1}{\epsilon^\delta N^{\delta/2}}
			\frac1N\sum_{i=1}^N
			\E\!\left[
			\|g_i(\alpha_0)\|^{2+\delta}
			\,\middle|\,\mathcal F_{i-1}
			\right]
			=o_p(1).
		\end{aligned}
	\end{equation*}
\end{remark}

Because the rearranged contributions form a martingale-difference array,
the Lindeberg condition,
Assumption~\ref{ass:score_clt},
and Theorem~3.2 and Corollary~3.1 of \textcite[pp.~58--59]{hall-heyde-1980} yield
\begin{equation*}
	\sqrt N S_N(\alpha_0)
	=
	\frac1{\sqrt N}\sum_{i=1}^N g_i(\alpha_0)
	\Rightarrow
	\mathcal N\!\left(0,\Omega(\alpha_0)\right).
\end{equation*}
The uniform-convergence and unique-maximizer conditions in Assumption~\ref{ass:regularity},
together with the standard argmax theorem,
give $\widehat\alpha\pto\alpha_0$.
A mean-value expansion of the first-order condition around $\alpha_0$ then yields
\begin{equation*}
	\sqrt N(\widehat\alpha-\alpha_0)
	=
	H(\alpha_0)^{-1}\sqrt N S_N(\alpha_0)+o_p(1).
\end{equation*}
This gives the following result.

\begin{theorem}[Consistency and asymptotic normality]\label{thm:qml_asymptotics}
	Under Assumptions~\ref{ass:varepsilon}--\ref{ass:score_clt},
	\begin{equation}\label{eq:alpha_CLT}
		\widehat\alpha\pto\alpha_0,
		\qquad
		\sqrt N(\widehat\alpha-\alpha_0)
		\Rightarrow
		\mathcal N\Big(0,H(\alpha_0)^{-1}\Omega(\alpha_0)(H(\alpha_0)^{-1})'\Big).
	\end{equation}
\end{theorem}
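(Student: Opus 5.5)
The argument follows the standard M-estimation route: consistency first, then a linearization of the first-order condition, with the martingale central limit theorem applied to the rearranged score contributions.

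\emph{Consistency.} By Assumption~\ref{ass:regularity}(1), the parameter space is compact, $\ell_N$ converges uniformly in probability to a limiting criterion, and that limit is uniquely maximized at $\alpha_0$. These are the hypotheses of the standard argmax (Wald-type) consistency theorem. Applied to the maximizer $\widehat\alpha$, it gives $\widehat\alpha\pto\alpha_0$. Because $\alpha_0$ is interior, with probability approaching one $\widehat\alpha$ lies in the convex neighborhood $\mathcal A_0$ and satisfies $S_N(\widehat\alpha)=0$.

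\emph{Linearization.} On the event $\widehat\alpha\in\mathcal A_0$, I apply a row-by-row mean-value expansion, which is legitimate by Assumption~\ref{ass:regularity}(2) and the convexity of $\mathcal A_0$:
\begin{equation*}
0=S_N(\widehat\alpha)=S_N(\alpha_0)+\nabla^2_{\alpha\alpha'}\ell_N(\bar\alpha)(\widehat\alpha-\alpha_0).
\end{equation*}
Here $\bar\alpha$ lies on the segment between $\widehat\alpha$ and $\alpha_0$ (possibly a different point for each row). Then
\begin{equation*}
\|-\nabla^2\ell_N(\bar\alpha)-H(\alpha_0)\|\le\sup_{\mathcal A_0}\|-\nabla^2\ell_N-H\|+\|H(\bar\alpha)-H(\alpha_0)\|.
\end{equation*}
The first term is $o_p(1)$ by Assumption~\ref{ass:regularity}(5). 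The second is $o_p(1)$ by continuity of $H$ at $\alpha_0$ together with consistency. Since $H(\alpha_0)$ is positive definite, $-\nabla^2\ell_N(\bar\alpha)$ is invertible with probability approaching one, and its inverse converges in probability to $H(\alpha_0)^{-1}$. Hence
\begin{equation*}
\sqrt N(\widehat\alpha-\alpha_0)=\left[H(\alpha_0)^{-1}+o_p(1)\right]\sqrt N S_N(\alpha_0).
\end{equation*}
This reduces to $H(\alpha_0)^{-1}\sqrt N S_N(\alpha_0)+o_p(1)$ once $\sqrt N S_N(\alpha_0)=O_p(1)$ is established.

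\emph{Score limit.} This is the main step, because the raw scores $s_i$ are cross-sectionally dependent through the spatial reduced form, so no i.i.d.\ CLT applies. I use the identity \eqref{eq:g_score_decomposition} to rewrite
\begin{equation*}
\sqrt N S_N(\alpha_0)=N^{-1/2}\sum_i g_i(\alpha_0).
\end{equation*}
By the construction in Appendix~\ref{sec:spatial_score_construction}, the $g_i(\alpha_0)$ form a martingale-difference array with respect to $\mathcal F_i$. Three inputs then apply:
\begin{itemize}
\item[] the conditional variance condition, which is exactly Assumption~\ref{ass:score_clt};
\item[] the conditional Lindeberg condition, which follows from the $(2+\delta)$-moment bound of Lemma~\ref{lem:score-moment-bounds} as in the Remark above;
\item[] the Cram\'er--Wold device, which reduces the vector statement to scalar arrays $c'g_i$.
\end{itemize}
Hall and Heyde's Theorem~3.2 and Corollary~3.1 then give $N^{-1/2}\sum_i g_i(\alpha_0)\Rightarrow\mathcal N(0,\Omega(\alpha_0))$, which in particular makes the score $O_p(1)$.

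The delicate point is that the array is triangular and $\Omega$ is a probability limit conditional on $\mathcal C_N$. Corollary~3.1 requires either nested $\sigma$-fields or a nonrandom limit, and here $\Omega(\alpha_0)$ is treated as nonrandom (finite, positive definite), so the corollary applies.

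\emph{Conclusion.} Slutsky's theorem, applied to $H(\alpha_0)^{-1}\sqrt N S_N(\alpha_0)$, yields the normal limit with covariance $H(\alpha_0)^{-1}\Omega(\alpha_0)(H(\alpha_0)^{-1})'$, which is \eqref{eq:alpha_CLT}.
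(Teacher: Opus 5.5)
Your proposal is correct and follows essentially the same route as the paper: argmax consistency from Assumption~\ref{ass:regularity}, a mean-value expansion of the first-order condition controlled by the uniform Hessian convergence, and the Hall--Heyde martingale CLT applied to $N^{-1/2}\sum_i g_i(\alpha_0)$, with the conditional Lindeberg condition supplied by the $(2+\delta)$-moment bound of Lemma~\ref{lem:score-moment-bounds}. The extra details you add are correct and make explicit what the paper leaves implicit: the row-wise mean values, the Cram\'er--Wold reduction, and the remark that a nonrandom $\Omega(\alpha_0)$ removes the need for nested $\sigma$-fields, which matters because $\mathcal F_i$ contains the $N$-dependent $\mathcal C_N$.
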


Because the criterion is a Gaussian
pseudo-likelihood, the information equality need not hold. Estimating
$\Omega(\alpha_0)$ therefore requires separate analysis.

\subsection{Estimation of the Asymptotic Covariance}\label{subsec:sandwich_general}

The martingale representation in \eqref{eq:g_score_decomposition} identifies the score covariance with the probability limit of its predictable quadratic variation:
\begin{equation*}
	\Omega(\alpha_0)
	=
	\plim_{N\to\infty}
	\frac1N\sum_{i=1}^N
	\E\!\left[
		g_i(\alpha_0)g_i(\alpha_0)'
		\mid\mathcal F_{i-1}
	\right].
\end{equation*}
While the full outer product of the rearranged contributions is consistent for this covariance,
we further refine the estimator to remove terms that have zero conditional expectation under the maintained martingale structure.
The full construction of the estimator $\widehat\Omega_M(\widehat\alpha)$ is given in Appendix~\ref{sec:spatial_score_construction}.
Proposition~\ref{prop:feasible_score_covariance} proves the asymptotic equivalence
\begin{equation*}
	\widehat\Omega_M(\alpha_0)
	-
	\frac1N\sum_{i=1}^N g_i(\alpha_0)g_i(\alpha_0)'
	=o_p(1).
\end{equation*}
It also proves the plug-in replacement
$\widehat\Omega_M(\widehat\alpha)-\widehat\Omega_M(\alpha_0)=o_p(1)$,
and hence $\widehat\Omega_M(\widehat\alpha)\pto\Omega(\alpha_0)$.
For improved finite-sample performance,
our preferred implementation applies a pair-only HC1-style correction.
Using the decomposition in \eqref{eq:martingale_covariance_estimator},
write
$\widehat\Omega_M=\widehat\Omega_\dagger+\widehat\Omega_v$,
where $\widehat\Omega_\dagger$ is the one-unit component and
$\widehat\Omega_v$ is the pair component.
We define
\begin{equation}\label{eq:Omega_martingale_pair_hc1}
	\widehat\Omega_{M,\mathrm{pair}}(\widehat\alpha)
	:=
	\widehat\Omega_\dagger(\widehat\alpha)
	+
	\frac{N}{N-p_\alpha}
	\widehat\Omega_v(\widehat\alpha),
\end{equation}
where $p_\alpha:=\dim(\alpha)$ counts all estimated structural and nuisance parameters.
This adjustment targets the component associated with the spatial and dynamic score rearrangement while leaving the one-unit component unchanged.
We use it as a simple finite-sample correction and do not claim that it makes the pair component exactly unbiased.
The effective sample size is $N$ because each cross-sectional unit contributes one fixed-length time path to the score decomposition.
Since $T$ and $p_\alpha$ are fixed as $N\to\infty$,
the multiplier converges to one and does not change the covariance estimator's probability limit.
We use $\widehat\Omega_{M,\mathrm{pair}}$ for our reported simulation and application inference below,
except where a table explicitly labels otherwise.
Let $\widehat H:=-\nabla_{\alpha\alpha'}^2\ell_N(\widehat\alpha)$.
The preferred feasible sandwich covariance estimator for the full estimator is
\begin{equation}\label{eq:Valpha_hat}
	\widehat{\Var}(\widehat\alpha)
	=
	\frac1N\widehat H^{-1}\widehat\Omega_{M,\mathrm{pair}}(\widehat\alpha)(\widehat H^{-1})'.
\end{equation}
The uncorrected benchmark replaces $\widehat\Omega_{M,\mathrm{pair}}$ by $\widehat\Omega_M$.

\subsection{Profile Inference for the Structural Parameters}\label{subsec:profile_sandwich}

Partition the negative Hessian conformably with
$\alpha=(\theta',\varphi')'$ and define
\begin{equation}\label{eq:H_schur}
	H_{\theta\mathbin{\cdot}\varphi}
	=
	H_{\theta\theta}
	-H_{\theta\varphi}H_{\varphi\varphi}^{-1}H_{\varphi\theta},
	\qquad
	P=H_{\theta\varphi}H_{\varphi\varphi}^{-1}.
\end{equation}
The profiled corrected contribution is
\begin{equation}\label{eq:g_profile}
	g_{\theta\mathbin{\cdot}\varphi,i}(\alpha_0)
	=
	g_{\theta,i}(\alpha_0)-P g_{\varphi,i}(\alpha_0),
\end{equation}
with covariance
\begin{equation*}
	\Omega_{\theta\mathbin{\cdot}\varphi}
	=
	\plim\frac1N\sum_{i=1}^N
	\E[g_{\theta\mathbin{\cdot}\varphi,i}g_{\theta\mathbin{\cdot}\varphi,i}'].
\end{equation*}
It follows that
\begin{equation}\label{eq:theta_CLT}
	\sqrt N(\widehat\theta-\theta_0)
	\Rightarrow
	\mathcal N\!\left(
		0,
		H_{\theta\mathbin{\cdot}\varphi}^{-1}
		\Omega_{\theta\mathbin{\cdot}\varphi}
		(H_{\theta\mathbin{\cdot}\varphi}^{-1})'
	\right).
\end{equation}
For feasible inference,
let $\widehat P$ be the sample analogue of $P$ in \eqref{eq:H_schur}.
The preferred profiled covariance is the corresponding transformation of the pair-corrected martingale covariance estimator:
\begin{equation*}
	\widehat\Omega_{\theta\mathbin{\cdot}\varphi}
	=
	\begin{bmatrix}
		I & -\widehat P
	\end{bmatrix}
	\widehat\Omega_{M,\mathrm{pair}}(\widehat\alpha)
	\begin{bmatrix}
		I \\ -\widehat P'
	\end{bmatrix}.
\end{equation*}
Then
\begin{equation}\label{eq:Vtheta_hat}
	\widehat{\Var}(\widehat\theta)
	=
	\frac1N
	\widehat H_{\theta\mathbin{\cdot}\varphi}^{-1}
	\widehat\Omega_{\theta\mathbin{\cdot}\varphi}
	(\widehat H_{\theta\mathbin{\cdot}\varphi}^{-1})'.
\end{equation}

\section{Estimation Algorithm}\label{sec:estimation_algorithm}

This section outlines a practical block algorithm to maximize the conditional Gaussian (QML)
log-likelihood \eqref{eq:loglik} when $T$ is fixed and the number of factors $r$ is small.
The parameters of interest are $(\rho,\phi,\beta)$,
while $(\delta,F,A,\Sigma_\eta,D_\varepsilon)$ are nuisance parameters.
We exploit three features:
(i) for given $(F,\Sigma_u)$, $(\delta,A)$ can be \emph{concentrated out in closed form};
(ii) under the current values of $(\delta,F,A)$, the Gaussian working model for the random loading heterogeneity $\eta_i$
yields simple EM/ECM updates for $(F,\Sigma_\eta,D_\varepsilon)$;
(iii) given $\rho$, the parameters $(\phi,\beta)$ admit a closed-form generalized least squares update and a one-dimensional conditional
maximization step in $\rho$ can be implemented using the Jacobian term $\log|B(\rho)|$.
This yields an inner-outer loop structure:
the outer loop performs a scalar conditional maximization in $\rho$, and the inner loop is a block coordinate/ECME iteration
for $(\phi,\beta)$ and $(\delta,A,F,\Sigma_\eta,D_\varepsilon)$ given $\rho$.

\subsection{Block Coordinate/ECME Iteration: Inner Loop}

Let $u_t(\rho,\phi,\beta)$ be defined in \eqref{eq:utdef} and stack $U=(u_1,...,u_N)\in\mathbb R^{T\times N}$,
$Z=(z_1,...,z_N)\in\mathbb R^{q\times N}$. Given current parameter values, iterate the following steps
until convergence, e.g., change in concentrated log-likelihood below a tolerance.

\paragraph{Inputs to Steps 1--4.}
Given current $(\rho,\phi,\beta)$, form the innovation (or transformed residual) matrix
\begin{equation*}
U(\rho,\phi,\beta):=(u_1,...,u_N)\in\mathbb R^{T\times N},\qquad
u_t(\rho,\phi,\beta):=B(\rho)y_t-\phi y_{t-1}-X_t\beta,\ \ t=1,...,T,
\end{equation*}
and let $u_i(\rho,\phi,\beta)\in\mathbb R^T$ denote the $i$th column of $U(\rho,\phi,\beta)$.
In Steps~1--4 below we treat $U(\rho,\phi,\beta)$ as the \emph{data input} and update only nuisance parameters
$(\delta,A,F,\Sigma_\eta,D_\varepsilon)$.

\paragraph{Step 1 (Input: the transformed residual matrix; concentrate out time effects $\delta$ and projection matrix $A$).}

Given current $(F,\Sigma_\eta,D_\varepsilon)$, compute
\begin{equation*}
\Sigma_u=D_\varepsilon+F\Sigma_\eta F',\qquad M:=\Sigma_u^{-1}.
\end{equation*}
This step can be computed efficiently using Woodbury identity; see Section~\ref{sec:woodbury}. Let
\begin{equation*}
\bar u:=\frac1N U(\rho,\phi,\beta)\,\ones[N],\quad \bar z:=\frac1N Z\,\ones[N],\quad
U_c:=U(\rho,\phi,\beta)-\bar u\,\ones[N]',\quad Z_c:=Z-\bar z\,\ones[N]'.
\end{equation*}
Then update $A$ by the profiled GLS formula
\begin{equation*}
A \leftarrow \widehat A(F;U)
=
\big(F'MF\big)^{-1}\,F'MU_c\,Z_c'\,\big(Z_cZ_c'\big)^{-1},
\end{equation*}
and update the free time effect by
\begin{equation*}
\delta \leftarrow \widehat\delta(F,A;U)=\bar u - F A\,\bar z.
\end{equation*}
With this ordering, $\widehat A(F;U)$ depends on $(u_i-\bar u)$ and $(z_i-\bar z)$.
See Section~\ref{sec:profile_delta_A} for details on the concentration of $\delta$ and $A$.
The displayed updates are analogous to first estimating a regression slope and then its intercept.
Equivalently, one may first concentrate out $\delta$, solve for $A$, and substitute the resulting $A$ back into the expression for $\delta$;
concentrating out $(\delta,A)$ jointly yields the same formulas.

\paragraph{Step 2 (E-step for the random loading heterogeneity $\eta_i$ given the transformed residual matrix).}

For each $i$, define
\begin{equation*}
e_i := u_i(\rho,\phi,\beta)-\delta-FAz_i \in\mathbb R^T.
\end{equation*}
Treating $\eta_i$ as missing data, the posterior is Gaussian with
\begin{align*}
V &:= \Var(\eta_i\mid u_i,z_i)
=
\big(\Sigma_\eta^{-1}+F'D_\varepsilon^{-1}F\big)^{-1},\\
m_i &:= \E(\eta_i\mid u_i,z_i)
=
V\,F'D_\varepsilon^{-1}e_i,
\qquad i=1,...,N.
\end{align*}
(Here $V$ is common across $i$ and is only $r\times r$.)

\paragraph{Step 3 (M-step for factor covariance $\Sigma_\eta$ and idiosyncratic variance matrix $D_\varepsilon$).}

Update the loading covariance by
\begin{equation*}
\Sigma_\eta \leftarrow V + \frac1N\sum_{i=1}^N m_i m_i'.
\end{equation*}
Let $r_i:=e_i-Fm_i$ denote the posterior mean of $\varepsilon_i$.
Since $\E(\varepsilon_i\varepsilon_i'\mid u_i)=r_i r_i' + FVF'$, the diagonal variances update elementwise as
\begin{equation*}
\sigma_t^2 \leftarrow \frac1N\sum_{i=1}^N r_{it}^2 \;+\; f_t'Vf_t,\qquad t=1,...,T,
\end{equation*}
where $f_t'\in\mathbb R^{1\times r}$ is row $t$ of $F$, and set $D_\varepsilon\leftarrow\diag(\sigma_1^2,...,\sigma_T^2)$.

\paragraph{Step 4 (M-step for factor path $F$; impose normalization for factor path after updating).}

Define
\begin{equation*}
s_i:=Az_i+m_i\in\mathbb R^r,\qquad S:=(s_1,...,s_N)\in\mathbb R^{r\times N}.
\end{equation*}
The closed-form ECM update for $F$ is
\begin{equation*}
\widetilde F \leftarrow \big(U(\rho,\phi,\beta)-\delta\,\ones[N]'\big)\,S'\,\big(SS' + NV\big)^{-1}.
\end{equation*}
Since $F$ is only identified up to rotation, we can post-process $\widetilde F$ by any invertible $r\times r$ matrix
$H$ without changing the likelihood.
Impose the normalization $F_{1:r,:}=I_r$ by the rotation
\begin{equation*}
H := \big(\widetilde F_{1:r,:}\big)^{-1},\qquad
F \leftarrow \widetilde F H,\qquad
A \leftarrow H^{-1}A,\qquad
\Sigma_\eta \leftarrow H^{-1}\Sigma_\eta (H^{-1})'.
\end{equation*}

\begin{remark}[Normalization with Time-Invariant Individual Heterogeneity]
	If time-invariant individual heterogeneity is included,
	the first factor is normalized to be constant over time,
	$f_{t1}=1$,
	so that
	\[
		F=[\ones[T],G],
		\qquad
		G\in\R^{T\times(r-1)},
	\]
	where $r$ is the total number of factors,
	including the constant factor.
	In this case,
	the identity-block normalization in \eqref{eq:F_normalization_theory} is replaced by
	\[
		F_{1:r,:}
		=
		\begin{pmatrix}
			1 & 0_{1\times(r-1)}\\
			\ones[r-1] & I_{r-1}
		\end{pmatrix},
	\]
	or equivalently,
	\[
		G_{1,\cdot}=0,
		\qquad
		G_{2:r,\cdot}=I_{r-1}.
	\]
	This normalization preserves the constant first factor while removing the remaining location and rotation indeterminacy.
	The factor update and subsequent rotation must then be restricted to preserve the constant column.
	The corresponding loading follows the same SERL representation as the other components of $\lambda_i$;
	its systematic component is captured by $Az_i$ and its residual variation by $\Sigma_\eta$,
	without estimating a separate parameter for every unit.
	When no constant factor is imposed,
	we retain the normalization $F_{1:r,:}=I_r$ used above.
\end{remark}

\paragraph{Step 5 (joint GLS update for the dynamic and slope coefficients, conditional on $\rho$).}
Fix $\rho$ and the current nuisance parameters $(\delta,F,A,\Sigma_\eta,D_\varepsilon)$,
hence $\Sigma_u=D_\varepsilon+F\Sigma_\eta F'$ and $M:=\Sigma_u^{-1}$.
Define the spatially transformed dependent variable
\begin{equation*}
\widetilde y_t(\rho):=B(\rho)y_t = (I_N-\rho W)y_t,\qquad t=1,...,T,
\end{equation*}
and for each unit $i$ stack
\begin{equation*}
\widetilde y_i(\rho):=\big(\widetilde y_{i1}(\rho),...,\widetilde y_{iT}(\rho)\big)'\in\mathbb R^T,\qquad
y_{i,-1}:=(y_{i0},y_{i1},...,y_{i,T-1})'\in\mathbb R^T,
\end{equation*}
\begin{equation*}
X_i:=\begin{pmatrix}x_{i1}'\\ \vdots\\ x_{iT}'\end{pmatrix}\in\mathbb R^{T\times K},\qquad
R_i:=[\,y_{i,-1}\ \ X_i\,]\in\mathbb R^{T\times (1+K)},\qquad
\vartheta:=\begin{pmatrix}\phi\\ \beta\end{pmatrix}\in\mathbb R^{1+K}.
\end{equation*}

Then the mean equation implied by \eqref{eq:Bsystem} and the SERL specification is
\begin{equation*}
\widetilde y_i(\rho)=R_i\vartheta+\delta+FAz_i+v_i,\qquad \Var(v_i\mid\mathcal C_N)=\Sigma_u.
\end{equation*}

\emph{Centered (profile-$\delta$) GLS.}
Since $\delta$ is a free $T\times 1$ time effect common across $i$, it is convenient to
difference out $\delta$ by cross-sectional demeaning. Let
\begin{equation*}
\bar{\widetilde y}(\rho):=\frac1N\sum_{i=1}^N \widetilde y_i(\rho),\quad
\bar R:=\frac1N\sum_{i=1}^N R_i,\quad
\bar z:=\frac1N\sum_{i=1}^N z_i,
\end{equation*}
and define centered objects
\begin{equation*}
\widetilde y_{i,c}(\rho):=\widetilde y_i(\rho)-\bar{\widetilde y}(\rho),\qquad
R_{i,c}:=R_i-\bar R,\qquad
z_{i,c}:=z_i-\bar z.
\end{equation*}
Then the (conditional) GLS update for $\vartheta=(\phi,\beta')'$ is the explicit closed form
\begin{equation}\label{eq:theta_GLS}
	\vartheta \leftarrow \widehat\vartheta(\rho)
	=
	\left(\sum_{i=1}^N R_{i,c}'MR_{i,c}\right)^{-1}
	\left(\sum_{i=1}^N R_{i,c}'M\big(\widetilde y_{i,c}(\rho)-FAz_{i,c}\big)\right).
\end{equation}
Equivalently, one may use the uncentered version with $\delta$ explicitly present; \eqref{eq:theta_GLS}
is exactly the profile-$\delta$ GLS solution because $M$ is common across $i$.

Given $\widehat\vartheta(\rho)$, form the transformed innovations
\begin{equation*}
u_t(\rho,\widehat\vartheta):=\widetilde y_t(\rho)-\widehat\phi(\rho)\,y_{t-1}-X_t\widehat\beta(\rho),
\qquad t=1,...,T,
\end{equation*}
and form $U\in\mathbb R^{T\times N}$ with $t$th row $u_t(\rho,\widehat\vartheta)'$.
The remaining nuisance updates, concentrating out $(\delta,A)$ and the ECM steps for
$(F,\Sigma_\eta,D_\varepsilon)$, proceed exactly as in Steps~1--4.

\subsection{Conditional Scalar Update for \texorpdfstring{$\rho$}{rho}: Outer Loop}

\paragraph{Step 6 (conditional scalar update for $\rho$).}

Since $\rho$ enters nonlinearly through $B(\rho)$ and the Jacobian term $T\log|B(\rho)|$,
there is no closed form for $\rho$.
Because $\rho$ is scalar,
we update it by a one-dimensional conditional maximization step.
This step holds the current values of
$(\phi,\beta,\delta,A,F,\Sigma_\eta,D_\varepsilon)$ fixed.
For any trial value of $\rho$,
form $U(\rho)\in\mathbb R^{T\times N}$ by stacking $u_t(\rho)'$ as its rows,
and define
\begin{equation*}
	E(\rho)
	=
	U(\rho)-\delta\ones[N]'-FAZ,
\end{equation*}
where $u_t(\rho)=B(\rho)y_t-\phi y_{t-1}-X_t\beta$.
Restrict the numerical search to a compact interval
$[\rho_{\min},\rho_{\max}]$ on which $B(\rho)=I_N-\rho W$ is nonsingular.
Up to terms that do not depend on $\rho$ within this conditional step,
the criterion to maximize is
\begin{equation}\label{eq:rho_profile_obj}
\ell_c^{\text{cond}}(\rho)
=
T\log|B(\rho)|
-\frac12\tr\big(ME(\rho)E(\rho)'\big).
\end{equation}
Maximize \eqref{eq:rho_profile_obj} over $[\rho_{\min},\rho_{\max}]$ by a standard one-dimensional routine,
e.g., golden-section/Brent's method, or by a coarse grid followed by local refinement.
After updating $\rho$,
rerun the inner loop to update the remaining parameters.

\begin{remark}[Computing $\log|B(\rho)|$]
	Throughout,
	$\log|B(\rho)|$ denotes the log absolute determinant $\log|\det B(\rho)|$.
	If the eigenvalues of $W$ are $\{\lambda_j\}_{j=1}^N$,
	possibly including complex conjugate pairs when $W$ is nonsymmetric,
	then
	\begin{equation*}
		\log|\det B(\rho)|
		=
		\sum_{j=1}^N\log|1-\rho\lambda_j|,
		\quad
		\frac{\partial}{\partial\rho}\log|\det B(\rho)|
		=
		-\operatorname{Re}\sum_{j=1}^N \frac{\lambda_j}{1-\rho\lambda_j}
		=
		-\tr\big(B(\rho)^{-1}W\big),
	\end{equation*}
	on any connected admissible region on which $B(\rho)$ is nonsingular.
	When $W$ has a real spectrum and the admissible region ensures $1-\rho\lambda_j>0$ for every $j$,
	the first expression reduces to $\sum_j\log(1-\rho\lambda_j)$.
	For a nonsymmetric $W$,
	conjugate eigenvalue pairs combine to give a real log determinant.
	The derivative is useful for derivative-based updates in $\rho$,
	and the eigenvalue representation makes evaluating $\ell_c^{\text{cond}}(\rho)$ fast.
	When feasible, the eigenvalues $\{\lambda_j\}_{j=1}^N$ can be computed once
	and stored for fast evaluation of $\log|B(\rho)|$ and its derivative at any $\rho$.
	For very large $N$,
	the log absolute determinant can instead be evaluated by sparse LU factorization or trace approximations.
\end{remark}

\subsection{Summary of the Estimation Algorithm}

Algorithm~\ref{alg:profile_rho_description} summarizes the inner-outer loop:
the outer loop performs a conditional scalar maximization in $\rho$,
while the inner loop alternates between GLS updates of $(\phi,\beta)$
and ECM-style updates of the nuisance parameters $(\delta,A,F,\Sigma_\eta,D_\varepsilon)$ given the current innovations.
One possible initialization strategy is discussed in Section~\ref{sec:initialization}.

\input{algorithm_profile_rho_description.tex}

\section{Model Extensions}\label{sec:extensions}

The baseline formulation is deliberately parsimonious, but two additions are
especially relevant in applications: a lagged spatial outcome and a spatial
autoregressive error. This section shows that both can be accommodated without
changing the fixed-$T$ logic of the estimator.
It also introduces an unrestricted second-order QML that is useful when
the outcome and error filters are difficult to distinguish empirically.

\subsection{Lagged Spatial Outcome}\label{subsec:extension-lagged-spatial-outcome}

Add $Wy_{t-1}$ to the structural equation:
\begin{equation}\label{eq:extension-spatiotemporal}
	B(\rho)y_t
	=
	\phi y_{t-1}+\xi Wy_{t-1}+X_t\beta
	+\ones[N]\delta_t+\Lambda f_t+\varepsilon_t.
\end{equation}
Conditional on $y_0$, the Jacobian from $(y_1',...,y_T')'$ to the
innovations is block lower triangular, with $B(\rho)$ on every diagonal block
and $-(\phi I_N+\xi W)$ on the first subdiagonal. Hence its determinant remains
$|B(\rho)|^T$; the lagged spatial outcome changes the transformed innovation,
but contributes no additional determinant term. The criterion in
\eqref{eq:ell_N} therefore remains valid after replacing $u_t(\theta)$ by
\begin{equation*}
	u_t(\theta_\xi)
	=
	B(\rho)y_t-\phi y_{t-1}-\xi Wy_{t-1}-X_t\beta,
	\quad\text{where}\quad
	\theta_\xi=(\rho,\phi,\xi,\beta')'.
\end{equation*}
For fixed $\rho$ and nuisance parameters, the GLS block simply adds
$Wy_{t-1}$ to the design matrix. Because the $t=1$ equation contains $Wy_0$,
the projection controls should also contain the corresponding information; the
$L=1$ specification in \eqref{eq:correct_projection} does so directly.

The dynamic reduced form is stable when
\begin{equation}\label{eq:extension-spatiotemporal-stability}
	\varrho\!\left[B(\rho)^{-1}(\phi I_N+\xi W)\right]<1,
\end{equation}
where $\varrho(\cdot)$ denotes spectral radius. For asymptotic analysis, the
inverse of the associated unit-stacked dynamic-spatial operator must also have
uniformly bounded row and column sums. Under these conditions and the analogues
of Assumptions~\ref{ass:SERL}--\ref{ass:score_clt}, the
proof of Theorem~\ref{thm:qml_asymptotics} applies to $\theta_\xi$: the estimator
is consistent and $\sqrt N$-asymptotically normal, with the same profiled
sandwich form as in \eqref{eq:Vtheta_hat}. Appendix~\ref{sec:extension-scores}
gives the additional corrected score contribution and the modified reduced-form
response blocks.

\subsection{Spatial Errors and the Second-Order QML}
\label{subsec:extension-spatial-errors}

Consider an outcome equation with a spatial autoregressive error,
\begin{align}
	B(\rho)y_t
	&=
	\phi y_{t-1}+\xi Wy_{t-1}+X_t\beta+WX_t\gamma
	+\ones[N]\delta_t+v_t,
	\label{eq:extension-spatial-error-outcome}\\
	B(\psi)v_t&=\Lambda f_t+\varepsilon_t.
	\label{eq:extension-spatial-error-process}
\end{align}

\begin{remark}[Placement of Spatial Error Filter]
Specification~\eqref{eq:extension-spatial-error-process} treats the common
and idiosyncratic innovations as entering before spatial propagation, so
$B(\psi)^{-1}$ applies to both. An alternative specification places
$\Lambda f_t$ outside the separate spatial-error process. This alternative placement is studied by
\textcite{li-yang-2021}.
 We maintain
\eqref{eq:extension-spatial-error-process}. Our specification treats the composite disturbance $v_t$
 itself as following a spatial autoregressive process, in parallel with the spatial autoregressive specification for the outcome.
 In addition, under the alternative placement,
non-Gaussian robust inference generally requires Li and Yang's hybrid
higher-moment correction rather than inference based solely on the score rearrangement used here.
\end{remark}

Applying the error filter $B(\psi)$ to the outcome equation gives the product
$B(\psi)B(\rho)$. This product is a polynomial of degree two in $W$.
To introduce a common notation for its two roots, initially label
$\kappa_1:=\rho$ and $\kappa_2:=\psi$, and write
$\kappa:=(\kappa_1,\kappa_2)'$.
Because the two filter components commute,
their ordering is immaterial.
Write
\begin{equation}\label{eq:extension-second-order-filter}
	B_*(\kappa)y_t
	:=(I_N-\kappa_1W)(I_N-\kappa_2W)y_t
	=(I_N-\rho_1W-\rho_2W^2)y_t,
\end{equation}
where $\rho_1=\kappa_1+\kappa_2$ and
$\rho_2=-\kappa_1\kappa_2$. Here $\rho$ remains the outcome spatial
coefficient from the baseline model, whereas the subscripted coefficients
$(\rho_1,\rho_2)$ describe the product polynomial. We refer to the resulting
filter as \emph{second-order} because it contains both $W$ and $W^2$.

Filtering of the spatial error also generates $W^2y_{t-1}$ and $W^2X_t$ terms.
Instead of imposing at the outset the nonlinear restrictions that link all of these coefficients
to $\psi$, we consider the more general transformed equation
\begin{equation}\label{eq:extension-second-order-model}
	B_*(\kappa)y_t
	=
	(\phi I_N+\xi_1W+\xi_2W^2)y_{t-1}
	+X_t\beta_0+WX_t\beta_1+W^2X_t\beta_2
	+\ones[N]\delta_t+\Lambda f_t+\varepsilon_t.
\end{equation}
We call \eqref{eq:extension-second-order-model} the \emph{unrestricted
second-order model}: ``second-order'' refers to the inclusion of $W^2$, while
``unrestricted'' means that the coefficients on the $I_N$, $W$, and $W^2$
terms are estimated separately rather than constrained to originate from
\eqref{eq:extension-spatial-error-outcome}--\eqref{eq:extension-spatial-error-process}.

The structural spatial-error model is a restricted special case. Expanding
$B(\psi)$ times the right-hand side of
\eqref{eq:extension-spatial-error-outcome} gives
\begin{equation}\label{eq:extension-spatial-error-restrictions}
\begin{aligned}
	\{\kappa_1,\kappa_2\}&=\{\rho,\psi\},
	&\xi_1&=\xi-\psi\phi,
	&\xi_2&=-\psi \xi,\\
	\beta_0&=\beta,
	&\beta_1&=\gamma-\psi\beta,
	&\beta_2&=-\psi\gamma.
\end{aligned}
\end{equation}
 Moreover,
$B(\psi)\ones[N]\delta_t=(1-\psi)\ones[N]\delta_t$ because
$W\ones[N]=\ones[N]$; this rescaling is absorbed into the freely estimated time
effect in \eqref{eq:extension-second-order-model}. The factor innovation
$\Lambda f_t+\varepsilon_t$ therefore retains the baseline
low-rank-plus-diagonal covariance structure.

The normalized conditional criterion for \eqref{eq:extension-second-order-model}
is
\begin{equation}\label{eq:extension-second-order-likelihood}
	\ell_{N,2}(\alpha_2)
	=
	\frac{T}{N}\sum_{j=1}^2\log|B(\kappa_j)|
	-\frac{1}{2N}\sum_{i=1}^N
	\left[\log|\Sigma_u|+e_{2i}'\Sigma_u^{-1}e_{2i}\right],
\end{equation}
where $e_{2i}$ is obtained from the transformed innovation in
\eqref{eq:extension-second-order-model}. Conditional on $(\kappa_1,\kappa_2)$,
all dynamic and control coefficients enter the same GLS block as before; the
two log-determinant terms correspond to the two components of $B_*(\kappa)$.
Thus, incorporating the spatial-error filter replaces the baseline scalar
outer search over $\rho$ with a two-dimensional optimization over
$(\kappa_1,\kappa_2)$, while leaving the inner GLS profiling steps unchanged.

Dynamic stability requires
\begin{equation}\label{eq:extension-second-order-stability}
	\varrho\!\left[
	B_*(\kappa)^{-1}(\phi I_N+\xi_1W+\xi_2W^2)
	\right]<1.
\end{equation}
The criterion is unchanged when $\kappa_1$ and $\kappa_2$ are interchanged.
Without imposing
\eqref{eq:extension-spatial-error-restrictions}, it does not identify which
root is the outcome coefficient $\rho$ and which is the error coefficient
$\psi$; only the symmetric filter coefficients $(\rho_1,\rho_2)$ are invariant
to relabeling. This distinction is useful in short panels, where direct
separation of $\rho$ and $\psi$ may be weak.
For theoretical purposes,
we label the roots by imposing $\kappa_1<\kappa_2$ and assume that the true root gap $\kappa_{20}-\kappa_{10}$ is bounded away from zero.
This ordering makes the root parameterization locally unique but does not assign structural roles to the two roots.
The equal-root case is excluded because the transformation
$(\kappa_1,\kappa_2)\mapsto(\rho_1,\rho_2)$ has Jacobian determinant $\kappa_2-\kappa_1$
and is therefore singular when the roots coincide.

Suppose this root-separation condition holds,
the ordered roots lie in the interior of the admissible set,
and the bounded-inverse, conditional-moment, smoothness, nonsingular
Hessian, and score-CLT conditions used for
Theorem~\ref{thm:qml_asymptotics} hold for the enlarged operator. Then the QML
estimator of the ordered-root parameter vector
$(\kappa_1,\kappa_2,\phi,\xi_1,\xi_2,\beta_0',\beta_1',\beta_2')'$ is consistent
and $\sqrt N$-asymptotically normal, with the profiled sandwich covariance in
\eqref{eq:Vtheta_hat}. Inference for the reported symmetric coefficients
$(\rho_1,\rho_2)$ follows by the delta method. The spatial score decomposition
must use the second-order reduced form for every endogenous direction;
Appendix~\ref{sec:extension-scores} records this modification. The same argument
covers the restricted spatial-error QML when the restrictions in
\eqref{eq:extension-spatial-error-restrictions} are imposed and locally
identified.

\section{Simulation}\label{sec:simulation}

We report two complementary experiments based on the model in
\eqref{eq:ysystem}. The first evaluates estimation and inference over a broad
parameter and network grid when the initial outcome is generated directly from
the factor loadings. The second considers the more demanding case in which the
initial outcome is inherited from a long-running spatial process and compares
alternative spatial enrichment orders in the SERL specification.
All estimators and covariance matrices reported in the Monte Carlo experiments are computed using
the companion R package \href{https://github.com/jessekelighine/dspserl}{\texttt{dspserl}}.

\subsection{Baseline Estimation and Inference}
\label{subsec:simulation-baseline}

\subsubsection{Design}

The baseline experiment has $N$ units, $T$ time periods, $K$ regressors, and
$r$ common factors. The spatial weights matrix $W$ is a row-normalized
$k$-nearest-neighbor graph constructed from random locations, with $k=8$.
The network is redrawn in each Monte Carlo replication using a seed separate
from the remaining DGP draws and is fixed over time within that replication.
The main-text experiment stores $W$ as an ordinary $N\times N$ array;
we refer to this as a \emph{dense-matrix} design.
This contrasts with the \emph{sparse-matrix} design presented in Appendix~\ref{subsec:aux-sparse-logdet},
which stores $W$ as a sparse matrix and utilizes sparse-matrix methods for the log-determinant.

For each $t$, the regressors are i.i.d.\ $x_{it}\sim\mathcal N(0,I_K)$ and
$\bar x_i$ denotes their time average. We generate loadings directly as
$\lambda_i\sim\mathcal N(0,I_r)$ and set the initial condition by
\begin{equation*}
	y_{i0}=\ones[r]'\lambda_i+\nu_i,
\end{equation*}
where $\nu_i\sim\mathcal N(0,1)$ is independent of $\lambda_i$.
Thus $y_{i0}$ is informative about the conditional mean of $\lambda_i$, while
$\bar x_i$ remains independent of the loadings.
Joint Gaussianity gives
\begin{equation*}
	\E(\lambda_i\mid y_{i0})
	=
	\frac{\ones[r]}{r+1}y_{i0},
	\qquad
	\Var(\lambda_i\mid y_{i0})
	=
	I_r-\frac{\ones[r]\ones[r]'}{r+1}.
\end{equation*}
Time effects $\delta_t$ follow a deterministic cycle,
idiosyncratic variances are time-varying,
and common shocks are generated as $f_t\sim\mathcal N(0,I_r)$.
We set $z_i=(\bar x_i',y_{i0})'$ in the SERL specification \eqref{eq:proj}. In
this DGP,  population projection parameters are
$A
	=
	\left[0_{r\times K},\frac{\ones[r]}{r+1}\right],$ and
	$\Sigma_\eta
	=
	I_r-\frac{\ones[r]\ones[r]'}{r+1}.
$
Both are nuisance parameters implied by the DGP.

Given $(W,\rho)$, we solve for $y_t$ using the spatial filter
$B(\rho)=I_N-\rho W$:
\begin{equation}
	y_t=B(\rho)^{-1}\big(\phi y_{t-1}+X_t\beta+\ones[N]\delta_t+\Lambda f_t+\varepsilon_t\big),
	\qquad t=1,...,T,
\end{equation}
with $y_{i0}$ initialized as above and
$\varepsilon_t\sim\mathcal N(0,\sigma_t^2 I_N)$.
The implementation factors $B(\rho)$ once by LU decomposition and reuses that
factorization to solve the system in every period; it does not form
$B(\rho)^{-1}$ explicitly.

We fix $r=2$ and $K=2$ with $\beta=(0.8,-0.3)$.
To hold persistence associated with the largest eigenvalue of the row-normalized
$W$ constant across values of $\rho$, we set
\begin{equation}
	\phi=0.5(1-\rho).
\end{equation}
Because the largest eigenvalue of $W$ is one,
the corresponding dynamic coefficient is $\phi/(1-\rho)=0.5$
in every design.
Accordingly, $\rho+\phi$ equals $0.60$, $0.75$, or $0.90$
and remains below one.
The time effects and idiosyncratic variances follow deterministic schedules,
\begin{equation}
	\delta_t=0.3\sin\left(2\pi \frac{t}{T}\right),
	\qquad
	\sigma_t^2=5+5 \frac{t}{T}.
\end{equation}
Common shocks are $f_t\sim\mathcal N(0,I_r)$.

For the bias and dispersion experiment, we vary $(N,T,\rho)$ according to the grid
\begin{equation}
	\begin{aligned}
		T&\in\{5,10,20\}, & N&\in\{500,1000\},\\
		\rho&\in\{0.2,0.5,0.8\},
		& \phi&=0.5(1-\rho).
	\end{aligned}
\end{equation}
for a total of $18$ parameter configurations.
For each configuration we estimate $(\rho,\phi,\beta)$ by the block-coordinate QML procedure.
We evaluate $\log|B(\rho)|$ using the eigenvalues of $W$.
Appendix~\ref{subsec:aux-sparse-logdet} repeats these $18$ configurations
using a 30-term Hutchinson trace approximation based on 25 Rademacher vectors.
We also compare this approximation with a richer trace calculation
and exact sparse LU factorization on identical panels.

Both experiments initialize the spatial search with the same truth-independent
nine-point grid over $[-0.95,0.95]$.
The inner loop permits at most 200 iterations with tolerance $10^{-8}$,
and the scalar optimizer for $\rho$ uses tolerance $10^{-10}$.
Each design requests 1000 Monte Carlo replications.
For this experiment, the outer block-coordinate loop permits at most 100
iterations.
A fit enters the bias and dispersion calculations if it either converges
strictly or has a converged inner loop and a final $\rho$ stationarity gap no
larger than $2\times10^{-6}$.
Between 995 and 1000 estimates per design meet this criterion.
The coverage experiment uses the same parameter grid.
Each design requests 1000 Monte Carlo replications.
The outer block-coordinate loop permits at most 300 iterations and requires
the strict $10^{-8}$ convergence criterion.
For each strictly converged fit,
we construct nominal $95\%$ intervals using the martingale-sample covariance estimator.
We report the uncorrected version as a benchmark and the pair-only HC1 correction in \eqref{eq:Omega_martingale_pair_hc1} as the preferred estimator.

\subsubsection{Results}

Table~\ref{tab:sim-bias-sd-dense} reports bias, multiplied by $100$,
and the standard deviation (SD) for $\rho$, $\phi$, and $\beta$
across the 18 dense-matrix designs.
The biases are generally small,
although $\hat\rho$ has a systematic downward bias.
In the original parameter units,
its average bias is $-0.0032$ and its largest absolute bias is about $0.0051$.
The SD generally decreases with larger $N$ and $T$,
and $\rho$ is estimated more precisely when its true value is larger,
consistent with stronger spatial dependence providing more information about
$\rho$. In particular, the SD of $\hat\rho$ decreases with $N$ and $T$ when
$\rho=0.8$.

\input{simulation_bias_sd.tex}

Table~\ref{tab:sim-coverage-dense} reports coverage for the dense-matrix designs without a finite-sample correction
and with the preferred pair-only HC1 correction in \eqref{eq:Omega_martingale_pair_hc1}.
Coverages for the regression slopes and $\rho$ are close to the nominal level,
while coverage for $\phi$ is lowest when $T=5$.
Appendix~\ref{subsec:aux-symmetric} holds $\rho+\phi$ fixed while varying its spatial and dynamic components.
The experiment confirms that first-order Wald coverage for $\phi$ can be inaccurate when each unit contributes only five transitions,
whereas increasing the time dimension to $T=20$ eliminates under-coverage.
We view this as an understandable short-panel limitation rather than a failure of the large-$N$ theory.
The pair-only HC1 correction modestly improves coverage for $\rho$ without materially changing the other results,
so we use it as our preferred finite-sample covariance estimator and retain the uncorrected columns as a benchmark.

We also conduct the complete experiment under the sparse-matrix design.
Appendix~\ref{subsec:aux-sparse-logdet} reports the corresponding bias, dispersion, and coverage results.
The parameters remain well estimated and dispersion is similar to that reported in Table~\ref{tab:sim-bias-sd-dense},
but coverage for $\rho$ is generally lower.
The diagnostic using identical panels shows that much of this difference is attributable to approximation error
in the log determinant.

\input{simulation_coverage.tex}

\subsection{Long-Running Initial Conditions and SERL Enrichment}
\label{subsec:simulation-initial-condition-projection}

The preceding experiments generate $y_0$ directly from the factor loadings.
We now consider the more demanding case in which $y_0$ is itself an outcome
from a long-running dynamic spatial process. This design evaluates the spatial
enrichment in \eqref{eq:correct_projection}. In particular, it asks whether a
small number of spatial transformations of the initial condition and regressor
summaries can capture the information about the loadings that propagates
through the spatial multiplier.

\subsubsection{Design}

For each replication, we draw $\lambda_i\sim\mathcal N(0,I_2)$ and generate a
stationary regressor component according to
\begin{equation*}
	h_{is}=0.8h_{i,s-1}+\sqrt{1-0.8^2}\,v_{is},
	\qquad v_{is}\sim\mathcal N(0,I_2).
\end{equation*}
The observed regressors are
\begin{equation*}
	x_{is}=0.5\lambda_i+\sqrt{1-0.5^2}\,h_{is},
\end{equation*}
so that both the initial outcome and the regressors are informative about the
individual loadings. This makes the projection exercise more demanding than a
design in which the loadings are related only to $y_0$.
The regressor process is initialized from its stationary distribution.
Starting from zero,
we run the outcome process for 100 presample periods and take its terminal
presample value as $y_0$.
We then retain the next $T$ observations as the estimation sample.
The structural parameters are $\phi=0.3$ and $\beta=(0.8,-0.3)'$,
the idiosyncratic innovations have unit variance,
and the time effects are zero.
There are two common factors whose realizations are independently standard
normal over the presample and sample periods.
We use a sparse row-normalized $k$-nearest-neighbor network with $k=8$ and
consider
\begin{equation*}
	T\in\{5,10\},\qquad N\in\{500,1000\},\qquad
	\rho\in\{0.2,0.5\}.
\end{equation*}
Unlike the experiments initialized directly at $y_0$, a long-running process
must satisfy a stability restriction during the presample recursion. For a
row-normalized $W$, the leading persistence is $\phi/(1-\rho)$,
which equals $0.375$ and $0.6$ for $\rho=0.2$ and $\rho=0.5$, respectively.
For each $N$, the network is held fixed across replications and shared across
the corresponding $(T,\rho)$ designs.
For each $T$, the factor path is held fixed across replications and shared
across the corresponding $(N,\rho)$ designs.
Each of the eight designs contains 500 Monte Carlo panels.

Let $Q_X$ contain the unit-specific time averages of the two observed
regressors. The full projection of order $L$ uses
\begin{equation*}
	z_i^{(L)}=\big[(Q_X)_i',(WQ_X)_i',...,(W^LQ_X)_i',
	(y_0)_i,(Wy_0)_i,...,(W^Ly_0)_i\big]',
\end{equation*}
and we consider $L=0,1,2,3$. These specifications contain, respectively,
$3$, $6$, $9$, and $12$ controls. We also consider a parsimonious $y$-only
$L=2$ specification that includes $Q_X$, $y_0$, $Wy_0$, and $W^2y_0$, but
does not include spatial transformations of $Q_X$. The same simulated panel
is used for all five projection specifications, so comparisons across them are
paired.
Each control is centered and standardized across units before estimation;
this rescaling does not change the span of the projection.
The log determinant is evaluated exactly by sparse LU factorization.
This avoids confounding the projection comparison with the stochastic approximation error documented in Appendix~\ref{subsec:aux-sparse-logdet}.
Inference uses the martingale-sample covariance estimator with the preferred pair-only HC1 correction in \eqref{eq:Omega_martingale_pair_hc1}.

\subsubsection{Results}

Table~\ref{tab:sim-initial-projection-summary} summarizes projection quality
and estimator performance across the eight designs.
Under
the conventional $L=0$ projection, the average maximum absolute correlation
between the loading residual and the first omitted spatial transformation of
$y_0$ is $0.167$, while the corresponding diagnostic for the regressor
summaries is $0.116$. Adding the first spatial lag reduces these diagnostics to
$0.018$ and $0.032$, respectively.
It also reduces the RMSE of $\hat\rho$ from $0.0281$ to $0.0204$ and the RMSE of $\hat\phi$ from $0.0238$ to $0.0181$.
Average coverage rises from $0.786$ to $0.944$ for $\rho$ and from $0.852$ to $0.939$ for $\phi$.
Coverage of $\beta_1$ rises from $0.910$ to $0.931$, while coverage of $\beta_2$ rises from $0.941$ to $0.947$.

Higher-order spatial transformations continue to reduce the omitted-variable
diagnostics, but they provide no further improvement in estimation. Relative
to full $L=1$, the full $L=2$ and $L=3$ projections have slightly larger bias
and RMSE for $\rho$ and $\phi$, consistent with a modest finite-sample cost
from estimating additional projection coefficients. The $y$-only $L=2$
projection performs similarly to full $L=1$ and has slightly smaller aggregate
bias and RMSE. It nevertheless leaves more residual correlation with omitted
spatial transformations of $Q_X$ than the full projection. Thus, the spatial
transformations of $y_0$ account for most of the estimation gain in this DGP,
while the spatial transformations of $Q_X$ improve the quality of the loading
projection without materially changing the structural estimates.

Table~\ref{tab:sim-initial-projection-coverage} reports design-specific
coverage. The benefit of the richer projection is largest when spatial
dependence is strong. Averaging over the four designs with $\rho=0.5$, moving
from full $L=0$ to full $L=1$ raises coverage from $0.665$ to $0.944$ for
$\rho$ and from $0.768$ to $0.937$ for $\phi$.
For $(T,N,\rho)=(5,500,0.5)$,
the corresponding changes are from $0.642$ to $0.918$ and from $0.750$ to $0.936$.
For the most difficult design,
$(T,N,\rho)=(5,1000,0.5)$, coverage rises from $0.312$ to $0.935$ for $\rho$
and from $0.496$ to $0.910$ for $\phi$.
Overall, moving from $L=0$ to $L=1$ substantially improves coverage
and supports the practical value of low-order SERL enrichment,
particularly when spatial dependence is stronger.

\input{simulation_initial_condition_projection.tex}

\section{Application: Female Labor-Force Participation}\label{sec:application}

We apply the fixed-$T$ estimator to the county-level female labor-force
participation data studied by \textcite{fogli-veldkamp-2011} and reexamined by
\textcite{tziolas-elhorst-2023}. The data contain 3,074 US counties observed at
decennial intervals from 1940 through 2000. Female labor-force participation
rose from 18.49 percent to 54.69 percent over this period. The observed controls
are the urban and rural-farm population shares, average education, population
density, and manufacturing wages. The economic question is whether changes in
participation diffuse across neighboring counties, potentially with a
one-decade lag.
We estimate all empirical specifications using the companion R package \href{https://github.com/jessekelighine/dspserl}{\texttt{dspserl}}.

\subsection{Empirical Specification and Samples}

Let $y_t$ be the $N\times1$ vector of county female labor-force participation
rates in decade $t$, and let $W$ be a row-normalized spatial-weights matrix, so
that $Wy_t$ contains weighted averages of neighboring counties' participation
rates. Writing $X_t$ for the non-spatial controls, we organize the application
around the encompassing outcome equation
\begin{equation}\label{eq:application-model}
	y_t
	=
	\rho Wy_t+\phi y_{t-1}+\xi Wy_{t-1}
	+X_t\beta+WX_t\gamma
	+\ones[N]\delta_t+\Lambda f_t+\varepsilon_t.
\end{equation}
The vector $\gamma$ contains the spatial Durbin coefficients on the spatial
lags of the controls. When a spatially lagged control is excluded, its
corresponding element of $\gamma$ is set to zero. The spatially lagged
controls can be stacked with $X_t$ and therefore enter the regressor block of
the baseline model.
The coefficient $\rho$ captures contemporaneous outcome dependence, whereas
$\xi$ captures dependence on neighboring counties' outcomes one decade
earlier. We first impose $\xi=0$, yielding the baseline model analyzed above,
and then estimate $\xi$ freely using the extension in
Section~\ref{subsec:extension-lagged-spatial-outcome}.

The lagged-neighbor term is present in the empirical model of
\textcite{fogli-veldkamp-2011}, who omit the contemporaneous outcome lag.
\textcite{tziolas-elhorst-2023} place both $Wy_t$ and $Wy_{t-1}$ in a broader
dynamic spatial Durbin model that also contains a spatial autoregressive error.
In this empirical application, the common shocks $f_t$ may represent nationwide
developments affecting the benefits and costs of female employment, including
changes in wages and discrimination,
the diffusion of household appliances,
the declining physical demands of jobs,
improved control over fertility,
and changing social norms.

Following \textcite{tziolas-elhorst-2023}, we consider three covariate and
spatial-weight specifications. The first includes wages and spatial lags of all
five controls. Requiring a balanced panel leaves 1,569 counties, and $W$ is the
row-normalized six-nearest-neighbor matrix constructed from their coordinates.
The second excludes wages but retains spatial lags of the other four controls.
The third also removes the spatially lagged controls. The last two
specifications contain 3,066 counties and use the original binary-contiguity
matrix after subsetting and row normalization. Because the 1940 outcome is the
initial condition, each estimation sample has six periods, covering
1950--2000.

All main application tables use two common factors.
This parsimonious choice is natural with only six estimation periods and matches the two-factor specification of \textcite{tziolas-elhorst-2023}.
Appendix~\ref{sec:application-factor-rank} reports three-factor estimates of the second-order model as a robustness check.
With six estimation periods, $r=3$ is the largest factor rank
for which the covariance component is estimable: under the factor
normalization, $r=3$ uses all 21 distinct elements of a $6\times6$ covariance
matrix, whereas $r=4$ would require 24 covariance parameters and is therefore
not feasible. The projection sets $L=1$ in
\eqref{eq:correct_projection}: $z_i$ contains time averages of the non-spatial
controls, their first spatial lags, $y_{i,1940}$, and $(Wy_{1940})_i$. The
controls are centered and standardized. We evaluate the spatial Jacobian
$\log|B(\rho)|$ exactly by sparse LU factorization.
This avoids the additional numerical dispersion from the faster trace approximation documented in Appendix~\ref{subsec:aux-sparse-logdet}.
All standard errors for our application estimates use the martingale-sample covariance estimator with the pair-only HC1 correction in \eqref{eq:Omega_martingale_pair_hc1}.

\subsection{Baseline Model}

The baseline sets $\xi=0$ in \eqref{eq:application-model}. It is therefore an
instance of \eqref{eq:ysystem}, with included spatially lagged controls treated
as additional observed regressors, and is covered by
Theorem~\ref{thm:qml_asymptotics}. Because $W\ones[N]=\ones[N]$, a spatially
uniform component evolves with dynamic coefficient $\phi/(1-\rho)$, and its
long-run multiplier has denominator $1-\rho-\phi$. We therefore call
$s:=\rho+\phi$ the \emph{stability sum} in the baseline model. When $\rho<1$
and $\phi\geq0$, stability of this component is equivalent to $s<1$; full
dynamic stability requires the spectral-radius condition in
\eqref{eq:extension-spatiotemporal-stability}.

To compare the estimated covariate effects with the historical change in
participation, consider a permanent common increase $\Delta x_k$ in control
$k$, holding the other controls, time effects, and common factors fixed. The
change between the initial and new steady states satisfies
\begin{equation*}
	\left[(1-\phi)I_N-\rho W\right]\Delta y_k^*
	=
	(\beta_k I_N+\gamma_k W)\ones[N]\Delta x_k.
\end{equation*}
Thus, premultiplying by $N^{-1}\ones[N]'$ gives the implied change in average
participation. Let $\Delta x_k$ equal the observed 1940--2000 change in the
sample mean of control $k$, and let
$\Delta\mathrm{LFP}=54.69-18.49=36.20$ denote the corresponding percentage-point
increase in mean female labor-force participation. Dividing the predicted
average change by $\Delta\mathrm{LFP}$ gives the reported contribution:
\begin{equation*}
	100\times\frac{1}{N}\ones[N]'
	\left[(1-\phi)I_N-\rho W\right]^{-1}
	(\beta_k I_N+\gamma_k W)\ones[N]
	\frac{\Delta x_k}{\Delta\mathrm{LFP}}.
\end{equation*}
The changes $\Delta x_k$ are those reported by
\textcite{tziolas-elhorst-2023}.

\input{application_female_lfp_baseline.tex}

Table~\ref{tab:application-female-lfp-baseline} shows that the contemporaneous spatial coefficient ranges from $0.479$ to $0.521$, while the own lag ranges from $0.364$ to $0.467$.
The resulting stability sums are $0.893$, $0.946$, and $0.870$, relatively large but below one.
The observed controls account for $116.2\%$, $73.3\%$, and $108.4\%$ of the historical increase across the three columns.
In the specification without wages or $WX$, the fall in farm population and increase in education produce the largest contributions.
The contribution sum of $108.4\%$ means that, holding the time effects and common factors fixed, the observed covariate changes imply a long-run increase in female labor-force participation of about $39.2$ percentage points, exceeding the observed increase of $36.2$ percentage points.

\subsection{Extension with a Lagged Spatial Outcome}

We next estimate \eqref{eq:application-model} with $\xi$ unrestricted, thereby
combining the delayed-neighbor channel of \textcite{fogli-veldkamp-2011} with
the contemporaneous outcome lag in \textcite{tziolas-elhorst-2023}.
Section~\ref{subsec:extension-lagged-spatial-outcome} extends the likelihood,
estimator, and asymptotic result, while Appendix~\ref{sec:extension-scores}
gives the corresponding spatial score decomposition. Thus, $Wy_{t-1}$ is not
treated as an ordinary external regressor.

With $\xi$ unrestricted, the stability sum becomes
$s:=\rho+\phi+\xi$. A spatially uniform component now evolves with dynamic
coefficient $(\phi+\xi)/(1-\rho)$, and its long-run multiplier has denominator
$1-s$. When $\rho<1$ and $\phi+\xi\geq0$, stability of this component is
equivalent to $s<1$; full stability continues to require the spectral-radius
condition in \eqref{eq:extension-spatiotemporal-stability}.

For comparison with the published decomposition, the contribution of control
$k$ is now
\begin{equation*}
	100\times\frac{1}{N}\ones[N]'
	\left[(1-\phi)I_N-(\rho+\xi)W\right]^{-1}
	(\beta_k I_N+\gamma_k W)\ones[N]
	\frac{\Delta x_k}{\Delta\mathrm{LFP}}.
\end{equation*}
For the specification without wages or spatially lagged controls, we also
report $L=0$ and $L=2$ estimates.

\input{application_female_lfp.tex}
\input{application_projection_robustness.tex}

Table~\ref{tab:application-female-lfp-extension} compares the proposed QMLE
with the corresponding bias-corrected QML estimates of
\textcite{tziolas-elhorst-2023}.
The estimated coefficient $\xi$ is negative in all three specifications, at $-0.073$, $-0.150$, and $-0.025$, while $\widehat\rho$ ranges only from $0.509$ to $0.530$.
The raw $\xi$ is not, however, the reduced-form effect corresponding to the lagged-neighbor coefficient in \textcite{fogli-veldkamp-2011}.
The dynamic
transition satisfies
\begin{equation*}
	B(\rho)^{-1}(\phi I_N+\xi W)
	=
	\phi I_N+(\xi+\rho\phi)W
	+\rho(\xi+\rho\phi)W^2+\cdots.
\end{equation*}
The leading neighbor-lag effect $\xi+\rho\phi$ is positive in every
specification, taking values $0.158$, $0.152$, and $0.173$.
Table~\ref{tab:application-projection-robustness} varies the SERL enrichment order
in the specification without wages or $WX$. Moving from $L=1$ to $L=2$ changes
the allocation between $\phi$ and $\xi$, but leaves $\rho$, the stability sum,
and the leading neighbor-lag effect relatively similar. By contrast, $L=0$
materially changes the dynamic coefficients.

\subsection{Second-Order QML}

The first-order estimates deliberately omit a spatial-error filter. To assess
whether this omission drives the covariate decomposition, our preferred
specification is the unrestricted second-order QML in
\eqref{eq:extension-second-order-model}. In the application it takes the form
\begin{equation*}
\begin{split}
	B_*(\kappa)y_t
	={}&(\phi I_N+\xi_1W+\xi_2W^2)y_{t-1}\\
	&+X_t\beta_0+WX_t\beta_1+W^2X_t\beta_2
	+\ones[N]\delta_t+\Lambda f_t+\varepsilon_t.
\end{split}
\end{equation*}
For the first two empirical specifications, the control polynomial includes
$X_t$, $WX_t$, and $W^2X_t$. For the specification that originally excludes
$WX_t$, it includes $X_t$ and $WX_t$: the latter is required because filtering
the original $X_t\beta$ term by $B(\psi)$ generates a first spatial lag. The
model contains no separately labeled spatial-error coefficient. Instead, it
allows the second-order filter and the transformed lag and control coefficients
implied by \eqref{eq:extension-spatial-error-restrictions} without imposing
those restrictions.

Let
\begin{equation*}
	Q_2
	=(1-\phi)I_N-(\rho_1+\xi_1)W-(\rho_2+\xi_2)W^2.
\end{equation*}
The contribution of control $k$ is computed as
\begin{equation*}
	100\times\frac{1}{N}\ones[N]'Q_2^{-1}
	(\beta_{0k}I_N+\beta_{1k}W+\beta_{2k}W^2)\ones[N]
	\frac{\Delta x_k}{\Delta\mathrm{LFP}},
\end{equation*}
with an omitted coefficient set to zero. We report the symmetric filter
coefficients $(\rho_1,\rho_2)$, not a structural assignment of the two roots,
and use the corrected covariance construction in
Appendix~\ref{sec:extension-scores}.

Because $W\ones[N]=\ones[N]$,
the reduced-form transition maps a spatially uniform change into another
uniform change.
We summarize its one-period persistence by the dynamic coefficient for a
uniform change,
\begin{equation*}
	d_{\mathrm{unif}}
	:=
	\frac{\phi+\xi_1+\xi_2}{(1-\kappa_1)(1-\kappa_2)}
	=
	\frac{\phi+\xi_1+\xi_2}{1-\rho_1-\rho_2}.
\end{equation*}
A value below one means that a spatially uniform change contracts from one
period to the next;
stability across all eigenvalues of $W$ is governed by
\eqref{eq:extension-second-order-stability}.

\input{application_female_lfp_second_order.tex}

Table~\ref{tab:application-second-order} reports the preferred estimates. The
two symmetric filter coefficients are stable across specifications:
$\widehat\rho_1$ ranges from $0.199$ to $0.219$, and
$\widehat\rho_2$ from $0.453$ to $0.487$.
The dynamic coefficient for a spatially uniform change ranges from $0.901$ to $0.930$, below one in every case.
The coefficient on education remains positive and precisely estimated
after the second-order filter and the additional spatial transformations are
included.
Its estimates are $0.527$, $0.646$, and $0.536$, with standard errors $0.084$, $0.072$, and $0.065$.
The corresponding contributions are $82.4\%$, $73.7\%$, and $71.8\%$, with delta-method standard errors of $38.6$, $37.2$, and $26.2$ percentage points.
Educational gains provide the largest positive contribution in all three specifications.
The decline in the farm population also produces a positive estimated contribution, but the contribution is not statistically significant.

\subsection{Discussion and Comparison with Earlier Estimates}

\textcite{fogli-veldkamp-2011} omit the contemporaneous spatial lag, corresponding to $\rho=0$ in our notation.
Their OLS estimates of the own and neighbor lags are $0.664$ and $0.195$, respectively;
their preferred GMM estimates are $0.916$ and $0.570$, whose sum exceeds one.
In our baseline model, contemporaneous feedback induces a first-order coefficient
$\rho\phi$ on $Wy_{t-1}$, ranging from $0.184$ to $0.224$. In the extension, the corresponding coefficient
 $\xi+\rho\phi$ ranges from $0.152$ to $0.173$.
 Thus, our reduced-form neighbor effect is close to their OLS estimate of $0.195$ on $Wy_{t-1}$.

\textcite{tziolas-elhorst-2023}, by contrast, allocate most spatial propagation
to $Wy_{t-1}$. Across their specifications, its coefficient is $0.504$,
$0.381$, and $0.454$, while the contemporaneous coefficient is $-0.075$,
$-0.340$, and $0.087$. They simultaneously estimate spatial-error coefficients
of $0.595$, $0.830$, and $0.432$. This is a specification
difference.  Our model sets the spatial-error coefficient to zero.
This difference also helps explain why our contemporaneous coefficient remains
near $0.5$. Because the same $W$ enters the outcome and error filters in
\textcite{tziolas-elhorst-2023}, the two components can be difficult to
distinguish with six periods. Denoting their outcome coefficient by $\rho$ and
their spatial-error coefficient by $\psi$, the two filters satisfy
\begin{equation}\label{eq:application-two-spatial-filters}
	(I_N-\psi W)(I_N-\rho W)
	=I_N-(\rho+\psi)W+\rho\psi W^2.
\end{equation}
Thus, a model with one first-order spatial filter may load part of an omitted
error filter onto its outcome coefficient. Consistent with this interpretation,
subtracting the sum of their estimated outcome and error coefficients from our
estimated spatial coefficient gives $0.009$, $0.018$, and $-0.011$ in the
three specifications. These are not parameter equivalences: the product also
induces the $W^2$ term in
\eqref{eq:application-two-spatial-filters} and spatially transforms the dynamic
and covariate components.
Moreover, a direct two-filter diagnostic reaches distinct local likelihood
optima from different starting values in every specification at both factor ranks.
We therefore do not treat the two roots as separately identified
outcome and error coefficients.

The central substantive difference concerns education, a covariate for which
one would expect a meaningful relationship with female labor-force
participation. The OLS estimate reported by
\textcite{fogli-veldkamp-2011} is $0.643$ with a standard error of $0.036$,
close to the range from $0.527$ to $0.646$ across our preferred second-order
specifications in Table~\ref{tab:application-second-order}. Their three-lag difference-GMM
estimate, $-0.975$ with a standard error of $0.587$, is instead imprecise,
indicating that their education estimate is sensitive to the estimator. The
corresponding estimates of \textcite{tziolas-elhorst-2023} are $0.017$, $0.060$,
and $0.004$, all statistically insignificant. Our preferred estimates assign education
contributions of $82.4\%$, $73.7\%$, and $71.8\%$, compared with $-0.5\%$,
$1.9\%$, and $0.2\%$ in their corresponding results reported in
Table~\ref{tab:application-female-lfp-extension}.
Some of education's association could be mediated through lagged outcomes or
absorbed by the spatial-error process. Nevertheless, the preferred estimates
give education a positive own coefficient and a substantial long-run
contribution after admitting the polynomial terms generated by such a filter.
The education discrepancy therefore is not explained solely by omission of the
spatial-error filter. It may also reflect differences in the treatment of common
shocks, initial conditions, and unit heterogeneity. By projecting county-specific
factor loadings on observed histories rather than estimating each loading
separately, our fixed-$T$ procedure may allocate persistent variation between
education and heterogeneous exposure to common shocks differently.
Taken together, our estimates are more consistent with the literature documenting an
important positive relationship between women's education and
labor-force participation; see, e.g., \textcite{goldin-2006}.

\section{Conclusion}\label{sec:conclusion}

This paper develops a likelihood-based framework for dynamic spatial panels
with common shocks in the fixed-$T$, large-$N$ regime.
The SERL specification and variance-components representation avoid estimating unit-specific
incidental parameters, while the conditional likelihood retains the spatial
Jacobian and the dynamic recursion without a parameter-dependent temporal
Jacobian. We also propose an efficient block-coordinate profile
algorithm, which exploits the structure of the likelihood to replace a
high-dimensional joint optimization with closed-form and GLS updates, low-rank
covariance updates, and a scalar search over the spatial coefficient. We
establish large-$N$ distribution theory and implement spatially corrected
sandwich inference using the score decomposition of
\textcite{li-yang-2021}. A lagged spatial outcome fits into the existing GLS
block, and a spatial-error filter preserves the profiling structure while
requiring a two-dimensional outer optimization. The baseline simulations show
small estimation error and generally near-nominal coverage.

In the empirical application, the lagged-neighbor effect is close to the original
OLS estimate of \textcite{fogli-veldkamp-2011}. Under the preferred second-order
specification, education remains positive and economically important after
allowing the transformations generated by a spatial-error filter; its
coefficient is also close to their positive OLS estimate.
These results accord with the literature documenting an important positive
relationship between women's education and labor-force participation.

\printbibliography[heading=bibintoc,title={References}]

\clearpage
\appendix

\begin{center}
\textbf{\large Appendix: Technical Details and Additional Results.}
\end{center}

\section{Efficient Evaluation: Woodbury and Determinant Lemma}\label{sec:woodbury}

Since $\Sigma_u=D_\varepsilon + F\Sigma_\eta F'$ is ``diagonal + rank-$r$'',
we have the Woodbury identity
\begin{align*}
\Sigma_u^{-1}
&=
D_\varepsilon^{-1}
-
D_\varepsilon^{-1}F\big(\Sigma_\eta^{-1}+F'D_\varepsilon^{-1}F\big)^{-1}F'D_\varepsilon^{-1},
\\
|\Sigma_u|
&=
|D_\varepsilon|\;|\Sigma_\eta|\;\big|\Sigma_\eta^{-1}+F'D_\varepsilon^{-1}F\big|
\end{align*}
These identities make likelihood evaluation feasible for
fixed $T$ and small $r$, even when $N$ is large.

\section{Concentrating Out Time Effects and the Projection Matrix}\label{sec:profile_delta_A}

Given $(F,\Sigma_u)$, the mean component $\delta+FAz_i$ enters linearly.
Let $Z:=(z_1,...,z_N)\in\R^{q\times N}$ and $U:=(u_1,...,u_N)\in\R^{T\times N}$.
Then
\begin{equation*}
\E[U\mid\mathcal C_N]= \delta\,\ones[N]' + FAZ.
\end{equation*}
Let $M:=\Sigma_u^{-1}$. Up to an additive constant, the conditional (negative) log-likelihood contribution from the quadratic form is
\begin{equation*}
Q(\delta,A;F)
:=
\tr\!\left(M\,(U-\delta\,\ones[N]' - FAZ)(U-\delta\,\ones[N]' - FAZ)'\right).
\end{equation*}

\paragraph{Step 1: concentrate out \texorpdfstring{$\delta$}{delta}}

For fixed $(F,A)$, differentiating $Q(\delta,A;F)$ with respect to $\delta$ yields
\begin{equation*}
\sum_{i=1}^N M\,(u_i-\delta-FAz_i)=0
\quad\Longrightarrow\quad
\widehat \delta(F,A)=\bar u - FA\,\bar z,
\end{equation*}
where
\begin{equation*}
\bar u := \frac{1}{N}U\ones[N]\in\R^T,
\qquad
\bar z := \frac{1}{N}Z\ones[N]\in\R^q.
\end{equation*}
Define the cross-sectionally centered matrices
\begin{equation*}
U_c := U-\bar u\,\ones[N]'\in\R^{T\times N},
\qquad
Z_c := Z-\bar z\,\ones[N]'\in\R^{q\times N}.
\end{equation*}
Substituting $\widehat\delta(F,A)$ back into the mean gives the centered representation
\begin{equation}\label{eq:centered_residual}
U-\widehat \delta(F,A)\,\ones[N]' - FAZ
=
U_c - FAZ_c.
\end{equation}
Equivalently, at the unit level, the likelihood depends on $(u_i-\bar u)$ and $(z_i-\bar z)$ rather than on $(u_i,z_i)$.

\paragraph{Step 2: concentrate out \texorpdfstring{$A$}{A} (given \texorpdfstring{$F$}{F})}

Using \eqref{eq:centered_residual}, the GLS (or QML) estimator of $A$ given $F$ solves
\begin{equation}\label{eq:Ahat_def}
\widehat A(F)
=
\arg\min_{A\in\R^{r\times q}}
\tr\!\left(M\,(U_c-FAZ_c)(U_c-FAZ_c)'\right).
\end{equation}
\paragraph{Normal equations.}
Differentiating the objective in \eqref{eq:Ahat_def} yields
\begin{equation*}
F' M (U_c-FAZ_c) Z_c' = 0,
\end{equation*}
so (whenever the indicated matrices are invertible)
\begin{equation*}
\widehat A(F)
=
\big(F' M F\big)^{-1}\,F' M U_c\,Z_c'\,\big(Z_cZ_c'\big)^{-1}.
\end{equation*}
Finally, with $\widehat A(F)$ inserted, the concentrated time effects are
\begin{equation*}
\widehat \delta(F)=\bar u - F\widehat A(F)\,\bar z.
\end{equation*}
Plugging $\widehat\delta(F)$ and $\widehat A(F)$ into \eqref{eq:lik} yields a working likelihood concentrated in
\begin{equation*}
(\rho,\phi,\beta,F,\Sigma_\eta,D_\varepsilon),
\end{equation*}
with $(\delta,A)$ eliminated analytically.

\section{Initializations for Estimation Algorithm}\label{sec:initialization}

The estimation algorithm depends on initial values for the inner loop, which can affect
convergence in non-convex problems. Here we provide one possible initialization procedure
for the inner loop.
Assume the initial condition $y_0=(y_{10},...,y_{N0})'$ is observed. Set the spatial parameter to
$\rho^{(0)}=0$ (so $B(\rho^{(0)})=I_N$). For each $t=1,...,T$, run the cross-sectional OLS regression
\begin{equation*}
y_t = \ones[N] \delta_t + \phi_t\,y_{t-1} + X_t \beta_t + \text{residual}_t,
\end{equation*}
where $y_{t-1}$ is observed for all $t$ (in particular, $y_{0}$ is observed). Obtain
$(\widehat\delta_t,\widehat\phi_t,\widehat\beta_t)$ and stack
\begin{equation*}
\widehat\vartheta_t := (\widehat\phi_t,\widehat\beta_t')' \in \mathbb R^{1+K},
\qquad t=1,...,T.
\end{equation*}
Set the initial dynamic and slope coefficients by time-averaging:
\begin{equation*}
\vartheta^{(0)} := \frac{1}{T}\sum_{t=1}^T \widehat\vartheta_t,
\qquad
\delta_t^{(0)} := \widehat\delta_t \ (t=1,...,T),
\end{equation*}
where $\vartheta^{(0)}=(\phi^{(0)},\beta^{(0)\prime})'$.

Form the $N\times T$ residual matrix (removing the dynamic and covariate part using $\vartheta^{(0)}$):
\begin{equation*}
R := \big(r_1,...,r_T\big),
\qquad
r_t := y_t - \ones[N] \delta_t^{(0)} - \phi^{(0)} y_{t-1} - X_t \beta^{(0)},
\qquad t=1,...,T.
\end{equation*}
Compute a thin singular value decomposition of the $N\times T$ residual matrix $R=UDV'$
and let $U_r$, $D_r$, and $V_r$ contain its first $r$ singular components.
The principal-component factor path and loadings can be initialized as
\begin{equation*}
	\widehat F^{\,pc}
	:=
	\sqrt T\,V_r,
	\qquad
	\widehat\Lambda^{\,pc}
	:=
	\frac1T R\widehat F^{\,pc}
	=
	\frac1{\sqrt T}U_rD_r.
\end{equation*}
This calculation is equivalent to an eigendecomposition of $RR'/T$
but does not construct the $N\times N$ matrix $RR'$.
Because $(F,\Lambda)$ are identified only up to rotation, enforce the normalization
(e.g.\ $F_{1:r,:}=I_r$) by a nonsingular matrix $H$:
\begin{equation*}
H := \big(\widehat F^{\,pc}_{1:r,:}\big)^{-1},\qquad
F^{(0)} := \widehat F^{\,pc}H,\qquad
\Lambda^{(0)} := \widehat\Lambda^{\,pc}H^{-\prime}.
\end{equation*}
Set $A^{(0)}=0$, so that the normalized loadings serve as the initial loading residuals, and set
\begin{equation*}
\Sigma_\eta^{(0)} := \frac{1}{N}\Lambda^{(0)\prime}\Lambda^{(0)}.
\end{equation*}
Finally, compute idiosyncratic residuals after removing the factor component,
\begin{equation*}
\varepsilon_t^{(0)} := r_t - \Lambda^{(0)}f_t^{(0)},
\qquad t=1,...,T,
\end{equation*}
where $f_t^{(0)\prime}$ is the $t$th row of $F^{(0)}$.
Stack $E^{(0)}:=(\varepsilon_1^{(0)},...,\varepsilon_T^{(0)})$ and set the diagonal idiosyncratic variance matrix to the cross-sectional second moments:
\begin{equation*}
D_\varepsilon^{(0)} := \diag\!\left(\frac{1}{N}E^{(0)\prime}E^{(0)}\right).
\end{equation*}
To avoid near-zero idiosyncratic variances, the diagonal entries of $D_\varepsilon^{(0)}$ may be truncated below by a small positive constant.

\section{Spatial Score Contributions and Covariance Estimator}\label{sec:spatial_score_construction}

In this section, the component-by-component construction of the spatially
corrected contributions $\{g_i(\alpha)\}_{i=1}^N$ used in
Section~\ref{sec:asymptotics} is derived,
following \textcite{li-yang-2021}.
Only the score contributions associated with the spatial parameter $\rho$ and
the dynamic parameter $\phi$ require rearrangement to form a
martingale-difference array. The remaining score contributions already have
the required unitwise form and are therefore left unchanged.
The full rearranged contribution is
\begin{equation}\label{eq:spatial_score_contribution}
	g_i(\alpha)
	=
	\begin{pmatrix}
		g_{\rho,i}(\alpha)\\
		g_{\phi,i}(\alpha)\\
		g_{\beta,i}(\alpha)\\
		g_{\delta,i}(\alpha)\\
		g_{A,i}(\alpha)\\
		g_{F_2,i}(\alpha)\\
		g_{\vech(\Sigma_\eta),i}(\alpha)\\
		g_{\sigma^2,i}(\alpha)
	\end{pmatrix}
	=
	\begin{pmatrix}
		g_{\rho,i}(\alpha)\\
		g_{\phi,i}(\alpha)\\
		s_{\beta,i}(\alpha)\\
		s_{\delta,i}(\alpha)\\
		s_{A,i}(\alpha)\\
		s_{F_2,i}(\alpha)\\
		s_{\vech(\Sigma_\eta),i}(\alpha)\\
		s_{\sigma^2,i}(\alpha)
	\end{pmatrix}.
\end{equation}
The following subsections derive the two rearranged components and record the
remaining ordinary score contributions.

\subsection{Basic and Reduced-Form Objects}
\label{subsec:score-basic-reduced-form}

For each unit $i$, let
\begin{equation*}
y_i=(y_{i1},...,y_{iT})',
\qquad
y_{i,-1}=(y_{i0},y_{i1},...,y_{i,T-1})',
\end{equation*}
\begin{equation*}
X_i=
\begin{pmatrix}
x_{i1}'\\
\vdots\\
x_{iT}'
\end{pmatrix},
\qquad
\mathbf e_1=(1,0,...,0)'.
\end{equation*}
The transformed innovation is
\begin{equation*}
u_i(\theta)
=
R_T(\phi)y_i
-
\rho\sum_{j=1}^N w_{ij}y_j
-
X_i\beta
-
\phi y_{i0}\mathbf e_1,
\qquad
\theta=(\rho,\phi,\beta')'.
\end{equation*}
The conditional mean is
\begin{equation*}
m_i(\alpha)
=
\delta+F A z_i,
\end{equation*}
and the transformed residual is
\begin{equation*}
e_i(\alpha)
=
u_i(\theta)-m_i(\alpha)
=
u_i(\theta)-\delta-F A z_i.
\end{equation*}
Define the centered covariance residual
\begin{equation*}
\mathcal R_i(\alpha)
=
M(\alpha)e_i(\alpha)e_i(\alpha)'M(\alpha)-M(\alpha).
\end{equation*}
For notational compactness below, write
\begin{equation*}
b_i(\alpha)=A z_i.
\end{equation*}

Define the $NT\times NT$ block matrix
\begin{equation*}
\mathcal C(\rho,\phi)
=
I_N\otimes R_T(\phi)-\rho W\otimes I_T,
\end{equation*}
and its inverse
\begin{equation*}
\mathcal G(\rho,\phi)
=
\mathcal C(\rho,\phi)^{-1}.
\end{equation*}
Let $\mathcal G_{ij}(\alpha)$ denote the $T\times T$ $(i,j)$ block of
$\mathcal G(\rho,\phi)$.

The inverse notation is analytical and does not prescribe a dense
$NT\times NT$ inversion.
After permuting the system into time-stacked order,
$\mathcal C(\rho,\phi)$ is block lower triangular with $B(\rho)$ on every diagonal block and $-\phi I_N$ on the first block subdiagonal.
Thus, for a right-hand side $h=(h_1',...,h_T')'$,
the solution $x=\mathcal G h$ is obtained recursively from
\begin{equation*}
	B(\rho)x_1=h_1,
	\qquad
	B(\rho)x_t=h_t+\phi x_{t-1},
	\quad t=2,...,T.
\end{equation*}
The implementation factors a sparse $B(\rho)$ once by LU decomposition and reuses those factors in this recursion.
For the exact corrected unit contributions,
it propagates the realized innovation responses through time using a small number of $N\times N$ work matrices;
it never stores the full $NT\times NT$ matrix $\mathcal G$.

Define the conditional-mean part of the reduced-form outcome by
\begin{equation*}
\bar y(\alpha)
=
\mathcal G(\rho,\phi)
\begin{pmatrix}
X_1\beta+\phi y_{10}\mathbf e_1+\delta+F A z_1\\
\vdots\\
X_N\beta+\phi y_{N0}\mathbf e_1+\delta+F A z_N
\end{pmatrix}.
\end{equation*}
Let $\bar y_i(\alpha)$ be the $i$-th $T\times 1$ block of $\bar y(\alpha)$. Then
\begin{equation*}
y_i
=
\bar y_i(\alpha)+\sum_{j=1}^N \mathcal G_{ij}(\alpha)e_j(\alpha).
\end{equation*}

For the spatial score, define
\begin{equation*}
\bar w_i(\alpha)
=
\sum_{k=1}^N w_{ik}\bar y_k(\alpha),
\end{equation*}
and
\begin{equation*}
\mathcal W_{ij}(\alpha)
=
\sum_{k=1}^N w_{ik}\mathcal G_{kj}(\alpha).
\end{equation*}
Then
\begin{equation*}
(Wy)_i
=
\bar w_i(\alpha)+\sum_{j=1}^N \mathcal W_{ij}(\alpha)e_j(\alpha).
\end{equation*}

For the dynamic score, define the lag-selection matrix
\begin{equation*}
L_T=
\begin{pmatrix}
0&0&0&\cdots&0\\
1&0&0&\cdots&0\\
0&1&0&\cdots&0\\
\vdots&\vdots&\ddots&\ddots&0\\
0&0&\cdots&1&0
\end{pmatrix}.
\end{equation*}
Then
\begin{equation*}
y_{i,-1}=y_{i0}\mathbf e_1+L_Ty_i.
\end{equation*}
Define
\begin{equation*}
\bar y_{i,-1}(\alpha)
=
y_{i0}\mathbf e_1+L_T\bar y_i(\alpha),
\end{equation*}
and
\begin{equation*}
\mathcal L_{ij}(\alpha)
=
L_T\mathcal G_{ij}(\alpha).
\end{equation*}
Then
\begin{equation*}
y_{i,-1}
=
\bar y_{i,-1}(\alpha)+\sum_{j=1}^N \mathcal L_{ij}(\alpha)e_j(\alpha).
\end{equation*}

\subsection{Contributions for the Parameters of Interest}

\subsubsection{Spatial Autoregressive Parameter \texorpdfstring{$\rho$}{rho}}

The ordinary unit score for $\rho$ is
\begin{equation}
	s_{\rho,i}(\alpha)
	=
	-\frac{T}{N}\tr\{B(\rho)^{-1}W\}
	+(Wy)_i'M(\alpha)e_i(\alpha).
	\label{eq:raw_score_rho}
\end{equation}
Substituting the reduced form for $(Wy)_i$ into
\eqref{eq:raw_score_rho}, summing over $i$, and pairing the $(i,j)$ and $(j,i)$
cross-products gives
\begin{equation*}
\begin{aligned}
\sum_{i=1}^N s_{\rho,i}(\alpha)
={}&
\sum_{i=1}^N e_i'M\bar w_i
+\sum_{i=1}^N
\left[
e_i'M\mathcal W_{ii}e_i
-\tr\!\left\{M\mathcal W_{ii}\Sigma_u\right\}
\right]\\
&+\sum_{i=1}^N\sum_{j<i}
e_i'\left[M\mathcal W_{ij}+\mathcal W_{ji}'M\right]e_j.
\end{aligned}
\end{equation*}
Here and in the next display, all objects are evaluated at $\alpha$. The trace
terms replace the spatial Jacobian because
\begin{equation*}
	\sum_{i=1}^N
	\tr\!\left\{M\mathcal W_{ii}\Sigma_u\right\}
	=
	T\tr\{B(\rho)^{-1}W\}.
\end{equation*}
The rearrangement assigns each off-diagonal pair to its larger unit index.
Accordingly, define
\begin{equation*}
\begin{aligned}
g_{\rho,i}(\alpha)
={}&
e_i(\alpha)'M(\alpha)\bar w_i(\alpha) \\
&+
\left[
e_i(\alpha)'M(\alpha)\mathcal W_{ii}(\alpha)e_i(\alpha)
-
\tr\!\left\{M(\alpha)\mathcal W_{ii}(\alpha)\Sigma_u(\alpha)\right\}
\right] \\
&+
\sum_{j<i}
e_i(\alpha)'
\left[
M(\alpha)\mathcal W_{ij}(\alpha)
+
\mathcal W_{ji}(\alpha)'M(\alpha)
\right]
e_j(\alpha).
\end{aligned}
\end{equation*}
By construction,
\begin{equation*}
\sum_{i=1}^N g_{\rho,i}(\alpha)
=
-T\tr\{B(\rho)^{-1}W\}
+
\sum_{i=1}^N (Wy)_i'M(\alpha)e_i(\alpha)
=\sum_{i=1}^N s_{\rho,i}(\alpha).
\end{equation*}
At $\alpha_0$, the diagonal trace subtraction centers the quadratic term and
the rearranged contribution is conditionally mean zero.

\subsubsection{Dynamic Parameter \texorpdfstring{$\phi$}{phi}}

The ordinary unit score for $\phi$ is
\begin{equation}
	s_{\phi,i}(\alpha)
	=
	y_{i,-1}'M(\alpha)e_i(\alpha).
	\label{eq:raw_score_phi}
\end{equation}
Likewise, substituting the reduced form for $y_{i,-1}$ into
\eqref{eq:raw_score_phi} and pairing the off-diagonal terms yields
\begin{equation*}
\begin{aligned}
\sum_{i=1}^N s_{\phi,i}(\alpha)
={}&
\sum_{i=1}^N e_i'M\bar y_{i,-1}
+\sum_{i=1}^N
\left[
e_i'M\mathcal L_{ii}e_i
-\tr\!\left\{M\mathcal L_{ii}\Sigma_u\right\}
\right]\\
&+\sum_{i=1}^N\sum_{j<i}
e_i'\left[M\mathcal L_{ij}+\mathcal L_{ji}'M\right]e_j.
\end{aligned}
\end{equation*}
The temporal transformation has determinant one, which implies the trace
identity
\begin{equation*}
	\sum_{i=1}^N
	\tr\!\left\{M\mathcal L_{ii}\Sigma_u\right\}=0.
\end{equation*}
Assigning each off-diagonal pair to the larger index gives
\begin{equation*}
\begin{aligned}
g_{\phi,i}(\alpha)
={}&
e_i(\alpha)'M(\alpha)\bar y_{i,-1}(\alpha) \\
&+
\left[
e_i(\alpha)'M(\alpha)\mathcal L_{ii}(\alpha)e_i(\alpha)
-
\tr\!\left\{M(\alpha)\mathcal L_{ii}(\alpha)\Sigma_u(\alpha)\right\}
\right] \\
&+
\sum_{j<i}
e_i(\alpha)'
\left[
M(\alpha)\mathcal L_{ij}(\alpha)
+
\mathcal L_{ji}(\alpha)'M(\alpha)
\right]
e_j(\alpha).
\end{aligned}
\end{equation*}
Therefore,
\begin{equation*}
\sum_{i=1}^N g_{\phi,i}(\alpha)
=
\sum_{i=1}^N y_{i,-1}'M(\alpha)e_i(\alpha)
=\sum_{i=1}^N s_{\phi,i}(\alpha).
\end{equation*}

\subsubsection{Slope Vector \texorpdfstring{$\beta$}{beta}}

No rearrangement is needed for the slope vector $\beta\in\R^K$; its corrected
contribution coincides with its ordinary unit score:
\begin{equation*}
s_{\beta,i}(\alpha)=g_{\beta,i}(\alpha)
=
X_i'M(\alpha)e_i(\alpha).
\end{equation*}
This is a $K\times 1$ vector.

\subsection{Nuisance-Parameter Contributions}

Every nuisance-parameter contribution already depends only on unit $i$'s
structural residual and observed controls. Hence no spatial rearrangement is
needed: each component below satisfies $g_{k,i}(\alpha)=s_{k,i}(\alpha)$.

\subsubsection{Time Effects \texorpdfstring{$\delta$}{delta}}

The contribution for the time-effect vector $\delta\in\R^T$ is
\begin{equation*}
s_{\delta,i}(\alpha)=g_{\delta,i}(\alpha)
=
M(\alpha)e_i(\alpha).
\end{equation*}
This is a $T\times 1$ vector.

\subsubsection{Projection Matrix \texorpdfstring{$A$}{A}}

The ordinary matrix-valued score for the SERL coefficient matrix
$A\in\R^{r\times q}$, which is also its corrected contribution, is
\begin{equation*}
G_{A,i}(\alpha)
=
F'M(\alpha)e_i(\alpha)z_i'.
\end{equation*}
For the vectorized parameter $\vecop(A)$, use
\begin{equation*}
s_{A,i}(\alpha)=g_{A,i}(\alpha)
=
\vecop\!\left(F'M(\alpha)e_i(\alpha)z_i'\right).
\end{equation*}
Equivalently, for the individual element $A_{ab}$,
\begin{equation*}
s_{A_{ab},i}(\alpha)=g_{A_{ab},i}(\alpha)
=
\left[F'M(\alpha)e_i(\alpha)\right]_a z_{ib}.
\end{equation*}

\subsubsection{Factor Paths \texorpdfstring{$F$}{F}}

The factor path $F$ enters both the conditional mean and the conditional
covariance. The unconstrained ordinary matrix-valued score, which is also its
corrected contribution, is
\begin{equation*}
G_{F,i}(\alpha)
=
M(\alpha)e_i(\alpha)z_i'A'
+
\mathcal R_i(\alpha)F\Sigma_\eta.
\end{equation*}
Equivalently,
\begin{equation*}
G_{F,i}(\alpha)
=
M(\alpha)e_i(\alpha)b_i(\alpha)'
+
\left[
M(\alpha)e_i(\alpha)e_i(\alpha)'M(\alpha)-M(\alpha)
\right]F\Sigma_\eta.
\end{equation*}
Element by element, for $F_{ta}$,
\begin{equation*}
s_{F_{ta},i}(\alpha)=g_{F_{ta},i}(\alpha)
=
\left[
M(\alpha)e_i(\alpha)z_i'A'
\right]_{ta}
+
\left[
\mathcal R_i(\alpha)F\Sigma_\eta
\right]_{ta}.
\end{equation*}

Since we use the normalization
\begin{equation*}
F=
\begin{pmatrix}
I_r\\
F_2
\end{pmatrix},
\end{equation*}
only the bottom $T-r$ rows are free. Therefore the contribution for
$\vecop(F_2)$ is
\begin{equation*}
s_{F_2,i}(\alpha)=g_{F_2,i}(\alpha)
=
\vecop\!\left(
\left[
M(\alpha)e_i(\alpha)z_i'A'
+
\mathcal R_i(\alpha)F\Sigma_\eta
\right]_{t=r+1,...,T}
\right).
\end{equation*}

\subsubsection{Factor-Loading Covariance \texorpdfstring{$\Sigma_\eta$}{Sigma eta}}

Let $\Sigma_\eta=(\sigma_{\eta,ab})_{a,b=1}^r$. Define
\begin{equation*}
S_{\eta,i}(\alpha)
=
F'\mathcal R_i(\alpha)F.
\end{equation*}
For diagonal entries,
\begin{equation*}
s_{\sigma_{\eta,aa},i}(\alpha)=g_{\sigma_{\eta,aa},i}(\alpha)
=
\frac12\left[S_{\eta,i}(\alpha)\right]_{aa}.
\end{equation*}
For off-diagonal entries $a<b$, since
$\sigma_{\eta,ab}=\sigma_{\eta,ba}$,
\begin{equation*}
s_{\sigma_{\eta,ab},i}(\alpha)=g_{\sigma_{\eta,ab},i}(\alpha)
=
\left[S_{\eta,i}(\alpha)\right]_{ab}.
\end{equation*}
Equivalently, if $\vech(\Sigma_\eta)$ stacks the unique elements of $\Sigma_\eta$, define
\begin{equation*}
s_{\vech(\Sigma_\eta),i}(\alpha)
=g_{\vech(\Sigma_\eta),i}(\alpha)
=
\vech_*\!\left(S_{\eta,i}(\alpha)\right),
\end{equation*}
where $\vech_*(H)$ stacks one half of the diagonal elements of $H$ and the full
off-diagonal elements of $H$, using the same triangular ordering as $\vech(\cdot)$.

\subsubsection{Idiosyncratic Variances \texorpdfstring{$D_\varepsilon$}{D epsilon}}

Let
\begin{equation*}
D_\varepsilon=\diag(\sigma_1^2,...,\sigma_T^2).
\end{equation*}
For each variance component \(\sigma_t^2\),
\begin{equation*}
s_{\sigma_t^2,i}(\alpha)=g_{\sigma_t^2,i}(\alpha)
=
\frac12
\left[
\mathcal R_i(\alpha)
\right]_{tt}.
\end{equation*}
Stacking over $t=1,...,T$,
\begin{equation*}
s_{\sigma^2,i}(\alpha)=g_{\sigma^2,i}(\alpha)
=
\frac12\diag\!\left(\mathcal R_i(\alpha)\right),
\qquad
\sigma^2=(\sigma_1^2,...,\sigma_T^2)'.
\end{equation*}
If one instead parameterizes by log variances \(\varsigma_t=\log \sigma_t^2\), then
\begin{equation*}
s_{\varsigma_t,i}(\alpha)=g_{\varsigma_t,i}(\alpha)
=
\frac12 \sigma_t^2\left[\mathcal R_i(\alpha)\right]_{tt}.
\end{equation*}

\subsection{Feasible Covariance Estimator}

Collecting the components gives the full \eqref{eq:spatial_score_contribution}.
Under the maintained conditional moment restrictions, the constructed
$\{g_i(\alpha_0)\}$ form a martingale-difference array with respect to the
filtration $\mathcal F_i:=\mathcal C_N\vee\sigma(e_1,...,e_i)$.

To exploit the orthogonality within the two rearranged components,
define the pairwise past-residual terms
\begin{equation*}
\begin{aligned}
	r_{\rho,ij}(\alpha)
	&:=
	e_i(\alpha)'
	\left[
	M(\alpha)\mathcal W_{ij}(\alpha)
	+
	\mathcal W_{ji}(\alpha)'M(\alpha)
	\right]
	e_j(\alpha),\\
	r_{\phi,ij}(\alpha)
	&:=
	e_i(\alpha)'
	\left[
	M(\alpha)\mathcal L_{ij}(\alpha)
	+
	\mathcal L_{ji}(\alpha)'M(\alpha)
	\right]
	e_j(\alpha),
	\qquad j<i.
\end{aligned}
\end{equation*}
Let
\begin{equation*}
	g_{\rho,i}^\dagger(\alpha)
	:=g_{\rho,i}(\alpha)-\sum_{j<i}r_{\rho,ij}(\alpha),
	\qquad
	g_{\phi,i}^\dagger(\alpha)
	:=g_{\phi,i}(\alpha)-\sum_{j<i}r_{\phi,ij}(\alpha),
\end{equation*}
and let $g_i^\dagger(\alpha)=g_i(\alpha)$ in all components other than those corresponding to $\rho$ and $\phi$.
Define the conformable vector
\begin{equation*}
	v_{ij}(\alpha)
	:=
	\begin{pmatrix}
		r_{\rho,ij}(\alpha)\\
		r_{\phi,ij}(\alpha)\\
		\mathbf 0
	\end{pmatrix},
\end{equation*}
where the zero block corresponds to $\beta$ and all nuisance parameters.
Then
\begin{equation}\label{eq:g_martingale_pair_decomposition}
	g_i(\alpha)
	=
	g_i^\dagger(\alpha)+\sum_{j<i}v_{ij}(\alpha).
\end{equation}

At $\alpha_0$,
$g_i^\dagger$ depends only on $e_i$,
while $v_{ij}$ is bilinear in $e_i$ and $e_j$.
Conditional independence across units and $\E_N(e_j)=0$ therefore give
\begin{equation}\label{eq:martingale_pair_orthogonality}
	\E_N\!\left[g_i^\dagger(\alpha_0)v_{ij}(\alpha_0)'\right]=0,
	\qquad
	\E_N\!\left[v_{ij}(\alpha_0)v_{ik}(\alpha_0)'\right]=0
	\quad\text{for }j\ne k.
\end{equation}
These equalities hold conditional on $\mathcal C_N$;
the corresponding cross-products need not vanish unit by unit conditional on $\mathcal F_{i-1}$.
Expanding \eqref{eq:g_martingale_pair_decomposition} and applying
\eqref{eq:martingale_pair_orthogonality} yields
\begin{equation}\label{eq:martingale_pair_covariance_identity}
	\E_N\!\left[g_i(\alpha_0)g_i(\alpha_0)'\right]
	=
	\E_N\!\left[
		g_i^\dagger(\alpha_0)g_i^\dagger(\alpha_0)'
		+
		\sum_{j<i}v_{ij}(\alpha_0)v_{ij}(\alpha_0)'
	\right].
\end{equation}
Thus, for any $\alpha$, define
\begin{equation}\label{eq:martingale_covariance_estimator}
	\widehat\Omega_M(\alpha)
	:=
	\underbrace{
	\frac1N\sum_{i=1}^N
		g_i^\dagger(\alpha)g_i^\dagger(\alpha)'
	}_{\widehat\Omega_\dagger(\alpha)}
	+
	\underbrace{
	\frac1N\sum_{i=1}^N
		\sum_{j<i}v_{ij}(\alpha)v_{ij}(\alpha)'
	}_{\widehat\Omega_v(\alpha)}.
\end{equation}
The second term replaces the square of each sum of past-unit contributions by the sum of their pairwise outer products.
Both terms in \eqref{eq:martingale_covariance_estimator} are positive semidefinite.
The estimator therefore removes finite-sample cross-products known to have zero conditional expectation without changing the population covariance target.
Proposition~\ref{prop:feasible_score_covariance} establishes its consistency and its asymptotic equivalence to the full outer product.
Since we use the following triangle inequality repeatedly,
we state it here for easy reference:
\begin{equation}\label{eq:p-power-triangle-inequality}
	\|x+y\|^p
	\leq
	2^{p-1}\bigl(\|x\|^p+\|y\|^p\bigr)
	\quad\text{for}\quad
	p\geq1.
\end{equation}

\begin{lemma}[Moment bounds for rearranged scores]\label{lem:score-moment-bounds}
	Under Assumptions~\ref{ass:varepsilon}, \ref{ass:SERL}, and \ref{ass:regularity},
	there exists $\delta>0$ such that
	\begin{align}\label{eq:score_moment_bounds}
		\frac1N\sum_{i=1}^N
		\E_N\sup_{\alpha\in\mathcal A_0}\|g_i(\alpha)\|^{2+\delta}
		=O_p(1),
		\qquad
		\frac1N\sum_{i=1}^N
		\E_N\!\left[
		\sup_{\alpha\in\mathcal A_0}
		\|\nabla_\alpha g_i(\alpha)\|^{2+\delta}
		\right]
		=O_p(1).
	\end{align}
\end{lemma}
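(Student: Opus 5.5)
The argument is componentwise. By \eqref{eq:p-power-triangle-inequality}, $\|g_i\|^{2+\delta}$ is bounded by a constant (depending only on the fixed dimension $p_\alpha$) times the sum of the $(2+\delta)$-th powers of its components. It therefore suffices to bound, for each block in \eqref{eq:spatial_score_contribution}, the cross-sectional average of $\E_N\sup_{\mathcal A_0}|\cdot|^{2+\delta}$. I will take $\delta\le\zeta/2$ so that $2(2+\delta)\le 4+\zeta$.

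The first step is to control the residual uniformly in $\alpha$. Write
\[
e_i(\alpha)=e_i+\bigl(u_i(\theta)-u_i(\theta_0)\bigr)-\bigl(\mu_i(\varphi)-\mu_i(\varphi_0)\bigr).
\]
The difference $u_i(\theta)-u_i(\theta_0)$ is linear in $\theta-\theta_0$, with coefficients $(Wy)_i$, $y_{i,-1}$ and $X_i$. Substituting the reduced form $y_i=\bar y_i+\sum_j\mathcal G_{ij}e_j$ shows that $\sup_{\mathcal A_0}\|e_i(\alpha)\|$ is bounded by
\[
C\Bigl(\|e_i\|+\|X_i\|+\|z_i\|+|y_{i0}|+\|\bar y_i\|+\|\bar w_i\|+\textstyle\sum_j \bar{\mathcal G}_{ij}\|e_j\|\Bigr),
\]
where $\bar{\mathcal G}_{ij}$ is the common summable envelope from Assumption~\ref{ass:regularity}(4). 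Compactness of $\mathcal A_0$ makes the parameter factors bounded, and the eigenvalue bounds give $\sup\|M(\alpha)\|\le C$. The same envelope bounds the linear-in-$e$ parts of $(Wy)_i$ and $y_{i,-1}$.

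The central sub-lemma is then the following. Let $\xi_i:=\sum_j c_{ij}\|e_j\|$, where the nonnegative weights $c_{ij}$ have uniformly bounded row and column sums $\le c^*$. Then
\[
\frac1N\sum_i\E_N\xi_i^{q}\le (c^*)^{q}\,\frac1N\sum_j \E_N\|e_j\|^{q}.
\]
To prove it, apply Jensen (or Hölder) with weights $c_{ij}/\sum_j c_{ij}$ to get $\xi_i^q\le (c^*)^{q-1}\sum_j c_{ij}\|e_j\|^q$, then sum over $i$ using the column-sum bound. With \eqref{eq:conditional_moments} this gives $O(1)$ a.s. for $q\le 4+\zeta$.

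Each block is then handled by Cauchy--Schwarz. The nuisance blocks and $\beta$ are at most quadratic in $e_i(\alpha)$ times deterministic factors such as $z_i$, $X_i$ and $F\Sigma_\eta$. Their $(2+\delta)$-moments are bounded by averages of $\E_N\sup\|e_i(\alpha)\|^{2(2+\delta)}$ and averages of the $4+\zeta$ moments of the conditionally nonrandom quantities. Both are $O_p(1)$ by the previous step and Assumption~\ref{ass:regularity}(4).

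For $g_{\rho,i}$ and $g_{\phi,i}$ the own-unit terms $e_i'M\bar w_i$ and $e_i'M\mathcal W_{ii}e_i-\tr\{\cdot\}$ are handled in the same way. The pair term is
\[
\sum_{j<i}e_i'\bigl[M\mathcal W_{ij}+\mathcal W_{ji}'M\bigr]e_j,
\]
which is at most $C\|e_i(\alpha)\|\sum_j(\bar{\mathcal W}_{ij}+\bar{\mathcal W}_{ji})\|e_j(\alpha)\|$. By Cauchy--Schwarz its $(2+\delta)$-moment is bounded by
\[
\bigl(\E_N\|e_i\|^{2(2+\delta)}\bigr)^{1/2}\bigl(\E_N\xi_i^{2(2+\delta)}\bigr)^{1/2}.
\]
After averaging over $i$ and applying Cauchy--Schwarz once more over the sum, the sub-lemma applies, since both $\bar{\mathcal W}_{ij}$ and its transpose have bounded row and column sums.

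The derivative bound follows the same scheme. Differentiating $g_i$ in $\alpha$ produces terms of the same algebraic form, with $M$, $\mathcal W_{ij}$, $\mathcal L_{ij}$, $\bar y_i$ and $\bar w_i$ replaced by their first derivatives, plus factors $\nabla_\alpha e_i(\alpha)=-(\text{regressors})$. Assumption~\ref{ass:regularity}(4) supplies envelopes for exactly these derivatives, so the argument repeats verbatim.

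The main obstacle is that inside the supremum the residuals $e_j(\alpha)$ are not the true $e_j$, and they carry further spatial sums. The pair term thus becomes a double convolution of envelopes, of the form $\sum_j\bar{\mathcal W}_{ij}\sum_k\bar{\mathcal G}_{jk}\|e_k\|$. I would handle this by noting that products of matrices with bounded row and column sums again have bounded row and column sums, so the sub-lemma applies to the composed weights. The only remaining care is keeping the moment order at or below $4+\zeta$.
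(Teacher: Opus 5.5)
Your proposal is correct and follows essentially the same route as the paper. Both arguments write the residual uniformly through the reduced form; the paper's $e(\alpha)=d(\alpha)+\mathcal C(\theta)\mathcal G(\theta_0)e^0$ is exactly your linear expansion in $\theta-\theta_0$. Both then derive an averaged $(4+2\delta)$-moment bound for $\sup_{\alpha}\|e_i(\alpha)\|$, control the pair sums by Jensen plus row/column summability, apply Cauchy--Schwarz to the products, and repeat the same structure for the derivatives.

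The differences are cosmetic. The paper uses Minkowski in the residual step, which gives a bound uniform in $i$. It also applies the Jensen--summability bound directly to $\sup_{\alpha}\|e_j(\alpha)\|$, so your envelope composition is unnecessary. That composition is still valid, because your sub-lemma holds for any nonnegative $a_j$ in place of $\|e_j\|$.
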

\begin{proof}
	The proof first controls residuals evaluated near $\alpha_0$.
	It then applies those bounds to the ordinary and rearranged score contributions.
	Finally,
	it differentiates the same representations to control the score derivatives.

	\medskip
	\noindent\textit{Residual moment bounds.}
	Choose $0<\delta\leq\min\{\zeta/2,2\}$ so that $4+2\delta\leq4+\zeta$.
	Let
	\begin{equation*}
		e^0:=\left(e_1(\alpha_0)',\ldots,e_N(\alpha_0)'\right)'.
	\end{equation*}
	Writing $h(\alpha)$ for the unit-stacked conditional mean in the transformed
	model gives
	\begin{equation*}
		\mathcal C(\theta_0)y=h(\alpha_0)+e^0,
		\qquad
		y=\mathcal G(\theta_0)\big(h(\alpha_0)+e^0\big).
	\end{equation*}
	Consequently,
	\begin{equation}\label{eq:step1-s1}
		e(\alpha)
		=\mathcal C(\theta)y-h(\alpha)
		=\mathcal C(\theta)\mathcal G(\theta_0)e^0
		+\mathcal C(\theta)\mathcal G(\theta_0)h(\alpha_0)-h(\alpha).
	\end{equation}
	Let $d_i(\alpha)$ be the $i$th block of the conditionally nonrandom second
	term in \eqref{eq:step1-s1}.
	Then
	\begin{equation}\label{eq:step1-s2}
		e_i(\alpha)
		=
		d_i(\alpha)
		+\sum_{j=1}^N
		\left[\mathcal C(\theta)\mathcal G(\theta_0)\right]_{ij}
		e_j(\alpha_0).
	\end{equation}
	Assumption~\ref{ass:regularity} implies, for a finite constant $C$, that
	\begin{equation}\label{eq:step1-s3}
		\begin{aligned}
			\max_i\sum_{j=1}^N
			\sup_{\alpha\in\mathcal A_0}
			\left\{
			\left\|
			\left[\mathcal C(\theta)\mathcal G(\theta_0)\right]_{ij}
			\right\|
			+
			\left\|
			\nabla_\alpha
			\left[\mathcal C(\theta)\mathcal G(\theta_0)\right]_{ij}
			\right\|
			\right\}
			&\leq C, \\
			\max_j\sum_{i=1}^N
			\sup_{\alpha\in\mathcal A_0}
			\left\{
			\left\|
			\left[\mathcal C(\theta)\mathcal G(\theta_0)\right]_{ij}
			\right\|
			+
			\left\|
			\nabla_\alpha
			\left[\mathcal C(\theta)\mathcal G(\theta_0)\right]_{ij}
			\right\|
			\right\}
			&\leq C,
		\end{aligned}
	\end{equation}
	and
	\begin{equation}\label{eq:step1-s4}
		\frac1N\sum_{i=1}^N
		\sup_{\alpha\in\mathcal A_0}
		\left\{
		\|d_i(\alpha)\|
		+\|\nabla_\alpha d_i(\alpha)\|
		\right\}^{4+\zeta}
		=O_p(1).
	\end{equation}
	We bound the blocks and their derivatives together for notational convenience.
	By \eqref{eq:conditional_moments} and
	$4+2\delta\leq4+\zeta$,
	\begin{equation*}
		\sup_{N\geq1}\max_{1\leq i\leq N}
		\E_N\|e_i(\alpha_0)\|^{4+2\delta}<\infty.
	\end{equation*}
	Minkowski's inequality and \eqref{eq:step1-s3} give
	\begin{equation}\label{eq:step1-s6}
		\begin{aligned}
			&\left(
			\E_N\sup_{\alpha\in\mathcal A_0}
			\left\|
			\sum_{j=1}^N
			\left[\mathcal C(\theta)\mathcal G(\theta_0)\right]_{ij}
			e_j(\alpha_0)
			\right\|^{4+2\delta}
			\right)^{1/(4+2\delta)} \\
			&\qquad\leq
			\sum_{j=1}^N
			\sup_{\alpha\in\mathcal A_0}
			\left\|
			\left[\mathcal C(\theta)\mathcal G(\theta_0)\right]_{ij}
			\right\|
			\left\{
			\E_N\|e_j(\alpha_0)\|^{4+2\delta}
			\right\}^{1/(4+2\delta)}
			\leq C.
		\end{aligned}
	\end{equation}
	Apply \eqref{eq:p-power-triangle-inequality} with $p=4+2\delta$ to \eqref{eq:step1-s2}.
	Taking $\E_N\sup_{\alpha\in\mathcal A_0}$,
	using \eqref{eq:step1-s4} and \eqref{eq:step1-s6},
	and then averaging over $i$ gives
	\begin{equation}\label{eq:step1-s7}
		\frac1N\sum_{i=1}^N
		\E_N\sup_{\alpha\in\mathcal A_0}
		\|e_i(\alpha)\|^{4+2\delta}
		=O_p(1).
	\end{equation}

	\medskip
	\noindent\textit{Score-contribution moment bounds.}
	The ordinary contributions for $\beta$ and the nuisance parameters contain
	only linear and centered quadratic functions of $e_i(\alpha)$.
	Hence, \eqref{eq:step1-s7} and the average-moment
	bounds on the conditionally nonrandom score coefficients show that every
	ordinary score component has a bounded averaged $2+\delta$ moment.

	It remains to control the off-diagonal terms in $g_{\rho,i}$ and
	$g_{\phi,i}$.
	Both have the form
	\begin{equation}\label{eq:step1-s10}
		e_i(\alpha)'
		\sum_{j<i}
		\left[
		M(\alpha)\mathcal D_{ij}(\alpha)
		+\mathcal D_{ji}(\alpha)'M(\alpha)
		\right]
		e_j(\alpha),
	\end{equation}
	where $\mathcal D=\mathcal W$ for the spatial contribution and
	$\mathcal D=\mathcal L$ for the dynamic contribution.
	We first bound the summation term of \eqref{eq:step1-s10}.
	Define
	\begin{equation*}
		\overline{\mathcal D}_{ij}
		:=
		\sup_{\alpha\in\mathcal A_0}
		\left\{
		\|\mathcal D_{ij}(\alpha)\|
		+\|\nabla_\alpha\mathcal D_{ij}(\alpha)\|
		\right\}.
	\end{equation*}
	As in \eqref{eq:step1-s3},
	the envelope includes both the block and its derivative so that the same summability bounds can be used for the score contributions and their derivatives.
	Assumption~\ref{ass:regularity} gives
	\begin{equation}\label{eq:step1-s11}
		\max_i\sum_{j=1}^N\overline{\mathcal D}_{ij}
		\leq C,
		\qquad
		\max_j\sum_{i=1}^N\overline{\mathcal D}_{ij}
		\leq C.
	\end{equation}
	Because $M(\alpha)$ is uniformly bounded, we have
	\begin{equation}\label{eq:step1-s12}
		\sup_{\alpha\in\mathcal A_0}
		\left\|
		M(\alpha)\mathcal D_{ij}(\alpha)
		+\mathcal D_{ji}(\alpha)'M(\alpha)
		\right\|
		\leq
		C\left(
		\overline{\mathcal D}_{ij}
		+\overline{\mathcal D}_{ji}
		\right).
	\end{equation}
	Applying Jensen inequality, \eqref{eq:step1-s12},
	and both summability bounds in \eqref{eq:step1-s11} gives
	\begin{equation}\label{eq:step1-s14}
		\begin{aligned}
			&\frac1N\sum_{i=1}^N
			\E_N\sup_{\alpha\in\mathcal A_0}
			\left\|
			\sum_{j<i}
			\left[
			M(\alpha)\mathcal D_{ij}(\alpha)
			+\mathcal D_{ji}(\alpha)'M(\alpha)
			\right]
			e_j(\alpha)
			\right\|^{4+2\delta} \\
			&\qquad\leq
			\frac{C}{N}
			\sum_{i=1}^N\sum_{j<i}
			\left(
			\overline{\mathcal D}_{ij}
			+\overline{\mathcal D}_{ji}
			\right)
			\E_N\sup_{\alpha\in\mathcal A_0}
			\|e_j(\alpha)\|^{4+2\delta} \\
			&\qquad\leq
			\frac{C}{N}\sum_{j=1}^N
			\E_N\sup_{\alpha\in\mathcal A_0}
			\|e_j(\alpha)\|^{4+2\delta}
			=O_p(1).
		\end{aligned}
	\end{equation}
	Now reconsider \eqref{eq:step1-s10}.
	For each $i$,
	Cauchy-Schwarz yields
	\begin{equation*}
		\begin{aligned}
			&\E_N\sup_{\alpha\in\mathcal A_0}
			\left|e_i(\alpha)'
			\sum_{j<i}
			\left[
			M(\alpha)\mathcal D_{ij}(\alpha)
			+\mathcal D_{ji}(\alpha)'M(\alpha)
			\right]
			e_j(\alpha)
			\right|^{2+\delta} \\
			&\qquad\leq
			\left(
			\E_N\sup_{\alpha\in\mathcal A_0}
			\|e_i(\alpha)\|^{4+2\delta}
			\right)^{1/2}
			\left(
			\E_N\sup_{\alpha\in\mathcal A_0}
			\left\|\sum_{j<i}
			\left[
			M(\alpha)\mathcal D_{ij}(\alpha)
			+\mathcal D_{ji}(\alpha)'M(\alpha)
			\right]
			e_j(\alpha)\right\|^{4+2\delta}
			\right)^{1/2}.
		\end{aligned}
	\end{equation*}
	Averaging the preceding inequality over $i$,
	applying Cauchy-Schwarz across $i$,
	and using \eqref{eq:step1-s7} and \eqref{eq:step1-s14} gives
	\begin{equation}\label{eq:step1-s15}
		\begin{aligned}
			&\frac1N\sum_{i=1}^N
			\E_N\sup_{\alpha\in\mathcal A_0}
			\left|e_i(\alpha)'
			\sum_{j<i}
			\left[
			M(\alpha)\mathcal D_{ij}(\alpha)
			+\mathcal D_{ji}(\alpha)'M(\alpha)
			\right]
			e_j(\alpha)
			\right|^{2+\delta}
			=O_p(1).
		\end{aligned}
	\end{equation}
	The linear and diagonal centered terms in $g_{\rho,i}$ and $g_{\phi,i}$ are
	controlled by \eqref{eq:step1-s7}
	and the coefficient bounds in Assumption~\ref{ass:regularity},
	in the same way as the ordinary scores.
	This completes the first moment bound in \eqref{eq:score_moment_bounds}.

	\medskip
	\noindent\textit{Derivative moment bounds.}
	It remains to bound the derivatives.
	For scalar parameter $a$,
	the relevant identities are
	\begin{equation}\label{eq:step1-s17}
		\partial_a\mathcal G
		=-\mathcal G(\partial_a\mathcal C)\mathcal G,
		\qquad
		\partial_aM
		=-M(\partial_a\Sigma_u)M,
	\end{equation}
	and
	\begin{equation*}
		\partial_a(x'Qy)
		=(\partial_a x)'Qy
		+x'(\partial_a Q)y
		+x'Q(\partial_a y).
	\end{equation*}
	Differentiating \eqref{eq:step1-s1} also gives
	\begin{equation}\label{eq:step1-s19}
		\partial_a e(\alpha)
		=
		(\partial_a\mathcal C(\theta))\mathcal G(\theta_0)e^0
		+(\partial_a\mathcal C(\theta))\mathcal G(\theta_0)h(\alpha_0)
		-\partial_a h(\alpha).
	\end{equation}
	The derivative envelopes in \eqref{eq:step1-s3} and \eqref{eq:step1-s11}, together with
	\eqref{eq:step1-s17}--\eqref{eq:step1-s19}, show that every component of
	$\nabla_\alpha g_i(\alpha)$ has the same weighted linear or quadratic form
	already bounded above.
	Products of block arrays satisfying the row and column summability bounds retain those bounds because each product sum is bounded by the product of the corresponding summability bounds.
	Repeating \eqref{eq:step1-s6}--\eqref{eq:step1-s15} therefore establishes the second moment bound in \eqref{eq:score_moment_bounds}.
\end{proof}

\begin{proposition}[Consistency of the feasible score covariance] \label{prop:feasible_score_covariance}
	Under Assumptions~\ref{ass:varepsilon}--\ref{ass:score_clt},
	\begin{equation}\label{eq:score_covariance_convergence}
		\frac1N\sum_{i=1}^N
		g_i(\alpha_0)g_i(\alpha_0)'
		\pto\Omega(\alpha_0).
	\end{equation}
	Moreover,
	\begin{equation}\label{eq:martingale_covariance_equivalence}
		\widehat\Omega_M(\alpha_0)
		-
		\frac1N\sum_{i=1}^N
		g_i(\alpha_0)g_i(\alpha_0)'
		=o_p(1),
	\end{equation}
	so $\widehat\Omega_M(\alpha_0)\pto\Omega(\alpha_0)$.
	If $\widehat\alpha\pto\alpha_0$, then
	\begin{equation}\label{eq:score_covariance_feasible_convergence}
		\widehat\Omega_M(\widehat\alpha)
		-
		\widehat\Omega_M(\alpha_0)
		=o_p(1).
	\end{equation}
\end{proposition}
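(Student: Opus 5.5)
We prove the three claims in order. Throughout we use the martingale-difference structure of $\{g_i(\alpha_0)\}$ with respect to $\mathcal F_i$, the moment bounds of Lemma~\ref{lem:score-moment-bounds}, and Assumption~\ref{ass:score_clt}.

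\textbf{Step 1: the full outer product, \eqref{eq:score_covariance_convergence}.} Write
\[
\frac1N\sum_i g_ig_i' - \Omega
=
\frac1N\sum_i\big(g_ig_i'-\E[g_ig_i'\mid\mathcal F_{i-1}]\big)
+\Big(\frac1N\sum_i\E[g_ig_i'\mid\mathcal F_{i-1}]-\Omega\Big).
\]
The second bracket is $o_p(1)$ by Assumption~\ref{ass:score_clt}. The first bracket is a normalized sum of martingale differences $\xi_i:=g_ig_i'-\E[g_ig_i'\mid\mathcal F_{i-1}]$. The lemma gives only a $(1+\delta/2)$-th moment for $\|g_i\|^2$, not a second moment, so we apply a weak law for martingale-difference arrays (Hall and Heyde, Theorem~2.22) with exponent $p=1+\delta/2\in(1,2]$. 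Concretely, the von Bahr--Esseen/Burkholder inequality gives
\[
\E_N\Big\|\sum_i\xi_i\Big\|^p\le C\sum_i\E_N\|\xi_i\|^p\le C'\sum_i\E_N\|g_i\|^{2+\delta}=O_p(N).
\]
Dividing by $N^p$ leaves $O_p(N^{1-p})=o_p(1)$, and conditional Markov then yields the claim.

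\textbf{Step 2: the refined estimator, \eqref{eq:martingale_covariance_equivalence}.} Using \eqref{eq:g_martingale_pair_decomposition},
\[
g_ig_i'=g_i^\dagger g_i^{\dagger\prime}+\sum_{j<i}v_{ij}v_{ij}'+\Big(g_i^\dagger\sum_{j<i}v_{ij}'+\text{transpose}\Big)+\sum_{j\ne k<i}v_{ij}v_{ik}'.
\]
So the difference in \eqref{eq:martingale_covariance_equivalence} is $N^{-1}$ times the sum of the cross terms. All of these are at most bilinear in $e_i$ and involve at most two past residuals. Each cross term has conditional mean zero by \eqref{eq:martingale_pair_orthogonality}, so it suffices to show that its conditional variance is $o(N^2)$.

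\emph{Terms $g_i^\dagger v_{ij}'$.} After summing over $i$, these form a generalized quadratic/cubic form in independent $e$'s. Its variance is $O(N)$, because the coefficient arrays $\mathcal W_{ij},\mathcal L_{ij}$ have uniformly bounded row and column sums (Assumption~\ref{ass:regularity}). Covariances between terms indexed by $(i,j)$ and $(i',j')$ vanish unless the indices overlap.

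\emph{Terms $v_{ij}v_{ik}'$ with $j\ne k$.} The sum $\sum_i\sum_{j\ne k<i}$ is a degenerate $U$-statistic-type sum of order three in the residuals. Expanding its second moment, a term is nonzero only when indices pair up. The surviving pairings are controlled by products such as $\sum_{i,i'}\sum_{j,k}\overline{\mathcal D}_{ij}\overline{\mathcal D}_{ik}\overline{\mathcal D}_{i'j}\overline{\mathcal D}_{i'k}$, each bounded by $CN$ via the summability bounds \eqref{eq:step1-s11}.

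These variance computations need fourth moments of $e_i$, which \eqref{eq:conditional_moments} provides. This step is the main obstacle: the index bookkeeping for the triple sum must be done carefully so that every surviving pairing costs at most a factor $N$. Chebyshev then gives $o_p(1)$.

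\textbf{Step 3: plug-in, \eqref{eq:score_covariance_feasible_convergence}.} Apply the mean value theorem on $\mathcal A_0$, for which $\widehat\alpha\in\mathcal A_0$ with probability tending to one. This gives
\[
\|\widehat\Omega_M(\widehat\alpha)-\widehat\Omega_M(\alpha_0)\|\le\|\widehat\alpha-\alpha_0\|\cdot\sup_{\alpha\in\mathcal A_0}\|\nabla_\alpha\widehat\Omega_M(\alpha)\|.
\]

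\emph{One-unit term.} Its derivative is bounded by $N^{-1}\sum_i\sup\|g_i^\dagger\|\sup\|\nabla g_i^\dagger\|$. By Cauchy--Schwarz and Lemma~\ref{lem:score-moment-bounds}, applied to $g^\dagger$, which is a subset of the terms of $g$, this is $O_p(1)$.

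\emph{Pair term.} This is the delicate part, since $\widehat\Omega_v$ contains a double sum over $i$ and $j<i$. We bound it by
\[
N^{-1}\sum_i\sum_{j<i}\sup\|v_{ij}\|\,\sup\|\nabla v_{ij}\|.
\]
We then use $\sup\|v_{ij}\|\le C(\overline{\mathcal D}_{ij}+\overline{\mathcal D}_{ji})\sup\|e_i\|\sup\|e_j\|$, and likewise for the derivative. With the summability bounds and the residual bound \eqref{eq:step1-s7}, this is $O_p(1)$, so the pair term contributes $O_p(1)$ to the derivative.

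Multiplying by $\|\widehat\alpha-\alpha_0\|=o_p(1)$ gives the result. Combining Steps 1 and 2 gives $\widehat\Omega_M(\alpha_0)\pto\Omega(\alpha_0)$.
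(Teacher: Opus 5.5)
Your Steps 1 and 3 are fine. Step 1 is exactly the paper's argument. In Step 3, bounding $\sup_{\alpha\in\mathcal A_0}\|\nabla_\alpha\widehat\Omega_M(\alpha)\|$ is an equivalent substitute for the paper's $L^2$-difference bounds on $g_i^\dagger$ and $v_{ij}$.

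The gap is in Step 2, because the cross terms are not at most bilinear in $e_i$. Several components of $g_i^\dagger$ are centered quadratics in $e_i$:
the diagonal pieces $e_i'M\mathcal W_{ii}e_i-\tr\{M\mathcal W_{ii}\Sigma_u\}$ and $e_i'M\mathcal L_{ii}e_i-\tr\{M\mathcal L_{ii}\Sigma_u\}$;
and every covariance-parameter score built from $\mathcal R_i=Me_ie_i'M-M$, namely the $F_2$, $\vech(\Sigma_\eta)$ and $\sigma^2$ blocks.
Since $v_{ij}$ is also linear in $e_i$, entries of $g_i^\dagger v_{ij}'$ are cubic in $e_i$. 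The diagonal ($i=i'$, $j=j'$) contribution to your variance is then of the form $\E_N[(e_i'Qe_i-\tr(Q\Sigma_u))^2(e_i'Be_j)^2]$, which is a sixth moment of $e_i$. Assumptions~\ref{ass:varepsilon} and~\ref{ass:SERL} give only $4+\zeta$ conditional moments for an arbitrary $\zeta>0$. So $\Var_N(\sum_i D_i)$ need not be finite, and Chebyshev cannot be applied. Your remark that fourth moments suffice holds for the $v_{ij}v_{ik}'$ terms, but not for the $g_i^\dagger v_{ij}'$ terms.

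The missing step is to reuse the device from your own Step 1. Write
$N^{-1}\sum_iD_i=N^{-1}\sum_i\{D_i-\E(D_i\mid\mathcal F_{i-1})\}+N^{-1}\sum_i\E(D_i\mid\mathcal F_{i-1})$.

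\emph{Innovation part.} The first sum is a martingale-difference average. The von Bahr--Esseen bound with $p=1+\delta/2$ needs only $\E_N\|D_i\|^p$. Because $e_i$ is independent of $\mathcal F_{i-1}$ given $\mathcal C_N$, this reduces to $\E_N\|e_i\|^{3p}$ together with $2p$-th moments of past residuals. Since $3p\le 4+\zeta$ once $\delta\le\zeta/2$, the sum is $O_p(N^{1-p})=o_p(1)$.

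\emph{Compensator part.} In the predictable compensator, $e_i$ has been integrated out, so its coefficients involve at most third moments of $e_i$. What remains is a linear plus off-diagonal quadratic form $\sum_j\overline a_j'e_j+\sum_{j\ne k}e_j'\overline B_{jk}e_k$ in past residuals. The block summability bounds give $\sum_j\|\overline a_j\|^2+\sum_{j\ne k}\|\overline B_{jk}\|^2=O_p(N^{-1})$. Its second moment is therefore $O_p(N^{-1})$ using only fourth moments of the $e_j$.

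This is how the paper proves \eqref{eq:martingale_covariance_equivalence}. Your direct variance computation would go through under six conditional moments, but that is a stronger hypothesis than the proposition assumes.
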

\begin{proof}
	\noindent\textit{Outer-product convergence at the truth.}
	Let $\delta>0$ be as in Lemma~\ref{lem:score-moment-bounds}.
	Let $A_i:=g_i(\alpha_0)g_i(\alpha_0)'$ and
	\begin{equation*}
		Z_i
		:=
		A_i
		-
		\E\!\left(
		A_i
		\mid\mathcal F_{i-1}
		\right ).
	\end{equation*}
	Because $g_i(\alpha_0)$ is $\mathcal F_i$-measurable,
	$Z_i$ is $\mathcal F_i$-measurable.
	Moreover,
	\begin{equation*}
		\begin{aligned}
			\E(Z_i\mid\mathcal F_{i-1})
			&=
			\E(A_i\mid\mathcal F_{i-1})
			-
			\E(
			\E(A_i\mid\mathcal F_{i-1})
			\mid
			\mathcal F_{i-1}
			)
			=0.
		\end{aligned}
	\end{equation*}
	Thus,
	after vectorization,
	$\{Z_i,\mathcal F_i\}$ is a martingale-difference array.
	Set $p=1+\delta/2\in(1,2]$ and
	apply \eqref{eq:p-power-triangle-inequality} to the definition of $Z_i$.
	Then, take $\E_N$ to obtain
	\begin{equation}\label{eq:Zi-convexity-bound}
		\E_N\|Z_i\|_{\mathrm F}^p
		\leq
		2^{p-1}
		\Big(
		\E_N\|A_i\|_{\mathrm F}^p
		+
		\E_N
		\left\|
		\E(A_i\mid\mathcal F_{i-1})
		\right\|_{\mathrm F}^p
		\Big)
	\end{equation}
	Consider the second term on the right-hand side of \eqref{eq:Zi-convexity-bound}.
	Apply Jensen inequality to function $x\mapsto\|x\|_{\mathrm F}^p$, then take $\E_N$ on both sides to obtain
	\begin{equation}\label{eq:conditional-Jensen-Ai}
		\E_N
		\left\|
		\E(A_i\mid\mathcal F_{i-1})
		\right\|_{\mathrm F}^p
		\leq
		\E_N
		\left[
		\E\!\left(
		\|A_i\|_{\mathrm F}^p
		\,\middle|\,
		\mathcal F_{i-1}
		\right)
		\right]
		=
		\E_N\|A_i\|_{\mathrm F}^p.
	\end{equation}
	Note that we have
	\begin{equation}\label{eq:rank-one-frobenius}
		\|A_i\|_{\mathrm F}
		=
		\|g_i(\alpha_0)g_i(\alpha_0)'\|_{\mathrm F}
		=
		\|g_i(\alpha_0)\|^2
	\end{equation}
	Hence, combining \eqref{eq:Zi-convexity-bound},
	\eqref{eq:conditional-Jensen-Ai},
	and \eqref{eq:rank-one-frobenius},
	and using $2p=2+\delta$,
	gives
	\begin{equation*}
		\E_N\|Z_i\|_{\mathrm F}^p
		\leq
		2^p\E_N\|A_i\|_{\mathrm F}^p
		=
		2^p\E_N\|g_i(\alpha_0)\|^{2+\delta}.
	\end{equation*}
	The first bound in \eqref{eq:score_moment_bounds} therefore implies
	\begin{equation*}
		\frac1N\sum_{i=1}^N \E_N\|Z_i\|_{\mathrm F}^p
		\leq 2^p \frac1N\sum_{i=1}^N \E_N \|g_i(\alpha_0)\|^{2+\delta}
		=O_p(1).
	\end{equation*}
	Applying the martingale von Bahr--Esseen inequality coordinate-wise
	\parencite{von-bahr-esseen-1965},
	with the fixed-dimensional constants absorbed into constant $C_p$,
	gives
	\begin{equation*}
		\begin{aligned}
			\E_N\!\left\|
			\frac1N\sum_{i=1}^N Z_i
			\right\|_{\mathrm F}^p
			&\leq
			\frac{C_p}{N^p}
			\sum_{i=1}^N\E_N\|Z_i\|_{\mathrm F}^p \\
			&=
			\frac{C_p}{N^{p-1}}
			\left(
			\frac1N\sum_{i=1}^N
			\E_N\|Z_i\|_{\mathrm F}^p
			\right)
			=O_p(N^{1-p})
			=o_p(1).
		\end{aligned}
	\end{equation*}
	For every $\epsilon>0$,
	Markov's inequality gives
	\begin{equation*}
		\Pr\!\left(
		\left\|
		\frac1N\sum_{i=1}^N Z_i
		\right\|_{\mathrm F}
		>\epsilon
		\,\middle|\,
		\mathcal C_N
		\right)
		\leq
		\epsilon^{-p}
		\E_N\!\left\|
		\frac1N\sum_{i=1}^N Z_i
		\right\|_{\mathrm F}^p
		=o_p(1).
	\end{equation*}
	Because the conditional probability is bounded by one,
	its convergence in probability to zero implies that its expectation also converges to zero.
	Hence, we have
	\begin{equation}\label{eq:g_outer_product_lln}
		\frac1N\sum_{i=1}^N Z_i
		= \frac1N\sum_{i=1}^N A_i - \frac1N\sum_{i=1}^N \E(A_i\mid\mathcal F_{i-1})
		= o_p(1).
	\end{equation}
	Combining \eqref{eq:g_outer_product_lln} with the predictable-variation convergence in Assumption~\ref{ass:score_clt} establishes \eqref{eq:score_covariance_convergence}.

	\medskip
	\noindent\textit{Martingale orthogonalization.}
	For compactness, write
	\begin{equation*}
		R_i:=\sum_{j<i}v_{ij}(\alpha_0).
	\end{equation*}
	By \eqref{eq:g_martingale_pair_decomposition},
	$g_i(\alpha_0)=g_i^\dagger(\alpha_0)+R_i$.
	Consequently,
	\begin{equation}\label{eq:martingale_opg_difference}
	\begin{aligned}
		&g_i(\alpha_0) g_i(\alpha_0)'
		-
		\left[
		g_i^\dagger(\alpha_0)g_i^\dagger(\alpha_0)'
		+
		\sum_{j<i}v_{ij}(\alpha_0)v_{ij}(\alpha_0)'
		\right]\\
		&\qquad=
		g_i^\dagger(\alpha_0)R_i'
		+R_i g_i^\dagger(\alpha_0)'
		+
		\sum_{j<i}\sum_{\substack{k<i\\k\ne j}}
		v_{ij}(\alpha_0)v_{ik}(\alpha_0)'.
	\end{aligned}
	\end{equation}
	Denote the matrix on either side of \eqref{eq:martingale_opg_difference} by $D_i$.
	Each term on the right side of \eqref{eq:martingale_opg_difference} contains either one past-unit residual appearing only once or two distinct past-unit residuals appearing once each.
	The conditional mean-zero and cross-unit independence restrictions therefore give zero after conditioning on $\mathcal C_N$,
	as recorded in \eqref{eq:martingale_pair_orthogonality}.

	We now show explicitly that $N^{-1}\sum_iD_i=o_p(1)$.
	First decompose
	\begin{equation*}
		\frac1N\sum_{i=1}^N D_i
		=
		\frac1N\sum_{i=1}^N
		\left\{D_i-\E(D_i\mid\mathcal F_{i-1})\right\}
		+
		\frac1N\sum_{i=1}^N
		\E(D_i\mid\mathcal F_{i-1}).
	\end{equation*}
	The first summand is an average of martingale differences.
	The triangle inequality gives
	\begin{equation*}
		\|D_i\|_{\mathrm F}
		\leq
		2\|g_i^\dagger(\alpha_0)\|
		\sum_{j<i}\|v_{ij}(\alpha_0)\|
		+
		\left(\sum_{j<i}\|v_{ij}(\alpha_0)\|\right)^2.
	\end{equation*}
	With $p=1+\delta/2$ as above,
	the residual moment bound in \eqref{eq:step1-s7},
	the block summability bound in \eqref{eq:step1-s11},
	and Cauchy--Schwarz therefore imply
	\begin{equation*}
		\frac1N\sum_{i=1}^N\E_N\|D_i\|_{\mathrm F}^p
		=O_p(1).
	\end{equation*}
	Conditional Jensen inequality,
	the martingale von Bahr--Esseen inequality,
	and Markov's inequality,
	applied exactly as in the outer-product argument above,
	then yield
	\begin{equation}\label{eq:martingale-orthogonalization-innovation}
		\frac1N\sum_{i=1}^N
		\left\{D_i-\E(D_i\mid\mathcal F_{i-1})\right\}
		=o_p(1).
	\end{equation}

	It remains to control the predictable summand.
	We work entry-by-entry and let $(m,n)$ index an entry of $D_i$.
	Since $g_i^\dagger(\alpha_0)$ is a linear or centered quadratic function of $e_i$ and $v_{ij}(\alpha_0)$ is bilinear in $e_i$ and $e_j$,
	integration over $e_i$ gives
	\begin{equation*}
		\E(D_{i,mn}\mid\mathcal F_{i-1})
		=
		\sum_{j<i}a_{ij,mn}'e_j
		+
		\sum_{j<i}\sum_{\substack{k<i\\k\ne j}}
		e_j'B_{ijk,mn}e_k,
	\end{equation*}
	where $a_{ij,mn}$ is a $T$-vector and $B_{ijk,mn}$ is a $T\times T$ matrix.
	They depend on the true parameters,
	the spatial-response blocks,
	the conditioning variables,
	and the conditional moments of $e_i$,
	but are fixed conditional on $\mathcal C_N$ and are not additional parameters or estimators.
	Define their averages over the current-unit index by
	\begin{equation*}
		\overline a_{j,mn}
		:=
		\frac1N\sum_{i>j}a_{ij,mn},
		\qquad
		\overline B_{jk,mn}
		:=
		\frac1N\sum_{i>\max\{j,k\}}B_{ijk,mn}.
	\end{equation*}
	Averaging over $i$ therefore gives
	\begin{equation*}
		\frac1N\sum_{i=1}^N
		\E(D_{i,mn}\mid\mathcal F_{i-1})
		=
		\sum_{j=1}^N\overline a_{j,mn}'e_j
		+
		\sum_{j=1}^N\sum_{\substack{k=1\\k\ne j}}^N
		e_j'\overline B_{jk,mn}e_k.
	\end{equation*}
	The row and column summability in \eqref{eq:step1-s11},
	the moment bounds on the conditionally nonrandom score coefficients,
	and weighted Cauchy--Schwarz imply
	\begin{equation*}
		\sum_{j=1}^N\|\overline a_{j,mn}\|^2
		+
		\sum_{j=1}^N\sum_{\substack{k=1\\k\ne j}}^N
		\|\overline B_{jk,mn}\|_{\mathrm F}^2
		=O_p(N^{-1}).
	\end{equation*}
	Conditional independence,
	$\E_N(e_j)=0$,
	and the uniformly bounded second and fourth moments of $e_j$ imply
	\begin{equation*}
		\E_N
		\left|
		\frac1N\sum_{i=1}^N
		\E(D_{i,mn}\mid\mathcal F_{i-1})
		\right|^2
		\leq
		C\left(
		\sum_{j=1}^N\|\overline a_{j,mn}\|^2
		+
		\sum_{j=1}^N\sum_{\substack{k=1\\k\ne j}}^N
		\|\overline B_{jk,mn}\|_{\mathrm F}^2
		\right)
		=O_p(N^{-1}).
	\end{equation*}
	Because the dimension of $D_i$ is fixed,
	the same bounds hold for the Frobenius norm after summing over $(m,n)$.
	Markov's inequality therefore gives
	\begin{equation}\label{eq:martingale-orthogonalization-predictable}
		\frac1N\sum_{i=1}^N
		\E(D_i\mid\mathcal F_{i-1})
		=o_p(1).
	\end{equation}
	Combining \eqref{eq:martingale-orthogonalization-innovation} and
	\eqref{eq:martingale-orthogonalization-predictable} gives
	\begin{equation*}
		\frac1N\sum_{i=1}^N D_i=o_p(1).
	\end{equation*}
	By the definition of $D_i$,
	this is equivalent to \eqref{eq:martingale_covariance_equivalence}.

	\medskip
	\noindent\textit{Plug-in replacement.}
	The residual and block-envelope bounds used in Lemma~\ref{lem:score-moment-bounds} apply separately to $g_i^\dagger(\alpha)$ and to the pairwise terms $v_{ij}(\alpha)$.
	Together with Jensen and Markov inequalities,
	they imply
	\begin{equation}\label{eq:martingale_covariance_derivative_bounds}
	\begin{aligned}
		&\frac1N\sum_{i=1}^N
		\sup_{\alpha\in\mathcal A_0}
		\|\nabla_\alpha g_i^\dagger(\alpha)\|^2
		=O_p(1),
		\qquad
		\frac1N\sum_{i=1}^N
		\|g_i^\dagger(\alpha_0)\|^2
		=O_p(1),\\
		&\frac1N\sum_{i=1}^N\sum_{j<i}
		\sup_{\alpha\in\mathcal A_0}
		\|\nabla_\alpha v_{ij}(\alpha)\|^2
		=O_p(1),
		\qquad
		\frac1N\sum_{i=1}^N\sum_{j<i}
		\|v_{ij}(\alpha_0)\|^2
		=O_p(1).
	\end{aligned}
	\end{equation}
	With probability approaching one, $\widehat\alpha\in\mathcal A_0$.
	The mean-value theorem,
	consistency of $\widehat\alpha$,
	and \eqref{eq:martingale_covariance_derivative_bounds} give
	\begin{equation}\label{eq:martingale_covariance_plugin_l2}
	\begin{aligned}
		\frac1N\sum_{i=1}^N
		\|g_i^\dagger(\widehat\alpha)-g_i^\dagger(\alpha_0)\|^2
		&=o_p(1),\\
		\frac1N\sum_{i=1}^N\sum_{j<i}
		\|v_{ij}(\widehat\alpha)-v_{ij}(\alpha_0)\|^2
		&=o_p(1).
	\end{aligned}
	\end{equation}
	Apply the identity
	\begin{equation*}
		aa'-bb'
	=
	(a-b)b'+b(a-b)'+(a-b)(a-b)'
	\end{equation*}
	first to $a=g_i^\dagger(\widehat\alpha)$ and $b=g_i^\dagger(\alpha_0)$,
	and then to every pair $a=v_{ij}(\widehat\alpha)$ and $b=v_{ij}(\alpha_0)$.
	The triangle and Cauchy--Schwarz inequalities,
	\eqref{eq:martingale_covariance_derivative_bounds},
	and \eqref{eq:martingale_covariance_plugin_l2} then yield
	\begin{equation*}
		\|\widehat\Omega_M(\widehat\alpha)-\widehat\Omega_M(\alpha_0)\|_{\mathrm F}
		=o_p(1).
	\end{equation*}
	This proves \eqref{eq:score_covariance_feasible_convergence}.
	\qedhere
\end{proof}

\section{Rearranged Scores for the Model Extensions}\label{sec:extension-scores}

The main likelihood and asymptotic argument for the lagged-spatial-outcome
extension appear in Section~\ref{subsec:extension-lagged-spatial-outcome}. This
appendix records the additional objects needed to implement its corrected
spatial covariance estimator. Let $L_T$ be the lag-selection matrix defined in
Appendix~\ref{sec:spatial_score_construction}. In unit-stacked order, define
\begin{equation*}
	\mathcal C_\xi(\rho,\phi,\xi)
	=
	I_N\otimes R_T(\phi)
	-\rho W\otimes I_T
	-\xi W\otimes L_T,
	\qquad
	\mathcal G_\xi=\mathcal C_\xi^{-1}.
\end{equation*}
The conditional-mean component of the reduced-form outcome is
\begin{equation*}
	\bar y_\xi
	=
	\mathcal G_\xi
	\begin{pmatrix}
		X_1\beta+\phi y_{10}\mathbf e_1+\xi(Wy_0)_1\mathbf e_1+\delta+FAz_1\\
		\vdots\\
		X_N\beta+\phi y_{N0}\mathbf e_1+\xi(Wy_0)_N\mathbf e_1+\delta+FAz_N
	\end{pmatrix}.
\end{equation*}
Accordingly,
$y_i=\bar y_{\xi,i}+\sum_j\mathcal G_{\xi,ij}e_{\xi,j}$. To construct the
spatial sandwich covariance estimator for
$\theta_\xi=(\rho,\phi,\xi,\beta')'$, it remains to derive the corrected score
contribution associated with $\xi$.
If $\mathcal G_{\xi,ij}$ is the $(i,j)$ block of $\mathcal G_\xi$, define
\begin{equation*}
	\mathcal K_{ij}
	:=
	\left[(W\otimes L_T)\mathcal G_\xi\right]_{ij}.
\end{equation*}
Writing $\bar y_{i,-1}$ for the conditional-mean path of $y_{i,-1}$, let
\begin{equation*}
	\bar w_{i,-1}:=\sum_{k=1}^N w_{ik}\bar y_{k,-1}.
\end{equation*}
The corrected contribution for the coefficient $\xi$ on $Wy_{t-1}$ is
\begin{equation*}
\begin{aligned}
g_{\xi,i}(\alpha)
={}&
e_i(\alpha)'M(\alpha)\bar w_{i,-1}(\alpha) \\
&+
\left[
e_i(\alpha)'M(\alpha)\mathcal K_{ii}(\alpha)e_i(\alpha)
-\tr\!\left\{M(\alpha)\mathcal K_{ii}(\alpha)\Sigma_u(\alpha)\right\}
\right] \\
&+
\sum_{j<i}
e_i(\alpha)'
\left[
M(\alpha)\mathcal K_{ij}(\alpha)
+\mathcal K_{ji}(\alpha)'M(\alpha)
\right]
e_j(\alpha).
\end{aligned}
\end{equation*}
The trace correction is zero under the causal time ordering, but retaining it
makes the parallel with the other dynamic score contributions explicit. Adding
$g_{\xi,i}$ to the structural block gives
\begin{equation*}
	\sum_{i=1}^N g_{\xi,i}(\alpha_0)
	=
	\sum_{i=1}^N (Wy_{-1})_i'M(\alpha_0)e_i(\alpha_0).
\end{equation*}
The corrected $\rho$ and $\phi$ contributions in
Appendix~\ref{sec:spatial_score_construction} must likewise be evaluated using
$\mathcal G_\xi$: their response blocks become
$[(W\otimes I_T)\mathcal G_\xi]_{ij}$ and
$[(I_N\otimes L_T)\mathcal G_\xi]_{ij}$, respectively. The nuisance
contributions and profiling calculation retain their form, evaluated at the
extended residual and with $\theta=\theta_\xi$.
For the martingale covariance estimator in \eqref{eq:martingale_covariance_estimator},
the pairwise vector $v_{ij}$ is augmented by the corresponding past-residual term from $g_{\xi,i}$.
The pairwise outer-product correction therefore occupies the $3\times3$ block for $(\rho,\phi,\xi)$,
while its rows and columns for $\beta$ and the nuisance parameters remain zero.

For the unrestricted second-order model in
\eqref{eq:extension-second-order-model}, define the unit-stacked operator
\begin{equation*}
\begin{split}
	\mathcal C_2
	={}&I_N\otimes R_T(\phi)
	-\rho_1W\otimes I_T-\rho_2W^2\otimes I_T\\
	&-\xi_1W\otimes L_T-\xi_2W^2\otimes L_T,
	\qquad \mathcal G_2=\mathcal C_2^{-1}.
\end{split}
\end{equation*}
The response operators for the five endogenous directions, in the root
parameterization used in \eqref{eq:extension-second-order-likelihood}, are
\begin{equation*}
\begin{aligned}
	\mathcal Q_{\kappa_1}&=WB(\kappa_2)\otimes I_T,
	&\mathcal Q_{\kappa_2}&=WB(\kappa_1)\otimes I_T,\\
	\mathcal Q_\phi&=I_N\otimes L_T,
	&\mathcal Q_{\xi_1}&=W\otimes L_T,
	&\mathcal Q_{\xi_2}&=W^2\otimes L_T.
\end{aligned}
\end{equation*}
Set $\mathcal D_d=\mathcal Q_d\mathcal G_2$ and let
$\mathcal D_{d,ij}$ denote its $(i,j)$ block. For each direction $d$, let
$\bar q_{d,i}$ be the corresponding conditional-mean regressor: the
$i$th block of $\mathcal Q_d\bar y_2$ for a root, and respectively
$\bar y_{i,-1}$, $\sum_kw_{ik}\bar y_{k,-1}$, or
$\sum_k(W^2)_{ik}\bar y_{k,-1}$ for the three dynamic coefficients. Then the
corrected contribution has the common form
\begin{equation}\label{eq:second-order-corrected-score}
\begin{split}
	g_{d,i}(\alpha_2)
	={}&e_{2i}'M\bar q_{d,i}
	+e_{2i}'M\mathcal D_{d,ii}e_{2i}
	-\tr\!\left(M\mathcal D_{d,ii}\Sigma_u\right)\\
	&+\sum_{j<i}e_{2i}'
	\left(M\mathcal D_{d,ij}+\mathcal D_{d,ji}'M\right)e_{2j}.
\end{split}
\end{equation}
For the two root directions, the trace terms aggregate to the derivatives of
the two spatial Jacobians. For the causal lag directions, the analogous traces
are zero under the time ordering. Consequently, summing
\eqref{eq:second-order-corrected-score} over $i$ reproduces the corresponding
component of the pooled score in
\eqref{eq:extension-second-order-likelihood}. The exogenous-regressor and
nuisance contributions retain the construction in
Appendix~\ref{sec:spatial_score_construction}; replacing $\mathcal G$ by
$\mathcal G_2$ throughout and profiling the nuisance block therefore gives the
feasible covariance estimator stated in Section~\ref{subsec:extension-spatial-errors}.

\section{Auxiliary Simulation Diagnostics}\label{sec:auxiliary-simulations}

This section presents counterparts to the main simulation tables
under the sparse-matrix design
and collects auxiliary diagnostics that clarify the numerical
and finite-sample patterns in the baseline experiment.

\subsection{Sparse-Matrix Implementation and Log-Determinant Evaluation}\label{subsec:aux-sparse-logdet}

Tables~\ref{tab:sim-bias-sd-sparse} and \ref{tab:sim-coverage-sparse} report the sparse-matrix counterparts of
Tables~\ref{tab:sim-bias-sd-dense} and \ref{tab:sim-coverage-dense}.
Because $W$ is row normalized and the search restricts $|\rho|<1$,
the sparse-matrix design uses the convergent expansion
\begin{equation}\label{eq:sparse-logdet-trace-expansion}
	\log|I_N-\rho W|
	=
	-\sum_{m=1}^{\infty}\frac{\rho^m}{m}\tr(W^m).
\end{equation}
The sparse-matrix calculations truncate
\eqref{eq:sparse-logdet-trace-expansion} after 30 terms
and apply the Hutchinson estimator with 25 independent Rademacher vectors:
\begin{equation}\label{eq:sparse-logdet-hutchinson}
	\log|I_N-\rho W|
	\approx
	-\sum_{m=1}^{30}\frac{\rho^m}{m}
	\left(
		\frac{1}{25}\sum_{j=1}^{25}v_j'W^m v_j
	\right).
\end{equation}
The products $W^m v_j$ are computed recursively by sparse matrix-vector multiplication,
so the calculation does not convert $W$ to a dense matrix.
The point estimates remain accurate and have dispersion similar to the dense-matrix results.
In the original parameter units,
the average bias of $\hat\rho$ is $-0.0053$ and its largest absolute bias is about $0.0101$.
Pair-corrected coverage for $\rho$ averages $0.912$,
compared with $0.947$ in the dense-matrix design in Table~\ref{tab:sim-coverage-dense},
while coverage for $\phi$ and the regression slopes is more similar across the dense- and sparse-matrix designs.

\input{simulation_bias_sd_sparse.tex}

\input{simulation_coverage_sparse.tex}

Table~\ref{tab:sim-sparse-logdet} compares three determinant calculations in the six designs with $T=20$.
Each replication uses the same simulated panel under a 30-term Hutchinson approximation with 25 Rademacher vectors,
a 60-term approximation with 100 vectors,
and exact sparse LU factorization.
All 3600 estimations converge.

Average coverage for $\rho$ is $0.872$ under the faster 30-term, 25-vector approximation,
$0.924$ under the richer approximation,
and $0.943$ under exact sparse LU.
The trace approximation leaves the reported standard errors nearly unchanged but increases the empirical dispersion of $\hat\rho$ across replications.
Because the network is redrawn in every replication,
its approximation error becomes an additional source of Monte Carlo variation that is not part of the conditional asymptotic variance.
Coverage for $\phi$ remains between $0.935$ and $0.960$ under all three calculations.
Thus the trace method offers a useful speed advantage,
but exact sparse LU is preferable when the objective is precise inference for $\rho$ and the factorization is computationally feasible.

\input{simulation_sparse_logdet_diagnostic.tex}

\subsection{Short-Panel Dynamic-Coefficient Coverage}\label{subsec:aux-symmetric}

Table~\ref{tab:sim-symmetric-pilot} holds $\rho+\phi=0.9$ and varies the decomposition between spatial and dynamic dependence.
The design uses the eigenvalue implementation for log determinants,
so it isolates the role of the time dimension from trace-approximation error.
At $T=5$,
coverage for $\phi$ ranges from $0.845$ to $0.939$ and tends to improve as $\phi$ falls.
At $T=20$,
it ranges from $0.935$ to $0.965$ across the same parameter pairs.
Coverage for $\rho$ and the regression slopes is generally close to nominal in both cases.

This pattern is consistent with a finite-sample limitation of first-order Wald inference for the dynamic coefficient.
When $T$ is very short,
each unit contributes only a small number of transitions from which to learn the dynamic response,
and the studentized statistic for $\phi$ can remain non-Gaussian at the sample sizes considered here.
The absence of under-coverage at $T=20$ shows that a moderately larger time dimension resolves this problem.

\input{simulation_symmetric_design_pilot.tex}

\subsection{Unstable-Design Stress Test}\label{sec:explosive-stress-test}

Table~\ref{tab:simulation-explosive-stress} reports bias,
sampling dispersion,
and coverage under a simulation design that motivated the stable parameterization used in the main text.
The standard errors are computed using a simple outer-product of the rearranged scores $g_i(\widehat\alpha)$
instead of the $\widehat\Omega_{M,\text{pair}}(\widehat\alpha)$.
It fixes $(\rho,\phi)=(0.8,0.5)$,
so the spatially uniform component evolves at rate $\phi/(1-\rho)=2.5$.
The finite sample remains well defined because $T$ is fixed,
but the design lies outside the dynamic stability region and its outcome scale grows rapidly with the horizon.
These results are consistent with a finite-horizon amplification mechanism:
for each fixed $T$,
the reduced-form operator remains finite,
but its sensitivity to $\rho$ and $\phi$ can grow rapidly with the horizon
and produce unusually small Monte Carlo standard deviations of
$\hat\rho$ and $\hat\phi$
without changing the large-$N$ rate.
The broadly accurate sandwich coverage at $T=10,20$ is therefore reassuring.

\input{simulation_explosive_stress.tex}

\section{Application Factor-Rank Sensitivity}\label{sec:application-factor-rank}

For the empirical application, Table~\ref{tab:application-second-order-rank}
compares the preferred two-factor second-order estimates with the corresponding three-factor estimates.
The symmetric filter coefficients remain close across the two ranks,
and the education coefficient remains positive in every specification.
The farm-population coefficient is negative at both ranks in every specification.
Although the magnitudes of both covariate coefficients vary with the factor rank,
the conclusion that education is economically important does not depend on the chosen factor rank.

\input{application_second_order_rank.tex}

\section{Joint QML Treatment of the Initial Outcome}\label{sec:joint-initial-condition}

The baseline analysis conditions on the observed initial cross section and includes $y_0$ in the loading projection.
This section presents an alternative that models $y_0$ jointly with $y_1,\ldots,y_T$
and projects the loadings only on functions of the observed regressors.

\subsection{Initial Equation and Regressor-Based Loading Projection}

Suppose that initial-period regressors $X_0$ are observed.
Specify the initial cross section by
\begin{equation}\label{eq:joint-initial-sar}
B(\rho_{\mathrm I})y_0
=
X_0\beta_{\mathrm I}
+WX_0\gamma_{\mathrm I}
+\ones[N]\delta_0
+\Lambda f_0
+\varepsilon_0,
\qquad
B(\rho_{\mathrm I})=I_N-\rho_{\mathrm I}W.
\end{equation}
The term $WX_0\gamma_{\mathrm I}$ may be omitted,
and both direct initial-regressor terms may be omitted when $X_0$ is unavailable.
Equation~\eqref{eq:joint-initial-sar} is best viewed as an auxiliary spatial model for the initial cross section, allowing the component of $y_0$ associated with the latent unit heterogeneity to lie in the loading space spanned by $\Lambda$.
Accordingly, $f_0$ is a coefficient vector on this loading space and need not be interpreted as a realization of the common shocks governing the sample-period outcomes.
The parameters $\rho_{\mathrm I}$, $\beta_{\mathrm I}$, $\gamma_{\mathrm I}$, $\delta_0$, and $f_0$ are auxiliary nuisance parameters introduced only to model the initial cross section and are not parameters of substantive interest.

For completeness, the sample-period outcomes continue to satisfy, for
$B(\rho)=I_N-\rho W$,
\begin{equation}\label{eq:joint-sample-period}
B(\rho)y_t
=
\phi y_{t-1}
+
X_t\beta
+
\mathbf{1}_N\delta_t
+
\Lambda f_t
+
\varepsilon_t,
\qquad
t=1,\ldots,T.
\end{equation}
Thus, the joint specification consists of the auxiliary initial equation \eqref{eq:joint-initial-sar} together with the original sample-period model \eqref{eq:joint-sample-period}. Only the parameters in the sample-period model are of substantive interest.

Let $Q_X$ collect fixed-dimensional summaries of the regressor history.
For a fixed $L\geq 0$, define
\begin{equation*}
z_i^X
=
\left[
(Q_X)_{i\cdot}',
(WQ_X)_{i\cdot}',
\ldots,
(W^LQ_X)_{i\cdot}'
\right]',
\qquad
\mathcal C_N^X
=
\sigma(X_0,X_1,\ldots,X_T,W).
\end{equation*}
Write $Z_X=(z_1^X,\ldots,z_N^X)'$ and specify
\begin{equation}\label{eq:joint-initial-loading-projection}
\lambda_i
=
Az_i^X+\eta_i,
\qquad
\E(\eta_i\mid\mathcal C_N^X)=0,
\qquad
\E(\eta_i\eta_i'\mid\mathcal C_N^X)=\Sigma_\eta.
\end{equation}
As in the SERL specification, $Az_i^X$ captures the component of loading
heterogeneity that is predictable from the conditioning information and may
therefore vary systematically across units through the spatial transformations
in $z_i^X$. The residual $\eta_i$ captures the remaining purely idiosyncratic
loading heterogeneity and is assumed to be independent across $i$, conditional
on $\mathcal C_N^X$.

\subsection{The Joint Working Likelihood}

For the initial period, define
\begin{equation*}
	u_0
	=
	B(\rho_{\mathrm I})y_0
	-X_0\beta_{\mathrm I}
	-WX_0\gamma_{\mathrm I}.
\end{equation*}
For $t=1,...,T$, retain
\begin{equation*}
	u_t
	=
	B(\rho)y_t-\phi y_{t-1}-X_t\beta.
\end{equation*}
The unit-level residual vector is
\begin{equation*}
	e_i^*(\alpha)
	=
	\begin{pmatrix}
		u_{i0}-\delta_0-f_0'Az_i^X\\
		u_{i1}-\delta_1-f_1'Az_i^X\\
		\vdots\\
		u_{iT}-\delta_T-f_T'Az_i^X
	\end{pmatrix}.
\end{equation*}
At the true parameter,
\begin{equation*}
	e_i^*=F^*\eta_i+\varepsilon_i^*,
	\qquad
	F^*=\begin{pmatrix}f_0'\\F\end{pmatrix},
	\qquad
	\varepsilon_i^*=(\varepsilon_{i0},...,\varepsilon_{iT})'.
\end{equation*}
Here $f_0$ is the auxiliary coefficient vector on the loading space for the
initial equation, whereas the rows of $F$ retain their interpretation as the
sample-period factor realizations. The top $r\times r$ block of the
sample-period factor path $F$ remains normalized to $I_r$, which also fixes the
rotation of $f_0$.

Suppose that,
conditional on $\mathcal C_N^X$,
the vectors $(\eta_i',\varepsilon_i^{*\prime})'$ are iid across $i$,
with $\eta_i$ orthogonal to $\varepsilon_i^*$ and
\begin{equation*}
	D_\varepsilon^*
	=
	\diag(\sigma_0^2,\sigma_1^2,...,\sigma_T^2).
\end{equation*}
Then
\begin{equation}\label{eq:joint-initial-covariance}
	\E(e_i^*\mid\mathcal C_N^X)=0,
	\qquad
	\E(e_i^*e_i^{*\prime}\mid\mathcal C_N^X)
	=
	\Sigma_u^*
	:=
	F^*\Sigma_\eta F^{*\prime}+D_\varepsilon^*.
\end{equation}

The transformation from $(y_0',y_1',...,y_T')'$ to the structural residuals is block lower triangular.
Its diagonal blocks are $B(\rho_{\mathrm I})$ once and $B(\rho)$ $T$ times,
so its determinant is
\begin{equation*}
	|B(\rho_{\mathrm I})|\,|B(\rho)|^T.
\end{equation*}
Consequently,
the normalized Gaussian working log-likelihood is
\begin{equation}\label{eq:joint-initial-qml}
	\ell_N^*(\alpha)
	=
	\frac1N\log|B(\rho_{\mathrm I})|
	+\frac{T}{N}\log|B(\rho)|
	-\frac12\log|\Sigma_u^*|
	-
	\frac1{2N}\sum_{i=1}^N
	e_i^*(\alpha)'\Sigma_u^{*-1}e_i^*(\alpha).
\end{equation}
No Gaussianity is imposed;
\eqref{eq:joint-initial-qml} is a working likelihood based on
\eqref{eq:joint-initial-covariance}.

\subsection{Estimation and Asymptotic Covariance}

For fixed $(\rho_{\mathrm I},\rho)$,
the transformed outcomes are observed,
the initial coefficients $(\beta_{\mathrm I},\gamma_{\mathrm I})$
and the sample-period coefficients $(\phi,\beta)$ enter linearly,
and the GLS and Gaussian-working-model updates apply to the stacked $T+1$ system.
The covariance update uses $F^*$ and $D_\varepsilon^*$,
while the update for $F^*$ estimates the auxiliary coefficient $f_0$
together with the sample-period factor path $F$,
subject to the normalization above.
Profiling these blocks leaves a bounded two-dimensional optimization over
$(\rho_{\mathrm I},\rho)$.

To describe the corrected covariance,
similar to the Jacobian operator $\mathcal J$ in
\eqref{eq:jacobian_matrix},
we stack the initial and sample-period cross sections by time,
with all $N$ units within each period,
and write the joint operator as
\begin{equation}\label{eq:joint-initial-operator}
\widetilde{\mathcal C}^*
=
\begin{pmatrix}
B(\rho_{\mathrm I}) & 0 & 0 & \cdots & 0 \\
-\phi I_N & B(\rho) & 0 & \cdots & 0 \\
0 & -\phi I_N & B(\rho) & \cdots & 0 \\
\vdots & \ddots & \ddots & \ddots & \vdots \\
0 & \cdots & 0 & -\phi I_N & B(\rho)
\end{pmatrix},
\qquad
\widetilde{\mathcal G}^*
:=
(\widetilde{\mathcal C}^*)^{-1}.
\end{equation}
For scalar coefficients
$d\in\{\rho_{\mathrm I},\rho,\phi\}$,
define the time-stacked response direction by
\begin{equation*}
\widetilde{\mathcal Q}_d
:=
-\frac{\partial\widetilde{\mathcal C}^*}{\partial d},
\qquad
\widetilde{\mathcal D}_d
:=
\widetilde{\mathcal Q}_d\widetilde{\mathcal G}^*.
\end{equation*}
Thus,
$\widetilde{\mathcal Q}_{\rho_{\mathrm I}}$ has $W$ in the
initial-period diagonal block,
whereas $\widetilde{\mathcal Q}_{\rho}$ has $W$ in each
sample-period diagonal block.
The matrix $\widetilde{\mathcal Q}_{\phi}$ has $I_N$ in each block
along the first subdiagonal.

Permute the time-stacked system
$\widetilde{\mathcal C}^*$,
$\widetilde{\mathcal G}^*$,
$\widetilde{\mathcal Q}_d$, and
$\widetilde{\mathcal D}_d$
to a unitwise ordering,
denoted by
$\mathcal C^*$,
$\mathcal G^*$,
$\mathcal Q_d$, and
$\mathcal D_d$, respectively.
Let $\mathcal D_{d,ij}$ denote the
$(T+1)\times(T+1)$ $(i,j)$ block of $\mathcal D_d$.
Replacing the response matrices in
Appendix~\ref{sec:spatial_score_construction}
by these blocks gives the corrected contribution for each endogenous direction.
The trace corrections reproduce the two Jacobian derivatives:
one copy of
$\tr\{B(\rho_{\mathrm I})^{-1}W\}$
for $\rho_{\mathrm I}$ and
$T$ copies of
$\tr\{B(\rho)^{-1}W\}$
for $\rho$.
All nuisance contributions retain their unitwise form in the enlarged
$T+1$ system.

Because $T+1$ and the nuisance dimension remain fixed,
the consistency and asymptotic-normality argument in
Section~\ref{sec:asymptotics}
carries over when both spatial filters and the joint reduced-form operator
are uniformly bounded,
the enlarged population criterion is uniquely maximized,
and the augmented Hessian and predictable score covariance are nonsingular.
The argument of Proposition~\ref{prop:feasible_score_covariance}
then establishes consistency of the feasible covariance estimator
based on the augmented contributions.

\end{document}

%% file: algorithm_profile_rho_description.tex
\begin{algorithm}[h]
	\caption{Block coordinate estimation with a conditional outer search over $\rho$}\label{alg:profile_rho_description}
	\begin{algorithmic}[1]
		\Require Initial value for $\rho$, starting values for $(\phi,\beta)$, and starting values for $(\delta,A,F,\Sigma_\eta,D_\varepsilon)$
		\Repeat
		\State Given current $\rho$, form transformed innovation matrix $U(\rho,\phi,\beta)$
		\Repeat
		\State Update $(\delta,A)$ by concentration given the current factor path and covariance parameters
		\State Compute the conditional moments of the loading heterogeneity 
		\State Update $(\Sigma_\eta,D_\varepsilon)$
		\State Update factor path $F$ and renormalize
		\State Update $(\phi,\beta)$ by GLS
		\Until{inner convergence}
		\State Update $\rho$ by maximizing the conditional criterion over the admissible parameter space,
		holding all current non-$\rho$ parameters fixed
		\Until{outer convergence}
		\State \Return $(\widehat\rho,\widehat\phi,\widehat\beta,\widehat\delta,\widehat A,\widehat F,\widehat\Sigma_\eta,\widehat D_\varepsilon)$
	\end{algorithmic}
\end{algorithm}

%% file: simulation_bias_sd.tex
\begin{sidewaystable}[t]
	\centering
	\scriptsize
	\begin{filecontents*}{code/output/simulation_bias_sd.csv}
T,N,knn_type,rho,phi,beta1,beta2,n_success,n_strict_converged,max_accepted_rho_gap,bias_x100_rho,bias_x100_phi,bias_x100_beta1,bias_x100_beta2,sd_rho,sd_phi,sd_beta1,sd_beta2
5,500,dense,0.2,0.4,0.8,-0.3,997,833,1.82297081494087e-06,-0.459755626240546,-0.302983201397767,-0.0772326447901689,-0.373345296097861,0.0367537630642786,0.0581250161935009,0.0632980099687339,0.0635800044472219
5,500,dense,0.5,0.25,0.8,-0.3,995,818,1.93177352103557e-06,-0.508767124907687,0.0767479178730612,0.187117956792235,0.00499677425928773,0.0282771606035543,0.0417230800511041,0.0604028164233799,0.0619321130005808
5,500,dense,0.8,0.1,0.8,-0.3,996,879,1.99411658385884e-06,-0.410240487755971,0.107813663750896,0.0410308700807518,-0.26219522410439,0.0154831120421616,0.0222716007622484,0.0622169358857475,0.0636840716166945
5,1000,dense,0.2,0.4,0.8,-0.3,1000,847,1.31122775545722e-06,-0.290869791903224,-0.0172913782247633,-0.0676766295099115,-0.108750383696419,0.0257869530598367,0.0442831782362208,0.0427595099353243,0.0437803404682156
5,1000,dense,0.5,0.25,0.8,-0.3,999,850,1.68164221625577e-06,-0.389253023228031,0.0534087342929079,0.0448507906716537,0.00108206359191642,0.0199189520190167,0.0306376496845181,0.0447360053415976,0.0432682803687384
5,1000,dense,0.8,0.1,0.8,-0.3,999,902,1.66576987958855e-06,-0.308412603007015,0.00208319631494809,-0.102222628229031,0.212235128717554,0.010757123751992,0.0168043482754715,0.0448731237757432,0.043376145034306
10,500,dense,0.2,0.4,0.8,-0.3,999,983,9.04782084754086e-07,-0.426129384641473,0.0219216392103941,0.031671654276242,-0.0537886146826994,0.0275651548773917,0.0166321236701168,0.03897484131746,0.0410396416523506
10,500,dense,0.5,0.25,0.8,-0.3,1000,980,9.25355457037647e-07,-0.419057958168378,0.0391350850493259,-0.0768715996899183,-0.177246939874934,0.0213903244357121,0.0163950329533106,0.0404207063410089,0.0413786052983261
10,500,dense,0.8,0.1,0.8,-0.3,1000,991,2.91874263869119e-07,-0.291776219289003,0.0566835779725764,0.178069020020826,-0.316671757714054,0.0112445015237067,0.011294096291161,0.0405796936128631,0.0419548249571904
10,1000,dense,0.2,0.4,0.8,-0.3,1000,993,5.55107213973116e-07,-0.309488170300271,-0.0046897704087291,0.146172046484803,-0.111603922867638,0.0190285377767062,0.0118000718196109,0.0287240352423378,0.0292565376996012
10,1000,dense,0.5,0.25,0.8,-0.3,1000,994,1.29151019989671e-06,-0.205721918009028,0.00522740928790425,0.0577644630144761,0.238683995972338,0.0149485671697791,0.0112829790906047,0.0288594972646195,0.0290274057388683
10,1000,dense,0.8,0.1,0.8,-0.3,999,997,2.360081913233e-07,-0.217960224652876,0.0497509911844319,0.114788930851705,0.000495086994163522,0.00792891538123292,0.00774471813336469,0.0290210421761908,0.0285338382669714
20,500,dense,0.2,0.4,0.8,-0.3,1000,1000,9.99422797298699e-09,-0.337797597352946,-0.0337085979357066,0.110762113734874,-0.0249249096561815,0.0207940082108163,0.00982980557262985,0.0284885543518861,0.0278104814624903
20,500,dense,0.5,0.25,0.8,-0.3,1000,1000,9.98397853280153e-09,-0.312124215808448,0.0526052094717584,0.107890277497186,0.0104538060066439,0.016462913634552,0.00999219037839394,0.0289267387315838,0.0287556965685492
20,500,dense,0.8,0.1,0.8,-0.3,1000,1000,9.99754501407324e-09,-0.264326162158603,0.0524370332361495,-0.0170923418326818,-0.0518937916227675,0.00921491552617981,0.00730420711747235,0.0284963420728295,0.0273567070199463
20,1000,dense,0.2,0.4,0.8,-0.3,1000,1000,9.99743718366197e-09,-0.179474380190505,-0.0350828617232773,-0.00530507667365947,-0.149246425445337,0.0145819397885216,0.00707250261179715,0.01905259145482,0.0192873100315018
20,1000,dense,0.5,0.25,0.8,-0.3,1000,1000,9.99364679898918e-09,-0.303045696864704,0.0387195078927566,-0.01041316260688,-0.0215787438832505,0.0113458171338136,0.00698032024693047,0.0201233595645998,0.0203750511276125
20,1000,dense,0.8,0.1,0.8,-0.3,1000,1000,9.99689941938442e-09,-0.202726430748857,0.0306704716201892,-0.0977214030421941,-0.0186401590910964,0.00646411708473614,0.00503111719546107,0.0192572783291392,0.020171893486486
5,500,sparse,0.2,0.4,0.8,-0.3,998,825,1.58252738949316e-06,-1.01411095455599,-0.430299368978348,-0.086014302807377,-0.0756704523738878,0.0359586654586775,0.0552949004747228,0.0601017098872895,0.0629423591758688
5,500,sparse,0.5,0.25,0.8,-0.3,995,827,1.97997127326932e-06,-0.746373890915683,-0.0252487339780716,-0.280182944395845,0.169441416527909,0.026902914833434,0.0432769233848982,0.0621340439108369,0.0609777264243145
5,500,sparse,0.8,0.1,0.8,-0.3,999,872,1.62949776283394e-06,-0.623024013082227,0.153590696025054,0.199131740137407,-0.108015713028205,0.0139806929616721,0.0221132905153283,0.0611079014368615,0.0618981675224667
5,1000,sparse,0.2,0.4,0.8,-0.3,999,855,1.51875221085507e-06,-0.384782637795807,-0.125440025310162,-0.269896260368342,-0.114075090120059,0.0244475678812855,0.0475984515397446,0.0436915593290088,0.0433212222340386
5,1000,sparse,0.5,0.25,0.8,-0.3,995,824,1.57637906422758e-06,-0.466824691687437,-0.00803963577670528,-0.188996169569729,-0.173436887442317,0.0182474929517717,0.0338792636513691,0.0442104625664803,0.0430015435939337
5,1000,sparse,0.8,0.1,0.8,-0.3,997,875,1.8146786940898e-06,-0.316775855776813,0.168886299506137,0.0398788117109989,-0.0454049625120508,0.00980867920840817,0.0162594704435037,0.0419434621323432,0.0429373749507834
10,500,sparse,0.2,0.4,0.8,-0.3,999,973,1.05069730052532e-06,-0.81136317591462,0.0371281981368402,0.209609355799524,-0.0741729552938347,0.0238143042003844,0.016229310472787,0.040932198965873,0.0400117815632172
10,500,sparse,0.5,0.25,0.8,-0.3,1000,982,1.63777004308363e-06,-0.687096915992539,0.0580853901494105,-0.077960032436405,0.103439422766668,0.0182149423258882,0.0157426399157064,0.0402417081281556,0.041722780354582
10,500,sparse,0.8,0.1,0.8,-0.3,1000,993,3.1720817739167e-07,-0.443795682516784,0.0856231664909612,0.10813153746334,-0.00201703991531594,0.00979493859577369,0.0107036609650286,0.0414048909049964,0.0409091944626397
10,1000,sparse,0.2,0.4,0.8,-0.3,1000,996,4.48100873073898e-07,-0.554970926661293,-0.0274063307564293,-0.00543838966083919,-0.0707535876376089,0.0166167439640114,0.011563718679672,0.0280614563379432,0.0283710215063798
10,1000,sparse,0.5,0.25,0.8,-0.3,1000,993,4.7102109979047e-07,-0.365493402823214,0.0663466514239098,0.0505495892475708,0.10362603461277,0.0122607190118867,0.0107604317141799,0.0269614091680698,0.0283050827187717
10,1000,sparse,0.8,0.1,0.8,-0.3,1000,997,1.93774521428658e-07,-0.299876914050505,0.0563290613702016,-0.0873134282936568,0.0533461636466706,0.00707078181842633,0.00790938602887919,0.0285189617972791,0.0294373410816582
20,500,sparse,0.2,0.4,0.8,-0.3,1000,1000,9.99386917666101e-09,-0.610500201764859,0.0176308193448922,0.0805430225325764,-0.0334422620023321,0.0170306140388318,0.00943437527940112,0.0283808634089384,0.0274721701783631
20,500,sparse,0.5,0.25,0.8,-0.3,1000,1000,9.99518562361246e-09,-0.718714912390452,0.0948163143044409,-0.0201213084026198,-0.117619531822637,0.0135148997115718,0.00992473135270245,0.0279833555814847,0.0278808787114743
20,500,sparse,0.8,0.1,0.8,-0.3,1000,1000,9.98491900272569e-09,-0.524571467041621,0.115795003752012,0.0711337487243799,0.165323386797148,0.00698146336563234,0.00721168114758791,0.0284724965824085,0.0277604191036248
20,1000,sparse,0.2,0.4,0.8,-0.3,1000,1000,9.96105220529842e-09,-0.392719895995017,0.00194710325869338,0.0318075392295233,-0.0700391377540614,0.0120292396267913,0.00681172162761316,0.0204110439328039,0.0194542000716454
20,1000,sparse,0.5,0.25,0.8,-0.3,1000,1000,9.9944592601986e-09,-0.326149746215349,0.0535845627109437,0.0633956132715966,0.0117348344808486,0.00875332386571972,0.00671192276914995,0.0196274927834841,0.0202288914525905
20,1000,sparse,0.8,0.1,0.8,-0.3,1000,1000,9.99904770093707e-09,-0.248210202053009,0.0580160026854786,-0.113680397507061,0.0602819350630982,0.00488185262486115,0.00493799679222201,0.0208699440029233,0.0196068040943593
\end{filecontents*}
\csvreader[
	respect underscore=true,
	column names={
	T=\Tval,
	N=\Nval,
	knn_type=\knntype,
	rho=\truerho,
	phi=\truephi,
	beta1=\truebetaone,
	beta2=\truebetatwo,
	bias_x100_rho=\BIASrho,
	sd_rho=\SDrho,
	bias_x100_phi=\BIASphi,
	sd_phi=\SDphi,
	bias_x100_beta1=\BIASbetaone,
	sd_beta1=\SDbetaone,
	bias_x100_beta2=\BIASbetatwo,
	sd_beta2=\SDbetatwo,
	},
	filter strcmp={\knntype}{dense},
	tabular=rrrrrr|rrrrrrrr,
	table head=
	\toprule
	\multicolumn{1}{c}{$T$} &
	\multicolumn{1}{c}{$N$} &
	\multicolumn{1}{c}{$\rho^{\text{true}}$} &
	\multicolumn{1}{c}{$\phi^{\text{true}}$} &
	\multicolumn{1}{c}{$\beta_1^{\text{true}}$} &
	\multicolumn{1}{c}{$\beta_2^{\text{true}}$} &
	\multicolumn{2}{c}{$\hat\rho$} &
	\multicolumn{2}{c}{$\hat\phi$} &
	\multicolumn{2}{c}{$\hat\beta_1$} &
	\multicolumn{2}{c}{$\hat\beta_2$} \\
	\cmidrule(lr){7-8}\cmidrule(lr){9-10}
	\cmidrule(lr){11-12}\cmidrule(lr){13-14}
	&&&&&&
	\multicolumn{1}{c}{$100\times\mathrm{Bias}$} &
	\multicolumn{1}{c}{SD} &
	\multicolumn{1}{c}{$100\times\mathrm{Bias}$} &
	\multicolumn{1}{c}{SD} &
	\multicolumn{1}{c}{$100\times\mathrm{Bias}$} &
	\multicolumn{1}{c}{SD} &
	\multicolumn{1}{c}{$100\times\mathrm{Bias}$} &
	\multicolumn{1}{c}{SD} \\
	\midrule,
	table foot = \bottomrule
	]{code/output/simulation_bias_sd.csv}{}{
	\Tval & \Nval &
	\num[group-digits=false,round-precision=1]{\truerho} &
	\num[group-digits=false,round-precision=1]{\truephi} &
	\num[group-digits=false,round-precision=1]{\truebetaone} &
	\num[group-digits=false,round-precision=1]{\truebetatwo} &
	\num[group-digits=false,round-precision=4]{\BIASrho} &
	\num[group-digits=false,round-precision=5]{\SDrho} &
	\num[group-digits=false,round-precision=4]{\BIASphi} &
	\num[group-digits=false,round-precision=5]{\SDphi} &
	\num[group-digits=false,round-precision=4]{\BIASbetaone} &
	\num[group-digits=false,round-precision=5]{\SDbetaone} &
	\num[group-digits=false,round-precision=4]{\BIASbetatwo} &
	\num[group-digits=false,round-precision=5]{\SDbetatwo}
	}
	\caption{\textsc{Dense-matrix Design}.
	Bias and standard deviations.
	Each row is defined by $T$, $N$, and the true parameter values.
	The remaining columns report Monte Carlo bias, multiplied by 100,
	and the unscaled sample standard deviation of $\hat\rho$, $\hat\phi$,
	$\hat\beta_1$, and $\hat\beta_2$.
	Each of the 18 designs requests 1000 replications;
	a fit is accepted if it either converges strictly or has a converged inner loop and a final $\rho$ stationarity gap no larger than $2\times10^{-6}$.
	Between 995 and 1000 estimates per design enter the calculations,
	for 17,984 accepted estimates in total.
	The log determinant is evaluated from the eigenvalues of $W$.}
	\label{tab:sim-bias-sd-dense}
\end{sidewaystable}

%% file: simulation_coverage.tex
\begin{sidewaystable}[t]
	\centering
	\scriptsize
	\begin{filecontents*}{code/output/simulation_coverage_spatial_comparison.csv}
"spec_id","T","N","knn_type","rho","phi","beta1","beta2","coverage_no_hc1_rho","coverage_no_hc1_phi","coverage_pair_hc1_rho","coverage_pair_hc1_phi","coverage_beta1","coverage_beta2"
19,5,500,"dense",0.2,0.4,0.8,-0.3,0.951991828396323,0.861082737487232,0.957099080694586,0.861082737487232,0.947906026557712,0.955056179775281
25,5,500,"dense",0.5,0.25,0.8,-0.3,0.958543983822042,0.902932254802831,0.962588473205258,0.902932254802831,0.964610717896865,0.950455005055612
31,5,500,"dense",0.8,0.1,0.8,-0.3,0.943718592964824,0.932663316582915,0.950753768844221,0.933668341708543,0.957788944723618,0.955778894472362
22,5,1000,"dense",0.2,0.4,0.8,-0.3,0.956521739130435,0.849342770475228,0.96056622851365,0.849342770475228,0.944388270980789,0.9474216380182
28,5,1000,"dense",0.5,0.25,0.8,-0.3,0.950304259634888,0.879310344827586,0.95131845841785,0.879310344827586,0.95131845841785,0.947261663286004
34,5,1000,"dense",0.8,0.1,0.8,-0.3,0.94572864321608,0.927638190954774,0.949748743718593,0.92964824120603,0.935678391959799,0.965829145728643
20,10,500,"dense",0.2,0.4,0.8,-0.3,0.937813440320963,0.936810431293882,0.949849548645938,0.936810431293882,0.94383149448345,0.950852557673019
26,10,500,"dense",0.5,0.25,0.8,-0.3,0.94,0.941,0.956,0.942,0.951,0.944
32,10,500,"dense",0.8,0.1,0.8,-0.3,0.908,0.947,0.918,0.95,0.935,0.954
23,10,1000,"dense",0.2,0.4,0.8,-0.3,0.944,0.948,0.951,0.948,0.955,0.948
29,10,1000,"dense",0.5,0.25,0.8,-0.3,0.934,0.951,0.943,0.951,0.946,0.947
35,10,1000,"dense",0.8,0.1,0.8,-0.3,0.928,0.943,0.934,0.946,0.934,0.966
21,20,500,"dense",0.2,0.4,0.8,-0.3,0.933,0.949,0.953,0.949,0.943,0.945
27,20,500,"dense",0.5,0.25,0.8,-0.3,0.909,0.941,0.94,0.944,0.953,0.949
33,20,500,"dense",0.8,0.1,0.8,-0.3,0.9,0.947,0.932,0.956,0.949,0.946
24,20,1000,"dense",0.2,0.4,0.8,-0.3,0.926,0.953,0.938,0.954,0.949,0.948
30,20,1000,"dense",0.5,0.25,0.8,-0.3,0.941,0.938,0.957,0.94,0.934,0.95
36,20,1000,"dense",0.8,0.1,0.8,-0.3,0.914,0.941,0.934,0.947,0.939,0.95
1,5,500,"sparse",0.2,0.4,0.8,-0.3,0.940755873340143,0.870275791624106,0.943820224719101,0.870275791624106,0.955056179775281,0.960163432073544
7,5,500,"sparse",0.5,0.25,0.8,-0.3,0.921827411167513,0.889340101522843,0.926903553299492,0.891370558375635,0.951269035532995,0.945177664974619
13,5,500,"sparse",0.8,0.1,0.8,-0.3,0.926706827309237,0.936746987951807,0.937751004016064,0.938755020080321,0.958835341365462,0.947791164658635
4,5,1000,"sparse",0.2,0.4,0.8,-0.3,0.923076923076923,0.831983805668016,0.92914979757085,0.831983805668016,0.950404858299595,0.966599190283401
10,5,1000,"sparse",0.5,0.25,0.8,-0.3,0.931632653061224,0.871428571428571,0.936734693877551,0.871428571428571,0.954081632653061,0.951020408163265
16,5,1000,"sparse",0.8,0.1,0.8,-0.3,0.925777331995988,0.92778335005015,0.929789368104313,0.929789368104313,0.942828485456369,0.954864593781344
2,10,500,"sparse",0.2,0.4,0.8,-0.3,0.902,0.945,0.92,0.945,0.943,0.945
8,10,500,"sparse",0.5,0.25,0.8,-0.3,0.907,0.951,0.92,0.952,0.939,0.944
14,10,500,"sparse",0.8,0.1,0.8,-0.3,0.882882882882883,0.940940940940941,0.900900900900901,0.951951951951952,0.931931931931932,0.950950950950951
5,10,1000,"sparse",0.2,0.4,0.8,-0.3,0.913,0.948,0.92,0.948,0.945,0.942
11,10,1000,"sparse",0.5,0.25,0.8,-0.3,0.901,0.95,0.909,0.951,0.965,0.948
17,10,1000,"sparse",0.8,0.1,0.8,-0.3,0.899,0.938,0.904,0.941,0.949,0.947
3,20,500,"sparse",0.2,0.4,0.8,-0.3,0.88,0.952,0.918,0.953,0.953,0.946
9,20,500,"sparse",0.5,0.25,0.8,-0.3,0.864,0.938,0.896,0.941,0.953,0.944
15,20,500,"sparse",0.8,0.1,0.8,-0.3,0.864,0.943,0.897,0.952,0.949,0.95
6,20,1000,"sparse",0.2,0.4,0.8,-0.3,0.869,0.937,0.883,0.937,0.938,0.937
12,20,1000,"sparse",0.5,0.25,0.8,-0.3,0.867,0.949,0.883,0.95,0.939,0.955
18,20,1000,"sparse",0.8,0.1,0.8,-0.3,0.841,0.945,0.854,0.948,0.948,0.95
\end{filecontents*}
\csvreader[
	respect underscore=true,
	column names={
	2=\Tval,
	3=\Nval,
	4=\knntype,
	5=\truerho,
	6=\truephi,
	7=\truebetaone,
	8=\truebetatwo,
	9=\uncorrectedrho,
	10=\uncorrectedphi,
	11=\hcspatialrho,
	12=\hcspatialphi,
	13=\spatialbetaone,
	14=\spatialbetatwo
	},
	filter strcmp={\knntype}{"dense"},
	tabular=rr|rrrr|rr|rr|rr,
	table head=
	\toprule
	&&&&&&
	\multicolumn{2}{c}{No correction} &
	\multicolumn{2}{c}{Pair-only HC1} &
	\multicolumn{2}{c}{Regression slopes} \\
	\cmidrule(lr){7-8}\cmidrule(lr){9-10}\cmidrule(lr){11-12}
	$T$ & $N$ &
	$\rho^{\mathrm{true}}$ & $\phi^{\mathrm{true}}$ &
	$\beta_1^{\mathrm{true}}$ & $\beta_2^{\mathrm{true}}$ &
	$\rho$ & $\phi$ & $\rho$ & $\phi$ & $\beta_1$ & $\beta_2$ \\
	\midrule,
	table foot=\bottomrule
	]{code/output/simulation_coverage_spatial_comparison.csv}{}{
	\Tval & \Nval &
	\num[group-digits=false,round-precision=1]{\truerho} &
	\num[group-digits=false,round-precision=1]{\truephi} &
	\num[group-digits=false,round-precision=1]{\truebetaone} &
	\num[group-digits=false,round-precision=1]{\truebetatwo} &
	\num[group-digits=false,round-precision=3]{\uncorrectedrho} &
	\num[group-digits=false,round-precision=3]{\uncorrectedphi} &
	\num[group-digits=false,round-precision=3]{\hcspatialrho} &
	\num[group-digits=false,round-precision=3]{\hcspatialphi} &
	\num[group-digits=false,round-precision=3]{\spatialbetaone} &
	\num[group-digits=false,round-precision=3]{\spatialbetatwo}
	}
	\caption{\textsc{Dense-matrix Design}.
	Coverage of nominal $95\%$ confidence intervals.
	The first two coverage columns use the martingale-sample covariance estimator in \eqref{eq:martingale_covariance_estimator} without a finite-sample correction,
	while the next two use the pair-only HC1 correction in \eqref{eq:Omega_martingale_pair_hc1}.
	Because the correction leaves coverage for $\beta_1$ and $\beta_2$ unchanged in every design,
	the regression-slope coverage is reported once.
	Each of the 18 designs requests 1000 Monte Carlo replications;
	between 979 and 1000 replications per design produce strictly converged estimates and complete sandwich inference,
	for 17,930 successful replications in total.
	The log determinant is evaluated from the eigenvalues of $W$.}
	\label{tab:sim-coverage-dense}
\end{sidewaystable}

%% file: simulation_initial_condition_projection.tex
\newcommand{\initialprojectionlabel}[1]{%
	\ifcase#1 Full $L=0$%
	\or Full $L=1$%
	\or Full $L=2$%
	\or Full $L=3$%
	\or $y$-only $L=2$%
	\fi}

\begin{sidewaystable}[t]
	\centering
	\begin{filecontents*}{code/output/simulation_initial_condition_projection_table_summary.csv}
"projection_id","projection","n_controls","reps","n_success","mean_projection_r2","mean_max_abs_corr_omitted_y","mean_max_abs_corr_omitted_x","bias_rho","rmse_rho","coverage_rho","bias_phi","rmse_phi","coverage_phi","bias_beta1","rmse_beta1","coverage_beta1","bias_beta2","rmse_beta2","coverage_beta2","covariance_method","finite_sample_correction","logdet_method"
0,"full_L0",3,4000,3915,0.445717458959353,0.167032695342366,0.116125170399673,-0.0182247365934732,0.0280640627671108,0.785696040868455,0.0119666765947803,0.0237967679090091,0.851851851851852,-0.0126539477003855,0.0310405511515223,0.910344827586207,0.00687551189662045,0.025800402331554,0.940740740740741,"martingale_sample","pair_hc1","sparse_lu"
1,"full_L1",6,4000,3910,0.460367638803308,0.0175320475179156,0.0321054408036402,-0.00671744018891688,0.0203962446992465,0.944245524296675,0.00446409659084627,0.0180851601952195,0.939130434782609,-0.00966006489947937,0.0292280818496997,0.931457800511509,0.00469792851179191,0.0248186173535961,0.946803069053708,"martingale_sample","pair_hc1","sparse_lu"
2,"full_L2",9,4000,3913,0.462828560866232,0.00754348124326543,0.0140934191221787,-0.00754682272275059,0.0208010035370543,0.940710452338359,0.00471832491590859,0.0181169804124838,0.938665985177613,-0.00974026227596738,0.0292256275741253,0.929465882954255,0.00482665416862548,0.0248662365697493,0.947866087400971,"martingale_sample","pair_hc1","sparse_lu"
3,"full_L3",12,4000,3910,0.465273335862756,0.00289146540296141,0.00563566089977163,-0.00843216398184458,0.0213109502823222,0.936572890025575,0.00491231288213296,0.0183032200464023,0.936572890025575,-0.00994243834702346,0.0293196622349386,0.929156010230179,0.00475221417424196,0.0248182399418633,0.947570332480818,"martingale_sample","pair_hc1","sparse_lu"
4,"y_only_L2",5,4000,3918,0.459389718820668,0.00769380500972793,0.0527227343502086,-0.00610934985211329,0.0200153678521521,0.942827973455845,0.00429006195118437,0.0179648872872931,0.940786115364982,-0.00960917004961941,0.0291297752941919,0.929555895865237,0.0048487171413972,0.0249643288353175,0.946401225114854,"martingale_sample","pair_hc1","sparse_lu"
\end{filecontents*}
\csvreader[
	respect underscore=true,
	column names={
	1=\projectionid,
	3=\ncontrols,
	7=\corry,
	8=\corrx,
	9=\biasrho,
	10=\rmserho,
	11=\coveragerho,
	12=\biasphi,
	13=\rmsephi,
	14=\coveragephi,
	17=\coveragebetaone,
	20=\coveragebetatwo
	},
	tabular=lrrr|rrr|rrr|rr,
	table head=
	\toprule
	&&&& \multicolumn{3}{c}{$\rho$} &
	\multicolumn{3}{c}{$\phi$} &
	\multicolumn{2}{c}{Coverage} \\
	\cmidrule(lr){5-7}\cmidrule(lr){8-10}\cmidrule(lr){11-12}
	Projection & $q$ & $\operatorname{Corr}_y$ &
	$\operatorname{Corr}_x$ & Bias & RMSE & Coverage &
	Bias & RMSE & Coverage & $\beta_1$ & $\beta_2$ \\
	\midrule,
	table foot=\bottomrule
	]{code/output/simulation_initial_condition_projection_table_summary.csv}{}{
	\initialprojectionlabel{\projectionid} & \ncontrols &
	\num[round-precision=3]{\corry} &
	\num[round-precision=3]{\corrx} &
	\num[round-precision=3]{\biasrho} &
	\num[round-precision=3]{\rmserho} &
	\num[round-precision=3]{\coveragerho} &
	\num[round-precision=3]{\biasphi} &
	\num[round-precision=3]{\rmsephi} &
	\num[round-precision=3]{\coveragephi} &
	\num[round-precision=3]{\coveragebetaone} &
	\num[round-precision=3]{\coveragebetatwo}
	}
	\caption{Long-running initial conditions:
	projection and estimation performance averaged across 8 network designs.
	The full projection includes spatial transformations of both
	$Q_X$ and $y_0$ through order $L$; the $y$-only specification includes
	$Q_X$ without spatial transformations and spatial transformations of $y_0$
	through order two. The column $q$ is the number of controls.
	$\operatorname{Corr}_y$ and $\operatorname{Corr}_x$ are the mean maximum
	absolute correlations between the loading residuals and the next omitted
	spatial transformations of $y_0$ and $Q_X$, respectively.
	The regressors contain a factor loading component with coefficient $0.5$.
	The eight designs combine $T\in\{5,10\}$, $N\in\{500,1000\}$, and
	$\rho\in\{0.2,0.5\}$. Bias and RMSE are computed relative to the true
	parameter value in each design and pooled across strictly converged fits.
	Each design has 500 Monte Carlo replications.
	All designs evaluate the log determinant exactly by sparse LU
	factorization.
	Inference uses the martingale-sample covariance estimator with the pair-only
	HC1 correction. Depending on the projection, 3910--3918 of the 4000 attempted
	fits converged strictly and produced spatial-sandwich standard errors.}
	\label{tab:sim-initial-projection-summary}
\end{sidewaystable}

\begin{sidewaystable}[t]
	\centering
	\begin{filecontents*}{code/output/simulation_initial_condition_projection_table_coverage.csv}
"spec_id","T","N","rho","coverage_full_L0_rho","coverage_full_L0_phi","coverage_full_L1_rho","coverage_full_L1_phi","coverage_full_L2_rho","coverage_full_L2_phi","coverage_full_L3_rho","coverage_full_L3_phi","coverage_y_only_L2_rho","coverage_y_only_L2_phi","covariance_method","finite_sample_correction","logdet_method"
1,5,500,0.2,0.900212314225053,0.934182590233546,0.940042826552463,0.940042826552463,0.933618843683084,0.944325481798715,0.928270042194093,0.940928270042194,0.938428874734607,0.942675159235669,"martingale_sample","pair_hc1","sparse_lu"
5,5,500,0.5,0.642241379310345,0.75,0.918454935622318,0.935622317596567,0.919148936170213,0.938297872340425,0.904555314533623,0.932754880694143,0.915401301518438,0.939262472885033,"martingale_sample","pair_hc1","sparse_lu"
3,5,1000,0.2,0.871428571428571,0.910204081632653,0.944672131147541,0.92827868852459,0.940695296523517,0.924335378323108,0.938271604938272,0.917695473251029,0.94758064516129,0.929435483870968,"martingale_sample","pair_hc1","sparse_lu"
7,5,1000,0.5,0.312244897959184,0.495918367346939,0.934560327198364,0.910020449897751,0.926078028747433,0.909650924024641,0.922448979591837,0.908163265306122,0.922448979591837,0.910204081632653,"martingale_sample","pair_hc1","sparse_lu"
2,10,500,0.2,0.924,0.948,0.94,0.948,0.938,0.944,0.94,0.946,0.936,0.946,"martingale_sample","pair_hc1","sparse_lu"
6,10,500,0.5,0.884,0.92,0.964,0.95,0.958,0.952,0.953907815631262,0.949899799599198,0.97,0.954,"martingale_sample","pair_hc1","sparse_lu"
4,10,1000,0.2,0.92,0.942,0.95,0.946,0.948,0.944,0.95,0.942,0.944,0.946,"martingale_sample","pair_hc1","sparse_lu"
8,10,1000,0.5,0.82,0.906,0.96,0.954,0.96,0.952,0.952,0.954,0.966,0.958,"martingale_sample","pair_hc1","sparse_lu"
\end{filecontents*}
\csvreader[
	respect underscore=true,
	column names={
	2=\Tval,
	3=\Nval,
	4=\truerho,
	5=\Lzerorho,
	6=\Lzerophi,
	7=\Lonerho,
	8=\Lonephi,
	9=\Ltworho,
	10=\Ltwophi,
	11=\Lthreerho,
	12=\Lthreephi,
	13=\yonlyrho,
	14=\yonlyphi
	},
	tabular=rrr|rr|rr|rr|rr|rr,
	table head=
	\toprule
	&&& \multicolumn{2}{c}{Full $L=0$} &
	\multicolumn{2}{c}{Full $L=1$} &
	\multicolumn{2}{c}{Full $L=2$} &
	\multicolumn{2}{c}{Full $L=3$} &
	\multicolumn{2}{c}{$y$-only $L=2$} \\
	\cmidrule(lr){4-5}\cmidrule(lr){6-7}\cmidrule(lr){8-9}
	\cmidrule(lr){10-11}\cmidrule(lr){12-13}
	$T$ & $N$ & $\rho^{\mathrm{true}}$ &
	$\rho$ & $\phi$ & $\rho$ & $\phi$ & $\rho$ & $\phi$ &
	$\rho$ & $\phi$ & $\rho$ & $\phi$ \\
	\midrule,
	table foot=\bottomrule
	]{code/output/simulation_initial_condition_projection_table_coverage.csv}{}{
	\Tval & \Nval & \truerho &
	\num[round-precision=3]{\Lzerorho} &
	\num[round-precision=3]{\Lzerophi} &
	\num[round-precision=3]{\Lonerho} &
	\num[round-precision=3]{\Lonephi} &
	\num[round-precision=3]{\Ltworho} &
	\num[round-precision=3]{\Ltwophi} &
	\num[round-precision=3]{\Lthreerho} &
	\num[round-precision=3]{\Lthreephi} &
	\num[round-precision=3]{\yonlyrho} &
	\num[round-precision=3]{\yonlyphi}
	}
	\caption{Long-running initial conditions: coverage of nominal $95\%$
	confidence intervals for $\rho$ and $\phi$. Each design contains 500 Monte
	Carlo panels, and the same panel is estimated under all five projection
	specifications. The eight dynamically stable  designs use
	$T\in\{5,10\}$, $N\in\{500,1000\}$, and $\rho\in\{0.2,0.5\}$.
	All designs evaluate the log determinant exactly by sparse LU
	factorization.
	Coverage is computed from the 461--500 strictly converged fits in each cell;
	there were no spatial-sandwich failures among those fits.
	Inference uses the martingale-sample covariance estimator with the pair-only HC1 correction.
	The full and $y$-only projections are defined in the text and summarized in
	Table~\ref{tab:sim-initial-projection-summary}.}
	\label{tab:sim-initial-projection-coverage}
\end{sidewaystable}

%% file: application_female_lfp_baseline.tex
\begin{table}[htbp]
\centering
\caption{Female labor-force participation: baseline QMLE estimates}
\label{tab:application-female-lfp-baseline}
\scriptsize
\setlength{\tabcolsep}{3pt}
\begin{tabular}{lrrrrrr}
\toprule
& \multicolumn{2}{c}{Wages and $WX$} & \multicolumn{2}{c}{No wages; with $WX$} & \multicolumn{2}{c}{No wages or $WX$} \\
\cmidrule(lr){2-3}\cmidrule(lr){4-5}\cmidrule(lr){6-7}
Variable & Estimate (SE) & Contribution & Estimate (SE) & Contribution & Estimate (SE) & Contribution \\
\midrule
$y_{i,t-1}$ & 0.372 (0.019) & -- & 0.467 (0.035) & -- & 0.364 (0.016) & -- \\
$Wy_t$ & 0.521 (0.017) & -- & 0.479 (0.016) & -- & 0.506 (0.011) & -- \\
Urban population & 0.011 (0.005) & -10.00 & 0.017 (0.005) & -12.41 & -0.001 (0.003) & -0.52 \\
Farm population & -0.079 (0.015) & 67.55 & -0.081 (0.006) & 36.47 & -0.045 (0.007) & 38.95 \\
Education & 0.555 (0.111) & 76.14 & 0.638 (0.079) & 49.47 & 0.680 (0.089) & 69.98 \\
Density/1000 & 0.084 (0.027) & -0.20 & 0.134 (0.043) & -0.24 & -0.053 (0.049) & -0.05 \\
Wages/1000 & -0.062 (0.017) & -17.27 & -- & -- & -- & -- \\
$W$ Urban population & -0.034 (0.008) & -- & -0.031 (0.005) & -- & -- & -- \\
$W$ Farm population & 0.016 (0.019) & -- & 0.063 (0.007) & -- & -- & -- \\
$W$ Education & 0.051 (0.147) & -- & -0.438 (0.089) & -- & -- & -- \\
$W$ Density/1000 & -0.263 (0.038) & -- & -0.243 (0.091) & -- & -- & -- \\
$W$ Wages/1000 & 0.001 (0.037) & -- & -- & -- & -- & -- \\
\midrule
Sum of contributions & -- & 116.22 & -- & 73.29 & -- & 108.36 \\
Stability sum & 0.893 & -- & 0.946 & -- & 0.870 & -- \\
Observations & 9414 & -- & 18396 & -- & 18396 & -- \\
\bottomrule
\end{tabular}
\begin{minipage}{0.98\textwidth}
\scriptsize Notes: All QMLE specifications use two common factors. The baseline model sets the coefficient on $Wy_{t-1}$ and the spatial-error coefficient to zero. Standard errors use the martingale-sample covariance estimator with the pair-only HC1 correction in \eqref{eq:Omega_martingale_pair_hc1}. Contributions are percentages of the 1940--2000 increase in female labor-force participation.
\end{minipage}
\end{table}

%% file: application_female_lfp.tex
\begin{table}[htbp]
\centering
\caption{Lagged-spatial-outcome extension: published BC-QMLE and proposed QMLE}
\label{tab:application-female-lfp-extension}
\scriptsize
\setlength{\tabcolsep}{3pt}
\begin{tabular}{lrrrr}
\toprule
& \multicolumn{2}{c}{Estimate (standard error)} & \multicolumn{2}{c}{Contribution (\%)} \\
\cmidrule(lr){2-3}\cmidrule(lr){4-5}
Variable & Published BC-QMLE & Proposed QMLE & Published BC-QMLE & Proposed QMLE \\
\midrule
\multicolumn{5}{l}{\emph{Wages and spatially lagged controls included}} \\
$y_{i,t-1}$ & 0.153 (0.012) & 0.436 (0.051) & -- & -- \\
$Wy_t$ & -0.074 (0.075) & 0.530 (0.015) & -- & -- \\
$Wy_{t-1}$ & 0.504 (0.026) & -0.073 (0.043) & -- & -- \\
Urban population & 0.017 (0.004) & 0.009 (0.005) & -2.02 & -9.81 \\
Farm population & -0.081 (0.007) & -0.075 (0.013) & 54.19 & 55.49 \\
Education & 0.017 (0.069) & 0.592 (0.113) & -0.51 & 88.99 \\
Density/1000 & 0.031 (0.064) & 0.076 (0.026) & -0.18 & -0.18 \\
Wages/1000 & -0.053 (0.015) & -0.062 (0.017) & -8.09 & -10.28 \\
$W$ Urban population & -0.036 (0.009) & -0.032 (0.007) & -- & -- \\
$W$ Farm population & -0.120 (0.015) & 0.022 (0.020) & -- & -- \\
$W$ Education & -0.033 (0.127) & 0.119 (0.158) & -- & -- \\
$W$ Density/1000 & -0.664 (0.127) & -0.239 (0.039) & -- & -- \\
$W$ Wages/1000 & -0.059 (0.035) & 0.025 (0.043) & -- & -- \\
$W$ error & 0.595 (0.073) & -- & -- & -- \\
Sum of contributions & -- & -- & 43.40 & 124.21 \\
Stability sum & 0.582 & 0.893 & -- & -- \\
Observations & 9414 & 9414 & -- & -- \\
\addlinespace
\multicolumn{5}{l}{\emph{Wages excluded; spatially lagged controls included}} \\
$y_{i,t-1}$ & 0.148 (0.010) & 0.595 (0.206) & -- & -- \\
$Wy_t$ & -0.340 (0.089) & 0.509 (0.021) & -- & -- \\
$Wy_{t-1}$ & 0.381 (0.017) & -0.150 (0.164) & -- & -- \\
Urban population & 0.011 (0.003) & 0.017 (0.005) & 1.77 & -2.57 \\
Farm population & -0.070 (0.005) & -0.054 (0.099) & 21.27 & 5.35 \\
Education & 0.060 (0.058) & 0.656 (0.972) & 1.88 & 68.67 \\
Density/1000 & 0.093 (0.077) & 0.022 (0.452) & -0.10 & -0.09 \\
$W$ Urban population & 0.021 (0.007) & -0.019 (0.058) & -- & -- \\
$W$ Farm population & -0.083 (0.012) & 0.052 (0.016) & -- & -- \\
$W$ Education & 0.053 (0.100) & -0.416 (0.331) & -- & -- \\
$W$ Density/1000 & -0.820 (0.162) & -0.060 (0.587) & -- & -- \\
$W$ error & 0.830 (0.090) & -- & -- & -- \\
Sum of contributions & -- & -- & 24.81 & 71.36 \\
Stability sum & 0.189 & 0.953 & -- & -- \\
Observations & 18396 & 18396 & -- & -- \\
\addlinespace
\multicolumn{5}{l}{\emph{Wages and spatially lagged controls excluded}} \\
$y_{i,t-1}$ & 0.127 (0.009) & 0.389 (0.043) & -- & -- \\
$Wy_t$ & 0.087 (0.052) & 0.509 (0.011) & -- & -- \\
$Wy_{t-1}$ & 0.454 (0.017) & -0.025 (0.034) & -- & -- \\
Urban population & 0.006 (0.003) & -0.002 (0.003) & 0.89 & -0.69 \\
Farm population & -0.090 (0.005) & -0.042 (0.008) & 30.76 & 37.43 \\
Education & 0.004 (0.055) & 0.702 (0.098) & 0.17 & 73.99 \\
Density/1000 & 0.089 (0.075) & -0.051 (0.048) & 0.03 & -0.05 \\
$W$ error & 0.432 (0.051) & -- & -- & -- \\
Sum of contributions & -- & -- & 31.86 & 110.67 \\
Stability sum & 0.668 & 0.873 & -- & -- \\
Observations & 18396 & 18396 & -- & -- \\
\addlinespace
\bottomrule
\end{tabular}
\begin{minipage}{0.98\textwidth}
\scriptsize Notes: The published BC-QMLE columns report the Tziolas--Elhorst bias-corrected QML results transcribed from their replication output. The proposed QMLE specifications use two common factors, and their standard errors use the martingale-sample covariance estimator with the pair-only HC1 correction in \eqref{eq:Omega_martingale_pair_hc1}. Contributions use the 1940--2000 changes reported in their Table~1. Their specifications include a spatial autoregressive error coefficient; the proposed specifications do not.
\end{minipage}
\end{table}

%% file: application_projection_robustness.tex
\begin{table}[htbp]
\centering
\caption{Robustness to the SERL enrichment order with two common factors}
\label{tab:application-projection-robustness}
\small
\begin{tabular}{lrrr}
\toprule
Parameter & $L=0$ & $L=1$ & $L=2$ \\
\midrule
$Wy_t$ & 0.471 (0.011) & 0.509 (0.011) & 0.507 (0.011) \\
$y_{i,t-1}$ & 0.523 (0.021) & 0.389 (0.043) & 0.378 (0.038) \\
$Wy_{t-1}$ & -0.142 (0.016) & -0.025 (0.034) & -0.017 (0.030) \\
Urban population & 0.002 (0.003) & -0.002 (0.003) & -0.001 (0.003) \\
Farm population & -0.021 (0.006) & -0.042 (0.008) & -0.044 (0.008) \\
Education & 0.692 (0.099) & 0.702 (0.098) & 0.677 (0.097) \\
Density/1000 & -0.001 (0.024) & -0.051 (0.048) & -0.049 (0.047) \\
\midrule
Stability sum & 0.852 & 0.873 & 0.869 \\
\bottomrule
\end{tabular}
\begin{minipage}{0.90\textwidth}
\small Notes: Standard errors use the martingale-sample covariance estimator with the pair-only HC1 correction in \eqref{eq:Omega_martingale_pair_hc1}. For $L=0$, the smallest estimated eigenvalue of $\Sigma_\eta$ is effectively zero, so those standard errors should be interpreted cautiously because the interior covariance condition is nearly binding.
\end{minipage}
\end{table}

%% file: application_female_lfp_second_order.tex
\begin{table}[htbp]
\centering
\caption{Preferred second-order QMLE estimates}
\label{tab:application-second-order}
\scriptsize
\setlength{\tabcolsep}{2.5pt}
\resizebox{\textwidth}{!}{%
\begin{tabular}{lrrrrrr}
\toprule
& \multicolumn{2}{c}{Wages and $WX$} & \multicolumn{2}{c}{No wages; with $WX$} & \multicolumn{2}{c}{No wages or $WX$} \\
\cmidrule(lr){2-3}\cmidrule(lr){4-5}\cmidrule(lr){6-7}
Variable & Estimate (SE) & Contribution (SE) & Estimate (SE) & Contribution (SE) & Estimate (SE) & Contribution (SE) \\
\midrule
$\rho_1$ & 0.203 (0.032) & -- & 0.199 (0.020) & -- & 0.219 (0.020) & -- \\
$\rho_2$ & 0.468 (0.038) & -- & 0.487 (0.025) & -- & 0.453 (0.027) & -- \\
$y_{i,t-1}$ & 0.614 (0.025) & -- & 0.552 (0.029) & -- & 0.568 (0.032) & -- \\
$Wy_{t-1}$ & 0.037 (0.033) & -- & 0.006 (0.026) & -- & 0.033 (0.025) & -- \\
$W^2y_{t-1}$ & -0.356 (0.043) & -- & -0.266 (0.038) & -- & -0.303 (0.035) & -- \\
Urban population & 0.013 (0.004) & -17.00 (9.03) & 0.019 (0.005) & -21.58 (12.86) & 0.014 (0.005) & -5.80 (7.14) \\
Farm population & -0.024 (0.009) & 37.61 (27.89) & -0.049 (0.007) & 55.67 (29.64) & -0.057 (0.008) & 19.97 (18.74) \\
Education & 0.527 (0.084) & 82.39 (38.63) & 0.646 (0.072) & 73.70 (37.17) & 0.536 (0.065) & 71.79 (26.21) \\
Density/1000 & 0.060 (0.015) & -0.21 (0.09) & 0.098 (0.040) & -0.46 (0.27) & 0.071 (0.039) & -0.23 (0.16) \\
Wages/1000 & -0.060 (0.017) & 3.92 (28.81) & -- & -- & -- & -- \\
$W$ Urban population & -0.011 (0.007) & -- & -0.003 (0.006) & -- & -0.018 (0.004) & -- \\
$W$ Farm population & 0.090 (0.015) & -- & 0.081 (0.011) & -- & 0.052 (0.007) & -- \\
$W$ Education & 0.608 (0.177) & -- & 0.081 (0.130) & -- & -0.373 (0.074) & -- \\
$W$ Density/1000 & -0.089 (0.044) & -- & -0.060 (0.067) & -- & -0.130 (0.070) & -- \\
$W$ Wages/1000 & 0.030 (0.040) & -- & -- & -- & -- & -- \\
$W^2$ Urban population & -0.015 (0.008) & -- & -0.026 (0.007) & -- & -- & -- \\
$W^2$ Farm population & -0.078 (0.018) & -- & -0.042 (0.013) & -- & -- & -- \\
$W^2$ Education & -0.935 (0.210) & -- & -0.607 (0.152) & -- & -- & -- \\
$W^2$ Density/1000 & -0.031 (0.042) & -- & -0.124 (0.074) & -- & -- & -- \\
$W^2$ Wages/1000 & 0.034 (0.048) & -- & -- & -- & -- & -- \\
\midrule
Sum of contributions & -- & 106.70 & -- & 107.33 & -- & 85.73 \\
Dynamic coefficient for uniform change & 0.901 & -- & 0.930 & -- & 0.908 & -- \\
Observations & 9414 & -- & 18396 & -- & 18396 & -- \\
\bottomrule
\end{tabular}
}
\begin{minipage}{0.98\textwidth}
\scriptsize Notes: The filter is $(I_N-\kappa_1W)(I_N-\kappa_2W)=I_N-\rho_1W-\rho_2W^2$. All specifications use two common factors and $L=1$. Standard errors use the martingale-sample covariance estimator with the pair-only HC1 correction in \eqref{eq:Omega_martingale_pair_hc1}; contribution standard errors use the delta method.
\end{minipage}
\end{table}

%% file: simulation_bias_sd_sparse.tex
\begin{sidewaystable}[t]
	\centering
	\scriptsize
	\csvreader[
	respect underscore=true,
	column names={
	T=\Tval,
	N=\Nval,
	knn_type=\knntype,
	rho=\truerho,
	phi=\truephi,
	beta1=\truebetaone,
	beta2=\truebetatwo,
	bias_x100_rho=\BIASrho,
	sd_rho=\SDrho,
	bias_x100_phi=\BIASphi,
	sd_phi=\SDphi,
	bias_x100_beta1=\BIASbetaone,
	sd_beta1=\SDbetaone,
	bias_x100_beta2=\BIASbetatwo,
	sd_beta2=\SDbetatwo,
	},
	filter strcmp={\knntype}{sparse},
	tabular=rrrrrr|rrrrrrrr,
	table head=
	\toprule
	\multicolumn{1}{c}{$T$} &
	\multicolumn{1}{c}{$N$} &
	\multicolumn{1}{c}{$\rho^{\text{true}}$} &
	\multicolumn{1}{c}{$\phi^{\text{true}}$} &
	\multicolumn{1}{c}{$\beta_1^{\text{true}}$} &
	\multicolumn{1}{c}{$\beta_2^{\text{true}}$} &
	\multicolumn{2}{c}{$\hat\rho$} &
	\multicolumn{2}{c}{$\hat\phi$} &
	\multicolumn{2}{c}{$\hat\beta_1$} &
	\multicolumn{2}{c}{$\hat\beta_2$} \\
	\cmidrule(lr){7-8}\cmidrule(lr){9-10}
	\cmidrule(lr){11-12}\cmidrule(lr){13-14}
	&&&&&&
	\multicolumn{1}{c}{$100\times\mathrm{Bias}$} &
	\multicolumn{1}{c}{SD} &
	\multicolumn{1}{c}{$100\times\mathrm{Bias}$} &
	\multicolumn{1}{c}{SD} &
	\multicolumn{1}{c}{$100\times\mathrm{Bias}$} &
	\multicolumn{1}{c}{SD} &
	\multicolumn{1}{c}{$100\times\mathrm{Bias}$} &
	\multicolumn{1}{c}{SD} \\
	\midrule,
	table foot = \bottomrule
	]{code/output/simulation_bias_sd.csv}{}{
	\Tval & \Nval &
	\num[group-digits=false,round-precision=1]{\truerho} &
	\num[group-digits=false,round-precision=1]{\truephi} &
	\num[group-digits=false,round-precision=1]{\truebetaone} &
	\num[group-digits=false,round-precision=1]{\truebetatwo} &
	\num[group-digits=false,round-precision=4]{\BIASrho} &
	\num[group-digits=false,round-precision=5]{\SDrho} &
	\num[group-digits=false,round-precision=4]{\BIASphi} &
	\num[group-digits=false,round-precision=5]{\SDphi} &
	\num[group-digits=false,round-precision=4]{\BIASbetaone} &
	\num[group-digits=false,round-precision=5]{\SDbetaone} &
	\num[group-digits=false,round-precision=4]{\BIASbetatwo} &
	\num[group-digits=false,round-precision=5]{\SDbetatwo}
	}
	\caption{\textsc{Sparse-matrix Design}.
	Bias and standard deviations.
	The implementation evaluates log determinant using a 30-term Hutchinson trace approximation with 25 Rademacher vectors.
	The columns and acceptance rule are defined as in Table~\ref{tab:sim-bias-sd-dense}.
	Each of the 18 designs requests 1000 replications;
	between 995 and 1000 estimates per design enter the calculations,
	for 17,982 accepted estimates in total.}
	\label{tab:sim-bias-sd-sparse}
\end{sidewaystable}

%% file: simulation_coverage_sparse.tex
\begin{sidewaystable}[t]
	\centering
	\scriptsize
	\csvreader[
	respect underscore=true,
	column names={
	2=\Tval,
	3=\Nval,
	4=\knntype,
	5=\truerho,
	6=\truephi,
	7=\truebetaone,
	8=\truebetatwo,
	9=\uncorrectedrho,
	10=\uncorrectedphi,
	11=\hcspatialrho,
	12=\hcspatialphi,
	13=\spatialbetaone,
	14=\spatialbetatwo
	},
	filter strcmp={\knntype}{"sparse"},
	tabular=rr|rrrr|rr|rr|rr,
	table head=
	\toprule
	&&&&&&
	\multicolumn{2}{c}{No correction} &
	\multicolumn{2}{c}{Pair-only HC1} &
	\multicolumn{2}{c}{Regression slopes} \\
	\cmidrule(lr){7-8}\cmidrule(lr){9-10}\cmidrule(lr){11-12}
	$T$ & $N$ &
	$\rho^{\mathrm{true}}$ & $\phi^{\mathrm{true}}$ &
	$\beta_1^{\mathrm{true}}$ & $\beta_2^{\mathrm{true}}$ &
	$\rho$ & $\phi$ & $\rho$ & $\phi$ & $\beta_1$ & $\beta_2$ \\
	\midrule,
	table foot=\bottomrule
	]{code/output/simulation_coverage_spatial_comparison.csv}{}{
	\Tval & \Nval &
	\num[group-digits=false,round-precision=1]{\truerho} &
	\num[group-digits=false,round-precision=1]{\truephi} &
	\num[group-digits=false,round-precision=1]{\truebetaone} &
	\num[group-digits=false,round-precision=1]{\truebetatwo} &
	\num[group-digits=false,round-precision=3]{\uncorrectedrho} &
	\num[group-digits=false,round-precision=3]{\uncorrectedphi} &
	\num[group-digits=false,round-precision=3]{\hcspatialrho} &
	\num[group-digits=false,round-precision=3]{\hcspatialphi} &
	\num[group-digits=false,round-precision=3]{\spatialbetaone} &
	\num[group-digits=false,round-precision=3]{\spatialbetatwo}
	}
	\caption{\textsc{Sparse-matrix Design}.
	Coverage of nominal $95\%$ confidence intervals.
	The implementation evaluates log determinant using a 30-term Hutchinson trace approximation with 25 Rademacher vectors.
	The first two coverage columns use the martingale-sample covariance estimator without a finite-sample correction,
	while the next two use the preferred pair-only HC1 correction in \eqref{eq:Omega_martingale_pair_hc1}.
	Each of the 18 designs requests 1000 Monte Carlo replications;
	between 979 and 1000 replications per design produce strictly converged estimates and complete sandwich inference,
	for 17,924 successful replications in total.}
	\label{tab:sim-coverage-sparse}
\end{sidewaystable}

%% file: simulation_sparse_logdet_diagnostic.tex
\begin{table}[t]
	\centering
	\small
	\begin{filecontents*}{code/output/simulation_sparse_logdet_diagnostic_table.csv}
"N","rho","phi","coverage_rho_trace_30_25","coverage_phi_trace_30_25","coverage_rho_trace_60_100","coverage_phi_trace_60_100","coverage_rho_sparse_lu","coverage_phi_sparse_lu"
500,0.2,0.4,0.885,0.935,0.945,0.945,0.97,0.945
500,0.5,0.25,0.91,0.96,0.945,0.96,0.95,0.955
500,0.8,0.1,0.88,0.945,0.905,0.95,0.93,0.95
1000,0.2,0.4,0.865,0.945,0.915,0.945,0.94,0.945
1000,0.5,0.25,0.835,0.955,0.91,0.96,0.925,0.955
1000,0.8,0.1,0.855,0.945,0.925,0.955,0.94,0.96
\end{filecontents*}
\csvreader[
	column names={
	1=\Nval,
	2=\truerho,
	3=\truephi,
	4=\tracebaserho,
	5=\tracebasephi,
	6=\tracerichrho,
	7=\tracerichphi,
	8=\lurho,
	9=\luphi
	},
	tabular=rrr|rr|rr|rr,
	table head=
	\toprule
	&&&
	\multicolumn{2}{c}{Trace 30/25} &
	\multicolumn{2}{c}{Trace 60/100} &
	\multicolumn{2}{c}{Sparse LU} \\
	\cmidrule(lr){4-5}\cmidrule(lr){6-7}\cmidrule(lr){8-9}
	$N$ & $\rho$ & $\phi$ &
	$\rho$ & $\phi$ & $\rho$ & $\phi$ & $\rho$ & $\phi$ \\
	\midrule,
	table foot=\bottomrule
	]{code/output/simulation_sparse_logdet_diagnostic_table.csv}{}{
	\Nval &
	\num[group-digits=false,round-precision=1]{\truerho} &
	\num[group-digits=false,round-precision=2]{\truephi} &
	\num[group-digits=false,round-precision=3]{\tracebaserho} &
	\num[group-digits=false,round-precision=3]{\tracebasephi} &
	\num[group-digits=false,round-precision=3]{\tracerichrho} &
	\num[group-digits=false,round-precision=3]{\tracerichphi} &
	\num[group-digits=false,round-precision=3]{\lurho} &
	\num[group-digits=false,round-precision=3]{\luphi}
	}
	\caption{Coverage under alternative sparse log-determinant calculations.
	All designs set $T=20$ and use the same 200 simulated panels under each method.
	Trace 30/25 denotes a 30-term Hutchinson approximation with 25 Rademacher vectors;
	Trace 60/100 uses 60 terms and 100 vectors.
	Intervals use the martingale-sample covariance estimator with the pair-only HC1 correction.}
	\label{tab:sim-sparse-logdet}
\end{table}

%% file: simulation_symmetric_design_pilot.tex
\begin{table}[t]
	\centering
	\small
	\begin{filecontents*}{code/output/simulation_symmetric_design_pilot_summary.csv}
"spec_id","T","N","knn_type","rho","phi","stability_sum","beta1","beta2","reps","n_success","bias_rho","bias_phi","bias_beta1","bias_beta2","sd_rho","sd_phi","sd_beta1","sd_beta2","mean_se_rho","mean_se_phi","mean_se_beta1","mean_se_beta2","coverage_rho","coverage_phi","coverage_beta1","coverage_beta2","covariance_method","finite_sample_correction"
1,5,500,"dense",0.2,0.7,0.9,0.8,-0.3,200,194,-0.0102984410222472,-0.00322367868822604,-0.00902225534938572,0.00465380198878812,0.0287120405809548,0.0569147315814397,0.0633706939234438,0.0592538309514958,0.0596194982665554,0.297053294373765,0.383554995951671,0.188134982826728,0.93298969072165,0.845360824742268,0.948453608247423,0.963917525773196,"martingale_sample","pair_hc1"
3,5,500,"dense",0.3,0.6,0.9,0.8,-0.3,200,196,-0.0108973942514006,-0.00271076661959213,0.00100387751554876,0.00273842835991545,0.0299847030908159,0.0523461965740738,0.0659677367371511,0.0659178195296354,0.0360970149975494,0.100654619491773,0.082509207471633,0.0728284235053955,0.928571428571429,0.887755102040816,0.943877551020408,0.938775510204082,"martingale_sample","pair_hc1"
5,5,500,"dense",0.4,0.5,0.9,0.8,-0.3,200,198,-0.00875710449925349,-0.00395226555720346,-0.000137508667311548,-0.00369221614736109,0.0270592289261413,0.0482413706503211,0.0625248528139342,0.0622480623664141,0.0296118537925394,0.0508841402282724,0.0671943896641804,0.0665414619596504,0.94949494949495,0.909090909090909,0.94949494949495,0.944444444444444,"martingale_sample","pair_hc1"
7,5,500,"dense",0.5,0.4,0.9,0.8,-0.3,200,194,-0.00975832498594364,0.00173020023181852,-0.00367537404875251,-0.000136546911379709,0.025502646529208,0.0399719608340952,0.0678777422423905,0.0651520464463845,0.0270955761271003,0.0425076908223016,0.0660204916444638,0.0634015535971841,0.922680412371134,0.871134020618557,0.93298969072165,0.938144329896907,"martingale_sample","pair_hc1"
9,5,500,"dense",0.6,0.3,0.9,0.8,-0.3,200,191,-0.00724020899633426,-0.00116129925437125,-0.00315778706348879,-0.000529351607372774,0.0231120678131055,0.0369161490939638,0.0593446189957098,0.0620441376575495,0.026541831265966,0.0574264583188388,0.0734880508039438,0.0720194767815553,0.942408376963351,0.926701570680628,0.973821989528796,0.968586387434555,"martingale_sample","pair_hc1"
11,5,500,"dense",0.7,0.2,0.9,0.8,-0.3,200,198,-0.00707894700038647,-0.000799842755809826,0.00593139535493814,-0.00957931783807203,0.0185305298283434,0.0276806741521537,0.0614445842623871,0.0565820972100995,0.0194734277296005,0.0307725617030655,0.0639837202969095,0.0644345342180176,0.94949494949495,0.939393939393939,0.954545454545455,0.95959595959596,"martingale_sample","pair_hc1"
2,20,500,"dense",0.2,0.7,0.9,0.8,-0.3,200,200,-0.00458405470426157,6.64386599510403e-05,-0.000512011050740946,-0.00236905227108741,0.0124640930510508,0.007890446056397,0.0257644118881488,0.0285193444003693,0.0141468148400491,0.0079425378487364,0.0279160160770207,0.0280488245962045,0.96,0.95,0.955,0.94,"martingale_sample","pair_hc1"
4,20,500,"dense",0.3,0.6,0.9,0.8,-0.3,200,200,-0.00627762882506566,0.00148409138446948,0.000918788238943646,-0.000833350297569146,0.0130925796281435,0.00889913081691344,0.0275126225708073,0.029854972179597,0.0139799766161577,0.0086285961065482,0.0279524249179729,0.0280327987197886,0.955,0.935,0.945,0.92,"martingale_sample","pair_hc1"
6,20,500,"dense",0.4,0.5,0.9,0.8,-0.3,200,200,-0.00602720283668012,0.000941109037924659,0.000375131120611494,-0.00249624648342184,0.0114511814388521,0.00864391135272872,0.0271075216634722,0.0296475030156994,0.0132655886178262,0.00895251233206118,0.0279873778950226,0.0280513329384277,0.965,0.965,0.95,0.955,"martingale_sample","pair_hc1"
8,20,500,"dense",0.5,0.4,0.9,0.8,-0.3,200,200,-0.00554634776718987,0.00122616871955489,0.00194075801366755,-0.00257390615278036,0.0114627813420148,0.00854851085724239,0.0281764025308254,0.0280273023498096,0.0125306354794001,0.00908586879447913,0.0279256521989985,0.0279394177898086,0.94,0.96,0.955,0.945,"martingale_sample","pair_hc1"
10,20,500,"dense",0.6,0.3,0.9,0.8,-0.3,200,200,-0.00530558615561383,0.000949509085967748,0.000136527677795595,-0.000348165413227353,0.0106219923146129,0.00892510005122907,0.0256531871357383,0.0314228048319751,0.0113033494290962,0.00897816115896214,0.0279963978455628,0.0280615659272759,0.95,0.96,0.98,0.915,"martingale_sample","pair_hc1"
12,20,500,"dense",0.7,0.2,0.9,0.8,-0.3,200,200,-0.00509173904016968,0.00161665603461492,0.000917101235997864,-0.000126748614718508,0.00930157299571559,0.00840996594033286,0.0282205606665268,0.0271387508808806,0.00966846661735593,0.00840016395543023,0.02807396256688,0.0280006093368572,0.95,0.95,0.965,0.96,"martingale_sample","pair_hc1"
\end{filecontents*}
\csvreader[
	column names={
	2=\Tval,
	5=\truerho,
	6=\truephi,
	24=\coverrho,
	25=\coverphi,
	26=\coverbetaone,
	27=\coverbetatwo
	},
	tabular=rrr|rrrr,
	table head=
	\toprule
	&&&
	\multicolumn{4}{c}{Coverage} \\
	\cmidrule(lr){4-7}
	$T$ & $\rho$ & $\phi$ &
	$\rho$ & $\phi$ & $\beta_1$ & $\beta_2$ \\
	\midrule,
	table foot=\bottomrule
	]{code/output/simulation_symmetric_design_pilot_summary.csv}{}{
	\Tval &
	\num[group-digits=false,round-precision=1]{\truerho} &
	\num[group-digits=false,round-precision=1]{\truephi} &
	\num[group-digits=false,round-precision=3]{\coverrho} &
	\num[group-digits=false,round-precision=3]{\coverphi} &
	\num[group-digits=false,round-precision=3]{\coverbetaone} &
	\num[group-digits=false,round-precision=3]{\coverbetatwo}
	}
	\caption{Coverage in the symmetric stability-sum pilot.
	The design fixes $N=500$ and holds $\rho+\phi=0.9$ while varying its spatial and dynamic components.
	Each design requests 200 replications;
	between 191 and 200 replications per design complete estimation and inference,
	for 2371 successful replications in total.
	Intervals use the martingale-sample covariance estimator with the pair-only HC1 correction.}
	\label{tab:sim-symmetric-pilot}
\end{table}

%% file: simulation_explosive_stress.tex
\begin{sidewaystable}[t]
	\centering
	\small
	\sisetup{round-precision=4,group-digits=none}
	\setlength{\tabcolsep}{5pt}
	\begin{tabular}{rrl|r S[table-format=-1.4] S[table-format=1.4] S[table-format=-1.4] S[table-format=1.4] r S[table-format=1.4] S[table-format=1.4]}
		\toprule
		&&&
		\multicolumn{5}{c}{Bias and sampling dispersion} &
		\multicolumn{3}{c}{Spatial-sandwich coverage} \\
		\cmidrule(lr){4-8}\cmidrule(lr){9-11}
		$T$ & $N$ & Design & $R$ & {$100\,\mathrm{Bias}(\hat\rho)$} & {$\mathrm{SD}(\hat\rho)$} & {$100\,\mathrm{Bias}(\hat\phi)$} & {$\mathrm{SD}(\hat\phi)$} & $R$ & {$\rho$} & {$\phi$} \\
		\midrule
		5  & 500  & dense-matrix & 901  & -0.3156 & 0.00938 &  0.6968 & 0.02147 & 500 & 0.902 & 0.892 \\
		5  & 1000 & dense-matrix & 898  & -0.1918 & 0.00753 &  0.4243 & 0.01767 & 500 & 0.862 & 0.850 \\
		10 & 500  & dense-matrix & 984  & -0.0014 & 0.00055 &  0.0033 & 0.00130 & 500 & 0.964 & 0.958 \\
		10 & 1000 & dense-matrix & 991  &  0.0005 & 0.00033 & -0.0013 & 0.00080 & 500 & 0.942 & 0.938 \\
		20 & 500  & dense-matrix & 995  & -0.0013 & 0.00027 &  0.0031 & 0.00064 & 478 & 0.921 & 0.916 \\
		20 & 1000 & dense-matrix & 1000 & -0.0005 & 0.00005 &  0.0013 & 0.00012 & 482 & 0.925 & 0.927 \\
		\addlinespace
		5  & 500  & sparse-matrix & 884  & -0.3226 & 0.01004 &  0.7021 & 0.02291 & 499 & 0.878 & 0.872 \\
		5  & 1000 & sparse-matrix & 905  & -0.1837 & 0.00757 &  0.4160 & 0.01793 & 500 & 0.838 & 0.842 \\
		10 & 500  & sparse-matrix & 986  &  0.0003 & 0.00060 & -0.0013 & 0.00141 & 500 & 0.954 & 0.952 \\
		10 & 1000 & sparse-matrix & 989  &  0.0010 & 0.00033 & -0.0023 & 0.00079 & 499 & 0.958 & 0.956 \\
		20 & 500  & sparse-matrix & 992  &  0.0021 & 0.00046 & -0.0051 & 0.00112 & 482 & 0.932 & 0.934 \\
		20 & 1000 & sparse-matrix & 1000 & -0.0001 & 0.00007 &  0.0003 & 0.00016 & 479 & 0.923 & 0.925 \\
		\bottomrule
	\end{tabular}
	\caption{Estimation and Inference under an Unstable Design.
	The design fixes $(\rho,\phi)=(0.8,0.5)$,
	so a spatially uniform component has dynamic coefficient $\phi/(1-\rho)=2.5$.
	The bias experiment requests 1000 replications and the coverage experiment requests 500; $R$ reports successful fits.
	In the dense-matrix design,
	the log determinant is evaluated using the eigenvalue method;
	in the sparse-matrix design,
	it is evaluated using a 30-term Hutchinson trace approximation with 25 Rademacher vectors.
	The coverage columns report nominal $95\%$ intervals
	based on standard errors estimated from the outer product of the rearranged scores
	$g_i(\widehat\alpha)$.
	}
	\label{tab:simulation-explosive-stress}
\end{sidewaystable}

%% file: application_second_order_rank.tex
\begin{table}[htbp]
\centering
\caption{Second-order QML factor-rank sensitivity}
\label{tab:application-second-order-rank}
\small
\begin{tabular}{llrrrr}
\toprule
Specification & Rank & $\rho_1$ & $\rho_2$ & Education & Farm population \\
\midrule
Wages and $WX$ & $r=2$ & 0.203 & 0.468 & 0.527 & -0.024 \\
Wages and $WX$ & $r=3$ & 0.214 & 0.442 & 0.594 & -0.021 \\
\midrule
No wages; with $WX$ & $r=2$ & 0.199 & 0.487 & 0.646 & -0.049 \\
No wages; with $WX$ & $r=3$ & 0.211 & 0.456 & 0.793 & -0.032 \\
\midrule
No wages or $WX$ & $r=2$ & 0.219 & 0.453 & 0.536 & -0.057 \\
No wages or $WX$ & $r=3$ & 0.225 & 0.433 & 0.687 & -0.041 \\
\bottomrule
\end{tabular}
\end{table}